\documentclass[12pt,a4paper]{article}
\usepackage[utf8]{inputenc}
\usepackage[T1]{fontenc}
\usepackage[english]{babel}
\usepackage{amsmath,amssymb,amsthm}
\usepackage{mathtools}
\usepackage{hyperref}
\usepackage{geometry}
\usepackage{enumitem}
\usepackage{booktabs}
\usepackage{xcolor}
\theoremstyle{plain}
\newtheorem{theorem}{Theorem}[section]
\newtheorem*{theorem*}{Theorem}
\newtheorem{proposition}[theorem]{Proposition}
\newtheorem{lemma}[theorem]{Lemma}

\theoremstyle{definition}
\newtheorem{definition}[theorem]{Definition}
\newtheorem{example}[theorem]{Example}
\newtheorem{conjecture}[theorem]{Conjecture}
\theoremstyle{remark}
\newtheorem{remark}[theorem]{Remark}
\newcommand{\classP}{\mathbf{P}}
\newcommand{\classNP}{\mathbf{NP}}
\newcommand{\PeqNP}{\classP = \classNP}
\newcommand{\PneqNP}{\classP \neq \classNP}
\newcommand{\SAT}{\mathsf{SAT}}
\newcommand{\KC}{K}
\newcommand{\Kbt}{K^t}
\newcommand{\Kt}{\mathrm{Kt}}
\newcommand{\gam}{\gamma}
\newcommand{\gP}{\gamma_{\classP}}
\newcommand{\gC}[1]{\gamma_{#1}}
\newcommand{\gCFG}{\gamma_{\mathrm{CFG}}}
\newcommand{\TU}{T_U}
\newcommand{\TA}[1]{T_{#1}}
\newcommand{\Tad}{T_{\mathrm{ad}}}
\newcommand{\OCout}{\mathrm{OC}_{\mathrm{out}}}
\newcommand{\Comp}{\mathcal{C}}
\newcommand{\gU}[1]{\gamma_{#1}}
\newcommand{\poly}{\mathrm{poly}}
\newcommand{\Tdec}{T_{\mathrm{dec}}}
\newcommand{\Tder}{T_{\mathrm{derive}}}
\newcommand{\Hs}{H_{\mathrm{s}}}
\title{\textbf{Witness Complexity of Short Descriptions:\\
A Cryptographic Perspective}}
\author{
  Fabio F.G.\ Buono\\
  \small Independent Researcher\\
  \small \href{https://orcid.org/0009-0004-9199-2793}{ORCID: 0009-0004-9199-2793}
}
\date{\today}

\begin{document}
\maketitle
\begin{abstract}
  In cryptographic practice, a short key or certificate is useful only if it
  can be decompressed or verified within an acceptable time budget; a compact
  representation that requires superpolynomial work to expand offers no
  operational guarantee within a bounded-time protocol. This paper formalises
  that gap by introducing \emph{witness complexity} $\gam(x)$, the minimum
  running time over all near-shortest descriptions of a string $x$ on a
  universal Turing machine.
  
  The quantity $\gam(x)$ is distinct from both Shannon entropy and Kolmogorov
  complexity $\KC(x)$: an object may have low descriptive complexity yet high
  $\gam(x)$, if its near-shortest descriptions are computationally expensive
  to execute. We establish five main results: invariance of $\gam$ up to
  polynomial factors across universal machines~(A); a conditional separation
  showing that low $\KC$ does not imply polynomial decompressibility,
  assuming $\PneqNP$~(B); an unconditional lower bound on $\gam$, assuming
  only the classical incomputability of $\KC$~(B'); an exact biconditional
  characterisation of $\PeqNP$ in terms of the class-relative variant
  $\gP$, restricted to certificates whose Kolmogorov complexity is
  commensurate with the instance size~(C); and unconditional polynomial-time
  tractability on structurally guided families of $\classNP$ instances~(D).
  To the authors' knowledge, $\gP$ is the only measure admitting such a
  biconditional characterisation of $\PeqNP$ in the standard Turing model.
  
  Part~II introduces three companion quantities measuring work per bit of
  genuine input information, overhead beyond writing the output, and
  information density of solutions, and develops their interaction with
  $\gam$. An application to grammar-based compression exhibits an
  unconditional gap between grammar size and derivation cost invisible to
  all existing measures. A falsifiable conjecture connects the framework to
  the observed tractability of industrial $\classNP$ instances. Collectively,
  the results position $\gam$ as a metric for the \emph{usability} of keys
  and certificates: low $\KC(x)$ alone is insufficient; low $\gP(x)$ is
  required for a short description to be operationally accessible within a
  bounded-time protocol.
\end{abstract}

\tableofcontents
\bigskip
\section{Introduction and Standing Hypotheses}
\label{sec:intro}

\subsection{Cryptographic motivation and the gap this paper fills}
\label{sec:intro:gap}

A central assumption in cryptographic protocol design is that a compact
representation of a key, certificate, or witness can be \emph{used} within a
bounded time budget: key schedules must be derivable in real time, certificates
must be verifiable before a session times out, and proof-carrying data must be
decompressible at the point of verification. This assumption is operationally
necessary but has not been formalised as a complexity-theoretic invariant.

Three classical measures of information have coexisted without addressing it.
Kolmogorov complexity $\KC(x)$ \cite{kolmogorov1965,liVitanyi2008} measures
the length of the shortest description of $x$, but says nothing about the
computational cost of \emph{using} that description. Shannon entropy
\cite{shannon1948} measures the average description length for a source, with
encoding and decoding costs assumed away. Chaitin's $\Omega$
\cite{chaitin1975} encodes all halting information in the extreme where no
description is computationally accessible. None of these measures asks: given
that a short description exists, how much work is required to execute it?

This paper introduces $\gam(x)$, the \emph{witness complexity} of $x$, defined
as the minimum running time over all near-shortest descriptions of $x$ on a
fixed universal prefix Turing machine (Definition~\ref{def:gamma}). The
quantity $\gam(x)$ is the answer to the question above. Its class-relative
version $\gP(x) = \gC{\classP}(x)$ restricts the decompressor to
polynomial-time machines and is the principal object of study
(Definition~\ref{def:gamma-C}).

\paragraph{Cryptographic objective.}
The cryptographic goal of this paper is to provide a formal basis for
reasoning about the \emph{usability} of compact representations. Concretely,
$\gP(x)$ being polynomial in the relevant instance size is a sufficient
condition for $x$ to be operationally accessible within a polynomial-time
protocol via a near-shortest encoding; it is also necessary among
decompressors that operate on near-shortest descriptions
(Definition~\ref{def:gamma-C}). The framework applies to three cryptographic scenarios.
\begin{enumerate}[label=(\roman*)]
  \item \emph{Key management.} A key $k$ of length $\ell$ may be stored as a
        near-shortest description $d$ with $|d| \approx \KC(k)$. The key is
        usable in a protocol only if $\gP(k)$ is polynomial in the security
        parameter; otherwise the derivation cost exceeds the protocol budget.
  \item \emph{Certificate verification.} An $\classNP$ certificate $c$ for an
        instance $w$ may be encoded compactly as a near-shortest description.
        A sufficient condition for the certificate to be recoverable within a
        polynomial-time protocol via such an encoding is that $\gP(c)$ is
        polynomial in $|w|$.
  \item \emph{Key-schedule and PRNG expansion.} A pseudorandom generator with
        seed $s$ of length $k \ll n$ produces output of length $n$. The seed
        is a near-shortest description of the output; the expansion cost
        satisfies $\gam(\text{output}) \geq n$ (at least $n$ steps are needed
        to write all output bits) and $\gam(\text{output}) \leq
        T_{\mathrm{PRNG}}(s)$, where $T_{\mathrm{PRNG}}(s)$ is the running
        time of the generator on seed $s$. The $\gam$ framework makes this
        cost explicit and comparable across constructions.
\end{enumerate}
The threat model and attack scenarios arising from large $\gP$ values are
formalised in Section~\ref{sec:threat}.

\subsection{Technical contributions}
\label{sec:intro:contributions}

We prove the following results, stated informally here and formally in
Sections~\ref{sec:defs}--\ref{sec:main}.

\begin{enumerate}[label=\textbf{(\Alph*)}]
  \item \textbf{Invariance} (Theorem~\ref{thm:A}). For any two universal
        prefix Turing machines $U_1, U_2$, there exists a polynomial $q$ such
        that $\gU{U_1}(x) \leq q(|x|) \cdot \gU{U_2}(x) + O(1)$ for all
        $x$. The polynomial factor is made explicit via a compiler lemma
        (Lemma~\ref{lem:compiler}). This invariance establishes $\gam$ as a
        machine-independent complexity measure, a prerequisite for its use in
        any protocol-independent security argument.

  \item \textbf{Conditional separation} (Theorem~\ref{thm:B}). Assuming
        $\PneqNP$, there exists an explicit infinite family $\{x_\varphi\}$
        indexed by Boolean formulas such that $\KC(x_\varphi) = O(|\varphi|)$
        and $\gP(x_\varphi)$ is superpolynomial in $|\varphi|$. Under
        $\PneqNP$, the existence of a short description does not imply
        polynomial decompressibility.

  \item \textbf{Exact characterisation of $\PeqNP$} (Theorem~\ref{thm:C}).
        In the standard multi-tape deterministic Turing model: $\PeqNP$ if
        and only if for every $L \in \classNP$ there exists a polynomial $p_L$
        such that for every instance $w$ of $L$ there exists a valid certificate
        $x_w$ with $\KC(x_w) = \Omega(|w|)$ and $\gP(x_w) \leq p_L(|w|)$.
        The $\KC(x_w) = \Omega(|w|)$ condition is necessary: for KC-poor
        certificates, $\gP = \infty$ unconditionally (Remark~\ref{rem:KC-condition}).
        The non-trivial direction ($\Leftarrow$) uses a dovetailing schedule
        (Lemma~\ref{lem:dovetail}) together with a fixed-machine argument
        that reduces the search to polynomially many candidates.

  \setcounter{enumi}{3}
  \item[\textbf{(B')}] \textbf{Unconditional lower bound} (Theorem~\ref{thm:Bprime},
        Appendix~\ref{sec:appendix-B}). Assuming only the classical
        incomputability of $\KC$: for every polynomial $p$, there exists $y$
        with $\gam(y) > p(|y|)$. This result is unconditional and
        complementary to~(B).
  \setcounter{enumi}{3}

  \item \textbf{Tractability on structured families} (Theorem~\ref{thm:D}).
        If $F$ is a structurally guided family for $L \in \classNP$
        (Definition~\ref{def:guided}): a family where a polynomial-time
        procedure $\mathcal{P}_F$ produces, for each $x \in F$, a
        near-shortest description $d$ of a valid witness $y_x$ that is
        also expandable to $y_x$ in time $\leq p(|x|)$ for a fixed
        polynomial $p$, then $L$ is solvable in polynomial time on $F$.
        The result is unconditional and does not require $\PeqNP$.
\end{enumerate}

Part~II (Sections~\ref{sec:adaptive}--\ref{sec:summary}) introduces three
companion quantities: adaptive complexity $\Tad(A,x) = \TA{A}(x)/\KC(x)$
(work per bit of genuine input information), output overhead complexity
$\OCout(A,x) = \TA{A}(x) - |A(x)|$ (overhead beyond writing the output
$A(x)$), and structural entropy $\Hs(y) = \KC(y)/\log_2 |y|$ (information
density of solutions). An application to grammar-based compression (Section~\ref{sec:grammar})
exhibits an unconditional gap between grammar size and derivation depth,
invisible to all existing measures (Lemma~\ref{lem:grammar}).

\subsection{Standing hypotheses and model of computation}
\label{sec:intro:hypotheses}

Throughout this paper the following hypotheses hold without further notice.

\begin{enumerate}[label=(H\arabic*)]
  \item \label{hyp:prefix}
        A universal prefix Turing machine $U$ is fixed once and for all.
        All Kolmogorov complexities $\KC(x)$ are defined with respect to $U$.
        We denote by $c_0$ a fixed constant depending only on $U$ (not on $x$)
        such that for every $x \in \{0,1\}^*$ there exists a self-delimiting
        program $d$ with $U(d) = x$ and $|d| \leq \KC(x) + c_0$. Such a
        constant exists because $\KC(x)$ is the infimum of program lengths and
        the infimum is achieved up to a fixed slack by the definition of the
        universal machine~\cite{liVitanyi2008}. The value of $c_0$ is chosen
        large enough to also absorb the constant overhead of the identity
        instruction (used in upper bound arguments throughout the paper)
        and of any fixed $O(1)$-length protocol $\pi$ appearing in
        near-shortest descriptions (such as the protocol used in the
        construction of Theorem~\ref{thm:B}).
        All asymptotic results are stated up to such additive constants.

  \item \label{hyp:model}
        The standard model of computation is the multi-tape deterministic
        Turing machine. Running times $\TA{M}(d)$ count the total number of
        steps of $M$ on input $d$, including the time to read $d$. All
        complexity classes ($\classP$, $\classNP$, and their relativised
        versions) are defined with respect to this model unless explicitly
        stated otherwise.

  \item \label{hyp:prefix-free}
        All programs $d$ considered are self-delimiting (prefix-free): $U$
        halts and reads exactly the bits of $d$ without an explicit
        end-of-input marker. This is the standard setup for prefix Kolmogorov
        complexity~\cite{liVitanyi2008}. In particular, $\KC(x) > 0$ for
        every non-empty $x \in \{0,1\}^*$ (with the exact lower bound
        depending on $U$ as per~\ref{hyp:prefix}), and
        $\KC(\varepsilon) = O(1)$ for the empty string $\varepsilon$.

  \item \label{hyp:strings}
        Unless otherwise stated, $x, y, d \in \{0,1\}^*$. The length of a
        string $s$ is denoted $|s|$. The empty string is denoted $\varepsilon$.

  \item \label{hyp:classes}
        $\classP$ is the class of languages decidable by a deterministic
        Turing machine (hypothesis~\ref{hyp:model}) in time polynomial in the
        input length. $\classNP$ is the class of languages for which there
        exists a deterministic polynomial-time verifier: $L \in \classNP$ if
        and only if there exist a polynomial $q$ and a deterministic
        polynomial-time machine $V$ such that for every $w \in \{0,1\}^*$:
        \[
          w \in L \iff \exists\, c \in \{0,1\}^{q(|w|)}
          \text{ with } V(w,c) = 1.
        \]
        The string $c$ is called a \emph{witness} (or \emph{certificate})
        for $w$. We write $\gP(x)$ for $\gC{\classP}(x)$ throughout.

  \item \label{hyp:sat}
        $\SAT$ denotes the Boolean satisfiability problem: given a
        propositional formula $\varphi$ in conjunctive normal form, decide
        whether there exists a truth assignment satisfying $\varphi$.
        $\SAT$ is $\classNP$-complete~\cite{cook1971}: every language in
        $\classNP$ reduces to $\SAT$ in polynomial time, and $\SAT \in
        \classNP$. We use $\SAT$ as the canonical $\classNP$-complete
        language; all results referencing $\SAT$ hold equivalently for any
        $\classNP$-complete language.
\end{enumerate}

The invariance theorem~\cite{kolmogorov1965,liVitanyi2008} guarantees that for
any two universal prefix machines $U_1, U_2$ there exists a constant
$c_{12}$, depending only on $U_1$ and $U_2$ and not on $x$, such that
\[
  \KC_{U_1}(x) \leq \KC_{U_2}(x) + c_{12}.
\]
All results below are robust to the choice of $U$ within this additive constant.

\subsection{Organisation}
\label{sec:intro:org}

Section~\ref{sec:defs} introduces the definitions (witness complexity,
class-relative witness complexity, description game).
Section~\ref{sec:basic} establishes basic properties and lower bounds.
Section~\ref{sec:main} states and proves the five main results
(Theorems~A, B, C, B', D), with proof sketches in the body and full proofs
in the appendices where indicated.
Section~\ref{sec:examples} exhibits the separation examples.
Section~\ref{sec:discussion} discusses related work and the relation to prior
complexity measures.
Section~\ref{sec:threat} formalises the threat model and cryptographic
implications.
Section~\ref{sec:grammar} applies the framework to grammar-based compression.
Section~\ref{sec:open} lists open questions.
Part~II (Sections~\ref{sec:adaptive}--\ref{sec:summary}) introduces the
companion quantities $\Tad$, $\OCout$, and $\Hs$.
Appendix~\ref{sec:appendix-A} derives the classical measures as limiting
regimes of $\gam$.
Appendix~\ref{sec:appendix-B} proves Theorem~B' (unconditional lower bound)
in full.
Appendix~\ref{sec:appendix-C} establishes that an optimal decompressor
incurs only constant overhead beyond writing its output.

\section{Definitions}
\label{sec:defs}

Throughout this section, $U$ is the universal prefix Turing machine fixed in
hypothesis~\ref{hyp:prefix}, and all strings are over $\{0,1\}^*$ as per
hypothesis~\ref{hyp:strings}.

\begin{definition}[Witness complexity]
\label{def:gamma}
For $x \in \{0,1\}^*$, the \emph{witness complexity} of $x$ is
\[
  \gam(x) \;=\; \min_{\substack{d\,:\,U(d)=x \\ |d|\leq \KC(x)+c_0}}
  \TU(d),
\]
where $\TU(d)$ denotes the total number of steps of $U$ on input $d$
(including the time to read $d$), and $c_0$ is the fixed additive constant
from hypothesis~\ref{hyp:prefix}. The minimisation is over all
\emph{near-shortest descriptions} of $x$, i.e.\ self-delimiting programs $d$
that produce $x$ and whose length exceeds $\KC(x)$ by at most $c_0$. The
set of near-shortest descriptions is non-empty by definition of $\KC(x)$.
Since $U(d) = x$ implies that $U$ halts on $d$, the value $\TU(d)$ is finite
for every $d$ in the set; the minimum of a non-empty collection of finite
values is therefore well-defined and finite.

\emph{Informally}: $\gam(x)$ is the minimum decompression time over all
near-shortest descriptions of $x$. It measures not the existence of a
compact representation, but the computational cost of using one.
\end{definition}

\begin{remark}[Why near-shortest, not shortest]
\label{rem:near-shortest}
One could define $\gam$ by minimising $\TU(d)$ over the single shortest
description $d^*$ achieving $\KC(x)$. The present definition is strictly
more general: the set of near-shortest descriptions can
contain programs with very different running times, and the minimum over this
set may be substantially smaller than the time of $d^*$ alone. Restricting
to the exact shortest description would make $\gam$ depend on the arbitrary
choice of $d^*$ when several descriptions achieve $\KC(x)$. The additive
slack $c_0$ absorbs the ambiguity, is independent of $x$, and does not affect
the asymptotics of any result in this paper. The choice mirrors the standard
treatment of $\Kbt$ in time-bounded Kolmogorov complexity
\cite{liVitanyi2008}.
\end{remark}

\begin{example}[Witness complexity: three canonical cases]
\label{ex:gamma-cases}
\begin{enumerate}[label=(\roman*)]
  \item \emph{Incompressible string.} Let $x \in \{0,1\}^n$ with
        $\KC(x) \geq n - c_0$. Every program $d$ with $U(d)=x$ satisfies
        $|d| \geq \KC(x) \geq n - c_0$; reading $d$ requires at least $|d|$
        steps, so $\gam(x) \geq n - c_0 = \Omega(n)$. The description
        consisting of $x$ prefixed by the $O(1)$-bit identity instruction
        (the fixed program that copies its input to output) satisfies
        $|d| = n + O(1) \leq \KC(x) + c_0$ and $\TU(d) = O(n)$,
        giving $\gam(x) = \Theta(n)$.
  \item \emph{Highly compressible string.} Let $x = 0^n$ (the all-zeros
        string of length $n$). A description $d$ of length $O(\log n)$
        encodes the pair $(n, \text{``print } n \text{ zeros''})$; the machine
        runs in time $O(n)$ to produce all $n$ output bits. Thus
        $\KC(x) = O(\log n)$ and $\gam(x) = \Theta(n)$: the description is
        short but executing it takes linear time.
  \item \emph{Low $\KC$, potentially high $\gam$.} The family
        $\{x_\varphi\}$ of Theorem~\ref{thm:B} has $\KC(x_\varphi) =
        O(|\varphi|)$ and, under $\PneqNP$, $\gP(x_\varphi)$ superpolynomial.
        This is the central separation example of the paper and is
        constructed in full in Section~\ref{sec:main}.
\end{enumerate}
\end{example}

\begin{remark}[Cryptographic interpretation of Definition~\ref{def:gamma}]
\label{rem:crypto-gamma}
In a cryptographic context, $d$ plays the role of a \emph{compressed key} or
\emph{compact certificate}: it is the short representation that is stored or
transmitted. $U(d) = x$ is the \emph{key expansion} or
\emph{decompression} step. $\TU(d)$ is the \emph{derivation cost}: the
number of computational steps required to recover the full object $x$ from its
compact form $d$. $\gam(x)$ is therefore the minimum derivation cost over
all near-shortest compact representations of $x$. A small $\gam(x)$ means
that $x$ has a compact representation that can be expanded cheaply; a large
$\gam(x)$ means that every compact representation is expensive to expand,
regardless of which one is chosen.
\end{remark}

\begin{definition}[Description game]
\label{def:game}
The following two-player game gives an operational characterisation of
$\gam(x)$ equivalent to Definition~\ref{def:gamma}. Throughout, $U$ is
the universal machine fixed in hypothesis~\ref{hyp:prefix}.
\begin{itemize}
  \item Player~A, knowing $x$, selects a description $d$ with $U(d) = x$
        and $|d| \leq \KC(x) + c_0$.
  \item Player~B receives $d$ and runs $U(d)$. The cost of the game is
        $\TU(d)$.
\end{itemize}
Then $\gam(x) = \min_d \TU(d)$ over all of Player~A's admissible choices.
The game separates the cost of \emph{finding} a short description (Player~A's
problem, not measured by $\gam$) from the cost of \emph{using} one
(Player~B's problem, which $\gam$ measures).

\emph{Cryptographic reading}: Player~A is the key generator or certificate
issuer; Player~B is the protocol participant who must expand or verify.
$\gam(x)$ is the minimum expansion cost that Player~A can guarantee
Player~B, optimised over all admissible compact representations of $x$.
\end{definition}

\begin{definition}[Class-relative witness complexity]
\label{def:gamma-C}
Let $\mathcal{C}$ be a class of Turing machines. The
\emph{$\mathcal{C}$-relative witness complexity} of $x \in \{0,1\}^*$ is
\[
  \gC{\mathcal{C}}(x) \;=\; \min_{\substack{d\,:\,|d|\leq \KC(x)+c_0 \\
  M\in \mathcal{C},\;M(d)=x}} \TA{M}(d).
\]
The length constraint is on $|d|$ (the description), not on $|x|$ (the
object): the decompressor $M$ receives a short description and produces a
possibly much longer object. If no $M \in \mathcal{C}$ reconstructs $x$
from any near-shortest description, set $\gC{\mathcal{C}}(x) = \infty$.

The principal case is $\mathcal{C} = \classP$, the class of polynomial-time
deterministic Turing machines (hypothesis~\ref{hyp:classes}), giving
\[
  \gP(x) \;=\; \gC{\classP}(x).
\]
We write $\gP(x)$ throughout for this case.
\end{definition}

\begin{remark}[Finiteness of $\gP(x)$ for NP instances]
\label{rem:gP-finite}
For the results of Section~\ref{sec:main} to be non-vacuous, it is necessary
that $\gP(x_w)$ be finite for the certificates $x_w$ appearing in
Theorem~\ref{thm:C}. In the direction $(\Rightarrow)$ of Theorem~\ref{thm:C}
(assuming $\PeqNP$), the standard self-reducibility argument shows that for
every $L \in \classNP$ there exists a polynomial-time algorithm $A$ that both
decides $L$ and, when $w \in L$, outputs a witness (by extending a candidate
certificate bit by bit, using the decision procedure as a subroutine; see
e.g.\ \cite{liVitanyi2008}). Under the condition $\KC(x_w) = \Omega(|w|)$
of Theorem~\ref{thm:C}, the description $d_w = \langle w, \pi_A \rangle$ is
near-shortest for $x_w = (1, c_w')$ (where $c_w'$ is the KC-rich witness of
Theorem~\ref{thm:C}'s proof), and the decompressor $M_{c_w'} \in \classP$
--- which runs $A$ and appends a fixed pad $r$ stored in its description ---
achieves $\gP(x_w) \leq \TA{M_{c_w'}}(d_w) = \poly(|w|)$.
When $\KC(x_w) \ll |w|$, the compact encoding of $x_w$ cannot be expanded
in $\poly(\KC(x_w))$ steps by any $M \in \classP$ (since $|x_w| \gg
\poly(\KC(x_w))$), and $\gP(x_w) = \infty$; this is itself an instance of
the separation between descriptive and computational complexity established
in this paper. In the direction $(\Leftarrow)$, $\gP(x_w)$ being finite and
polynomial is an explicit hypothesis. No case in Section~\ref{sec:main}
requires $\gP$ to be finite without explicit justification.
\end{remark}

\begin{example}[Class-relative witness complexity: cryptographic instances]
\label{ex:gamma-C-crypto}
\begin{enumerate}[label=(\roman*)]
  \item \emph{Key derivation.} Let $x$ be a session key derived from a
        master secret $s$ via a key-derivation function $\mathrm{KDF}$.
        Since $\mathrm{KDF}$ is a fixed deterministic function, $\KC(x) \leq
        \KC(s) + O(1)$ (given $s$, a constant-length instruction suffices to
        reconstruct $x$). If additionally $s$ is incompressible
        ($\KC(s) = |s| - O(1)$) and $\mathrm{KDF}$ does not introduce
        further compressibility ($\KC(x) \geq \KC(s) - O(1)$), then $d = s$
        satisfies $|d| = |s| \leq \KC(x) + O(1)$, making it a near-shortest
        description of $x$. Under these conditions,
        $\gP(x) \leq \TA{\mathrm{KDF}}(s)$: the derivation cost bounds
        $\gP$ from above.
  \item \emph{Proof-carrying data.} Let $x = (w, \pi)$ where $\pi$ is an
        NP proof for instance $w$. A near-shortest description $d$ of $x$
        may encode the proof generation procedure. $\gP(x)$ is the minimum
        cost of a polynomial-time machine to recover $(w, \pi)$ from $d$;
        if $\gP(x) = \infty$, no compact encoding of the proof is
        efficiently decompressible.
  \item \emph{PRNG output.} Let $x$ be the output of a PRNG with seed $s$,
        $|s| = k$, $|x| = n \gg k$. The seed $s$ is a near-shortest
        description of $x$. Since writing $n$ output bits requires at least
        $n$ steps, $\gam(x) \geq n$. The running time of the generator gives
        an upper bound: $\gam(x) \leq \TA{\mathrm{PRNG}}(s)$. Since the
        generator must produce $n$ output bits, $\TA{\mathrm{PRNG}}(s) \geq n$
        regardless of $k$. Therefore $\gam(x) = \Theta(n)$ unconditionally.
        As for $\gP$: any $M \in \classP$ that expands $s$ to $x$ runs in
        $\poly(|s|) = \poly(k)$ time. This is consistent with $\gam(x) \geq n$
        only if $\poly(k) \geq n$, i.e.\ $n = O(\poly(k))$. If $n$ is
        superpolynomial in $k$, no polynomial-time-in-$|s|$ machine can produce
        $x$ (since $\poly(k) < n$ steps suffice to write at most $\poly(k) < n$
        output bits), so $\gP(x) = \infty$.
\end{enumerate}
\end{example}

\begin{remark}[Relationship to $\Kbt$ and $\Kt$]
\label{rem:Kt}
Two related measures appear in the literature.

\emph{Time-bounded Kolmogorov complexity} $\Kbt(x) = \min\{|d| : U(d) = x
\text{ in } \leq t \text{ steps}\}$ fixes a time bound $t$ and minimises
description length. $\gam$ fixes the length constraint (near-minimal) and
minimises time. The questions are dual in direction: $\Kbt$ asks ``how short
can the description be if we cap the time?''; $\gam$ asks ``how fast can we
decompress if we insist on a near-shortest description?''

\emph{Levin's $\Kt$ complexity} $\Kt(x) = \min_d \{|d| + \log \TU(d) :
U(d) = x\}$ combines length and log-time into a single quantity
\cite{levin1973,liVitanyi2008}. The spectrum
\[
  \KC \;\longrightarrow\; \Kbt \;\longrightarrow\; \Kt \;\longrightarrow\; \gam
\]
represents increasing sensitivity to computational cost: $\KC$ ignores cost
entirely; $\Kbt$ caps it; $\Kt$ penalises it logarithmically; $\gam$
minimises it directly subject to the near-shortest constraint.

To the authors' knowledge, $\gam$ has not previously been studied as a
standalone invariant.
\end{remark}

% ---------------------------------------------------------------
%  SECTION 3 -- Basic Properties
% ---------------------------------------------------------------
\section{Basic Properties}
\label{sec:basic}

This section establishes lower bounds on $\gam(x)$ and $\gC{\mathcal{C}}(x)$
that hold unconditionally, without any hypothesis on $\PeqNP$. All results
follow directly from the definitions in Section~\ref{sec:defs} and the
standing hypotheses of Section~\ref{sec:intro:hypotheses}.

\begin{proposition}[Lower bound from description length]
\label{prop:lb1}
For every $x \in \{0,1\}^*$,
\[
  \gam(x) \;\geq\; \KC(x).
\]
\end{proposition}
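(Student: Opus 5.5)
The plan is to bound $\TU(d)$ from below by $|d|$ for every admissible description $d$, and then to use the definition of $\KC(x)$ to bound $|d|$ from below. Concretely, fix any $d$ in the minimisation set of Definition~\ref{def:gamma}, so $U(d)=x$ and $|d|\le \KC(x)+c_0$. Two inequalities will be chained:
\[
  \TU(d) \;\geq\; |d| \;\geq\; \KC(x).
\]
Taking the minimum over all admissible $d$ then gives $\gam(x)\ge \KC(x)$. The set is non-empty by Definition~\ref{def:gamma}, so the minimum is attained and the bound transfers to it.

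For the first inequality I will invoke hypotheses~\ref{hyp:model} and~\ref{hyp:prefix-free}. By~\ref{hyp:prefix-free}, $U$ is a prefix machine that halts on $d$ having read exactly the bits of $d$. By~\ref{hyp:model}, $\TU(d)$ counts every step including input reading. A multi-tape Turing machine scans at most one new cell of its input tape per step. Hence reading all $|d|$ bits of $d$ requires at least $|d|$ steps, which gives $\TU(d)\ge |d|$. This is the same reasoning already used informally in Example~\ref{ex:gamma-cases}(i).

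The second inequality is immediate from the definition of $\KC(x)$ as the minimum length of a program $d$ with $U(d)=x$. Any $d$ with $U(d)=x$, near-shortest or not, satisfies $|d|\ge \KC(x)$. The upper slack $c_0$ plays no role here.

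The only step that needs care is the first one. One must make sure that ``reads exactly the bits of $d$'' really forces $|d|$ distinct read steps, rather than allowing $U$ to halt without consuming its input. I will appeal directly to the self-delimiting convention in~\ref{hyp:prefix-free}: $U$ halts only after its input head has passed over all of $d$, and with one head move per step this costs at least $|d|$ steps (if one prefers, $|d|-1$ moves plus the final step; any additive constant is absorbed as in~\ref{hyp:prefix}). No other obstacle is expected.
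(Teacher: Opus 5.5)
Your proof is correct and is the paper's argument: the same chain $\TU(d) \geq |d| \geq \KC(x)$ for every admissible $d$, followed by taking the minimum. Your appeal to~\ref{hyp:prefix-free} (that $U$ actually consumes all of $d$) justifies the first inequality a bit more carefully than the paper's bare citation of~\ref{hyp:model}. Drop the parenthetical hedge about absorbing an additive constant: your count already gives exactly $|d|$ steps, and the statement is an exact inequality with no $O(1)$ slack.
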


\begin{proof}
Every program $d$ with $U(d) = x$ satisfies $|d| \geq \KC(x)$ by definition
of $\KC(x)$ as the length of the shortest self-delimiting program producing
$x$. Reading $d$ requires at least $|d|$ steps (hypothesis~\ref{hyp:model}),
so $\TU(d) \geq |d| \geq \KC(x)$. Since this holds for every $d$ in the
minimisation set of Definition~\ref{def:gamma}, taking the minimum gives
$\gam(x) \geq \KC(x)$.
\end{proof}

\begin{remark}[Tightness of Proposition~\ref{prop:lb1}]
\label{rem:lb1-tight}
The bound is tight up to constant factors: for $x = 0^n$ one has
$\KC(x) = O(\log n)$ and $\gam(x) = \Theta(n)$
(Example~\ref{ex:gamma-cases}(ii)), so the gap between $\gam(x)$ and
$\KC(x)$ can be arbitrarily large. For incompressible strings $x$ with
$\KC(x) = \Theta(n)$, Proposition~\ref{prop:lb2} below shows the bound is
achieved up to constants.

\emph{Cryptographic implication}: Proposition~\ref{prop:lb1} says that the
derivation cost is always at least the information content of the object.
No compact representation can be expanded faster than reading it: even in
the best case, $\gam(x) \geq \KC(x)$.
\end{remark}

\begin{proposition}[Lower bound in the incompressible regime]
\label{prop:lb2}
For every $x \in \{0,1\}^*$ with $\KC(x) = \Omega(|x|)$,
\[
  \gam(x) \;=\; \Theta(|x|).
\]
\end{proposition}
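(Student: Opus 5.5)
The plan is to prove the two inequalities $\gam(x) = \Omega(|x|)$ and $\gam(x) = O(|x|)$ separately, working throughout with the $\Omega$-hypothesis exactly as stated: there is a constant $\alpha > 0$, independent of $x$, with $\KC(x) \geq \alpha|x|$.

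\emph{Lower bound.} This is immediate from Proposition~\ref{prop:lb1}, which gives
\[
  \gam(x) \;\geq\; \KC(x) \;\geq\; \alpha |x| .
\]
No further work is needed, since reading any admissible $d$ already costs $|d| \geq \KC(x)$ steps under hypothesis~\ref{hyp:model}.

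\emph{Upper bound, first case.} I will exhibit one near-shortest description $d$ with $U(d) = x$ and $\TU(d) = O(|x|)$; by Definition~\ref{def:gamma} this bounds the minimum. When $\KC(x) \geq |x| - c_0$, I take $d$ to be $x$ (in self-delimiting form) prefixed by the identity instruction, as in Example~\ref{ex:gamma-cases}(i). Hypothesis~\ref{hyp:prefix} chose $c_0$ large enough to absorb this instruction's overhead, so $|d| \leq \KC(x) + c_0$ and $\TU(d) = O(|x|)$.

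\emph{Upper bound, general case.} This is where the main obstacle lies. Here $\alpha|x| \leq \KC(x) < |x| - c_0$, so the literal copy of $x$ is too long to be admissible. The witness must then be an honest near-shortest program, and a priori its running time is unbounded. I would start from a fixed shortest program $d^*$ for $x$ and view $U(d^*)$ as an optimal decompressor. I would then invoke the constant-overhead property of optimal decompressors (the content of Appendix~\ref{sec:appendix-C}) to argue that a near-shortest program can be chosen whose running time exceeds the $|x|$ steps needed to write the output by only $O(1)$ beyond reading its own description. Together with $|d^*| \leq |x| + O(1)$, this gives $\TU = O(|x|)$.

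This step carries the full weight of the upper bound. If the appendix result cannot be applied at this point, the fallback is to read the hypothesis $\KC(x) = \Omega(|x|)$ in the incompressible sense of Example~\ref{ex:gamma-cases}(i), so that the first case covers everything. Under that reading the two halves combine directly to $\gam(x) = \Theta(|x|)$.
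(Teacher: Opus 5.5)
Your lower bound and your first case are exactly the paper's proof: Proposition~\ref{prop:lb1}, then the identity description. The gap is the general case $\alpha|x|\le\KC(x)<|x|-c_0$, and the appeal to Appendix~\ref{sec:appendix-C} cannot fill it. Proposition~\ref{prop:OCout-optimal} \emph{assumes} $\rho(A,x)=1$, i.e.\ that $A$ already runs in exactly $\gam(y)$ steps. It then gets $O(1)$ overhead only by quoting Proposition~\ref{prop:ub-trivial}, $\gam(y)\le|y|+O(1)$, which is the very upper bound you are trying to prove. It never produces a fast near-shortest program. Nothing makes $U$ running on a shortest program $d^*$ ``optimal'' in that sense either, since $d^*$ is selected for length, not speed. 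Proposition~\ref{prop:ub-trivial} itself rests on the non sequitur ``$\KC(x)\le|x|+O(1)$, hence $|x|+O(1)\le\KC(x)+c_0$''. The paper's proof of the present proposition makes the same leap (``$\KC(x)=\Theta(|x|)$, so $|d|=|x|+O(1)\le\KC(x)+c_0$''). You have correctly located the weak point; the paper simply steps over it.

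No argument can close the general case, because the upper bound is false there. Fix $t(n)=n^2$ and take $y\in\{0,1\}^m$ with $\KC(y)\ge m$. Let $x$ be the lexicographically first extension of $y$ of length $2m$ with $K^{t(2m)}(x)\ge m+3\log m$. Fewer than $m^3 2^m$ strings of length $2m$ have $K^{t(2m)}<m+3\log m$, so fewer than $m^3$ prefixes have all their extensions of that kind. That set of prefixes is computable from $m$, so its members have complexity $O(\log m)$, and $y$ is not among them; hence $x$ exists. Since $x$ is computable from $y$ and $y$ is the first half of $x$, we get $m-O(1)\le\KC(x)\le m+2\log m+O(1)$, so $\KC(x)\ge|x|/3$ for large $m$. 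Yet every program of length at most $\KC(x)+c_0<m+3\log m$ that outputs $x$ runs for more than $|x|^2$ steps, so $\gam(x)>|x|^2$.

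The only provable version is therefore your fallback, $\KC(x)\ge|x|-O(1)$, and that is all the identity argument (yours and the paper's) actually establishes. Even that relies on the paper's convention in (H1) that a literal copy of $x$ costs $|x|+O(1)$ bits. With genuinely prefix-free programs as in (H3), Kraft's inequality forces roughly $|x|+\KC(|x|)$ bits for a self-delimiting copy. The hypothesis would then have to read $\KC(x)\ge|x|+\KC(|x|)-O(1)$.
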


\begin{proof}
\emph{Lower bound.} By hypothesis $\KC(x) \geq c_1 |x|$ for a positive
constant $c_1$. By Proposition~\ref{prop:lb1}, $\gam(x) \geq \KC(x) \geq c_1 |x| = \Omega(|x|)$.

\emph{Upper bound.} The description $d$ consisting of $x$ prefixed by the
$O(1)$-bit identity instruction (the fixed program of $U$ that copies its
input to output) satisfies $|d| = |x| + O(1)$. Since every string has a
description of length $|x| + O(1)$ (the identity description), $\KC(x) \leq
|x| + O(1)$; combined with the hypothesis $\KC(x) \geq c_1|x|$, we get
$\KC(x) = \Theta(|x|)$, so $|d| = |x| + O(1) \leq \KC(x) + c_0$ for all
sufficiently large $|x|$. Thus $d$ is a near-shortest description of $x$.
The machine $U$ on input $d$ copies $|x|$ bits to output in $O(|x|)$ steps,
giving $\TU(d) = O(|x|)$. Therefore $\gam(x) = O(|x|)$.

Combining the two bounds: $\gam(x) = \Theta(|x|)$.
\end{proof}

\begin{remark}[The compressible regime]
\label{rem:compressible}
For objects with $\KC(x) \ll |x|$ (highly compressible), the lower bound of
Proposition~\ref{prop:lb1} may be far below $|x|$. In this regime $\gam(x)$
can range anywhere from $\Omega(\KC(x))$ to values exceeding $|x|$:
\begin{itemize}
  \item $\gam(x) = \Theta(\KC(x))$: some near-shortest description of $x$
        is also among the fastest to execute. Example: a string $x$ whose
        near-shortest description $d$ is a lookup table; $U(d)$ copies the
        table in $O(|d|) = O(\KC(x))$ steps.
  \item $\gam(x) = \Theta(|x|)$: the short description requires linear work
        to expand. Example: $x = 0^n$, $\KC(x) = O(\log n)$,
        $\gam(x) = \Theta(n)$ (Example~\ref{ex:gamma-cases}(ii)).
  \item $\gP(x)$ superpolynomial in $|x|$ while $\gam(x)$ may be smaller:
        under $\PneqNP$, the family $\{x_\varphi\}$ of Theorem~\ref{thm:B}
        achieves $\KC(x_\varphi) = O(|\varphi|)$ with $\gP(x_\varphi)$
        superpolynomial (no polynomial-time decompressor exists), while
        $\gam(x_\varphi)$ may still be finite via a non-polynomial-time
        machine. This is the central separation example of the paper.
\end{itemize}
Section~\ref{sec:examples} exhibits concrete examples spanning all three
regimes.
\end{remark}

\begin{proposition}[Trivial upper bound]
\label{prop:ub-trivial}
For every $x \in \{0,1\}^n$,
\[
  \gam(x) \;\leq\; n + O(1).
\]
\end{proposition}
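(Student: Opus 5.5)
The plan is to exhibit, for each $x \in \{0,1\}^n$, a single admissible description whose running time on $U$ is $n + O(1)$, and then bound $\gam(x)$ by that time, since $\gam(x)$ is a minimum over admissible descriptions (Definition~\ref{def:gamma}). The natural candidate is the identity description $d_{\mathrm{id}}(x)$ already used in Example~\ref{ex:gamma-cases}(i) and Proposition~\ref{prop:lb2}: the fixed $O(1)$-bit identity instruction, followed by $x$ in a self-delimiting form compatible with hypothesis~\ref{hyp:prefix-free}. We then have $|d_{\mathrm{id}}(x)| = n + O(1)$ and $U(d_{\mathrm{id}}(x)) = x$.

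\emph{Step 1 (running time).} I will fix the identity instruction so that $U$ reads the header in $O(1)$ steps and then copies each input bit to the output tape in a single step. In the multi-tape model of hypothesis~\ref{hyp:model}, with a dedicated output tape, this gives $\TU(d_{\mathrm{id}}(x)) \le n + O(1)$. The additive constant covers only the header and the halting transition. Getting leading coefficient $1$ rather than $O(n)$ requires two things. First, the header must not cause a second pass over $x$. Second, the length of $x$ must be delimited without a separate scan. I would encode $n$ inside the constant-size header region only if that is possible; otherwise I would take "$n + O(1)$" to absorb an $O(\log n)$ length prefix read in parallel with the copy, so that the prefix is read concurrently and does not add a second pass.

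\emph{Step 2 (admissibility).} We need $|d_{\mathrm{id}}(x)| \le \KC(x) + c_0$. Hypothesis~\ref{hyp:prefix} states that $c_0$ is chosen large enough to absorb the overhead of the identity instruction. In the regime $\KC(x) \ge n - O(1)$, this gives admissibility immediately, exactly as in Proposition~\ref{prop:lb2}.

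\emph{Step 3 (compressible strings).} When $\KC(x) \ll n$, the identity description is not near-shortest, and this is the main obstacle. My first attempt would be to replace $d_{\mathrm{id}}$ by a near-shortest program $d^*$. I would then argue that $U$ can be assumed to use an output convention under which $d^*$ halts within $n + O(1)$ steps once its output is determined. For instance, I would try a reduction to the constant-overhead decompressor of Appendix~\ref{sec:appendix-C}, which asserts that an optimal decompressor spends only $O(1)$ steps beyond writing its output. If that appendix result applies to some near-shortest description of every $x$, then $\TU(d) \le |x| + O(1) = n + O(1)$ follows directly. It would cover the compressible case, and combined with Step 2 it would prove the proposition.

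I expect Step 3 to be the genuine difficulty. The bound concerns a near-shortest program's \emph{total} running time, which can include arbitrarily long internal computation before any output is produced. The argument therefore stands or falls on whether the "output-only overhead" property can be guaranteed for some admissible description of every $x$.
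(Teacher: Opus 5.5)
Your diagnosis in Step~3 is right, and the answer to the question you end with is \emph{no}: that step cannot be completed, because the proposition is false. The paper's own Theorem~\ref{thm:Bprime} gives, for every polynomial $p$, a string $y$ with $\gam(y) > p(|y|)$. Its argument, via the non-approximability of $\KC$, is sound in substance. Taking $p(n) = n + C$, with $C$ the constant hidden in the $O(1)$, contradicts the claim outright. Conceptually, $\gam$ is Bennett's logical depth at significance level $c_0$. It is classical that for any computable time bound $t$ there are strings $x$ all of whose near-shortest programs run for more than $t(|x|)$ steps.

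Your proposed rescue through Appendix~\ref{sec:appendix-C} is circular. Proposition~\ref{prop:OCout-optimal} derives $\OCout(A,x) = O(1)$ from $\gam(y) \le |y| + O(1)$, i.e.\ from the very statement you are proving. It is also conditional on an $A$ with $\rho(A,x) = 1$, which it never constructs. Re-choosing the output convention of $U$ does not help either. $U$ is fixed by hypothesis~\ref{hyp:prefix}, and by Theorem~\ref{thm:A} any other universal machine changes $\gam$ only by a polynomial factor, which cannot defeat Theorem~\ref{thm:Bprime}.

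Note that the paper's proof has exactly the hole you flagged, without noticing it. It declares the identity description near-shortest for \emph{every} $x$ because ``$\KC(x) \le n + O(1)$''. But near-shortness requires $|d| \le \KC(x) + c_0$, i.e.\ a \emph{lower} bound on $\KC(x)$ --- precisely your Step~2 restriction.

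Step~1 fails independently under hypothesis~\ref{hyp:prefix-free}. A self-delimiting identity description must also encode $n$, so $|d_{\mathrm{id}}(x)| = n + \ell(n)$. Kraft's inequality for these codewords gives $\sum_n 2^{-\ell(n)} \le 1$, so $\ell(n) > \log_2 n$ for infinitely many $n$ (as $\sum_n 1/n$ diverges). Reading the length prefix ``in parallel'' saves nothing: the input head reads one bit per step, hence $\TU(d) \ge |d|$, which is the argument of Proposition~\ref{prop:lb1}.

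More decisively, Kraft applied to shortest programs, $\sum_x 2^{-\KC(x)} \le 1$, rules out $\KC(x) \le |x| + C$ for all $x$. Otherwise each length $n$ would contribute at least $2^{n} \cdot 2^{-n-C} = 2^{-C}$ to the sum. So for every $C$ some $x$ has $\gam(x) \ge \KC(x) > |x| + C$. The bound therefore fails even on incompressible strings, where no depth phenomenon is involved.

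What your Steps~1--2 genuinely establish is a conditional bound: $\gam(x) \le \TU(d_{\mathrm{id}}(x)) = O(n)$ for those $x$ with $|d_{\mathrm{id}}(x)| \le \KC(x) + c_0$. Nothing extends it to all $x \in \{0,1\}^n$.
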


\begin{proof}
The identity description $d$ (the string $x$ prefixed by the $O(1)$-bit
identity instruction) satisfies $|d| = n + O(1)$. Since the identity
description itself witnesses $\KC(x) \leq n + O(1)$, we have
$|d| \leq \KC(x) + c_0$ (by the choice of $c_0$ in hypothesis~\ref{hyp:prefix},
which is large enough to absorb the identity-instruction overhead),
so $d$ is a near-shortest description of $x$.
The machine $U$ on $d$ copies $n$ bits to output in $\TU(d) = n + O(1)$ steps.
Hence $\gam(x) \leq n + O(1)$.
\end{proof}

\begin{remark}[Boundary case: empty string]
\label{rem:empty}
For the empty string $\varepsilon$: $\KC(\varepsilon) = O(1)$
(hypothesis~\ref{hyp:prefix-free}), and any near-shortest description $d$ of
$\varepsilon$ has length $O(1)$. The machine $U$ on such $d$ produces no
output and halts in $O(1)$ steps, giving $\gam(\varepsilon) = O(1)$. This
is consistent with Proposition~\ref{prop:lb1} ($\gam(\varepsilon) \geq
\KC(\varepsilon) = O(1)$, which imposes no non-trivial lower
bound) and with Proposition~\ref{prop:ub-trivial} ($\gam(\varepsilon) \leq
0 + O(1) = O(1)$).
\end{remark}

\begin{proposition}[Monotonicity of $\gC{\mathcal{C}}$ in $\mathcal{C}$]
\label{prop:monotone}
If $\mathcal{C}_1 \subseteq \mathcal{C}_2$ as classes of Turing machines,
then $\gC{\mathcal{C}_2}(x) \leq \gC{\mathcal{C}_1}(x)$ for all $x$.
\end{proposition}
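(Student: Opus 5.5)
The plan is to compare the two minimisation sets in Definition~\ref{def:gamma-C} directly: enlarging the class of admissible machines can only enlarge the set being minimised over, and a minimum over a larger set is never larger. Write
\[
  S_i(x) \;=\; \{\, \TA{M}(d) \;:\; M \in \mathcal{C}_i,\ |d| \leq \KC(x)+c_0,\ M(d)=x \,\}, \qquad i=1,2,
\]
so that $\gC{\mathcal{C}_i}(x) = \min S_i(x)$, with the convention $\min\emptyset = \infty$ stated in Definition~\ref{def:gamma-C}. The length constraint $|d| \leq \KC(x)+c_0$ refers only to the fixed machine $U$ of hypothesis~\ref{hyp:prefix} and does not depend on $\mathcal{C}_i$. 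Hence the only difference between $S_1(x)$ and $S_2(x)$ is the range of $M$, and $\mathcal{C}_1 \subseteq \mathcal{C}_2$ gives $S_1(x) \subseteq S_2(x)$.

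I will then argue by cases on whether $S_1(x)$ is empty.

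\textbf{Case $S_1(x)=\emptyset$.} Then $\gC{\mathcal{C}_1}(x)=\infty$, and the inequality holds trivially whatever the value of $\gC{\mathcal{C}_2}(x)$.

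\textbf{Case $S_1(x)\neq\emptyset$.} Pick a pair $(M^*,d^*)$ attaining $\min S_1(x)$. Every element of $S_1(x)$ is a finite running time, being a natural number, so the minimum exists. Since $M^*\in\mathcal{C}_2$ and $d^*$ satisfies the same length and output constraints, the pair $(M^*,d^*)$ is admissible for $\mathcal{C}_2$. Therefore
\[
  \gC{\mathcal{C}_2}(x) \leq \TA{M^*}(d^*) = \gC{\mathcal{C}_1}(x).
\]

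The only point needing care is well-definedness. Each nonempty $S_i(x)$ is a set of natural numbers, so it has a least element by well-ordering, even if it is infinite. This justifies writing "min" rather than "inf", and it matters for the $\infty$ convention. I expect no real obstacle here. The argument is a pure set-inclusion observation and uses no computational properties of the classes involved.
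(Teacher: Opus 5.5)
Your proof is correct and uses the same set-inclusion argument as the paper: $\mathcal{C}_1 \subseteq \mathcal{C}_2$ makes the admissible set for $\gC{\mathcal{C}_2}(x)$ a superset of the one for $\gC{\mathcal{C}_1}(x)$, so its minimum can only be smaller. Your explicit handling of the empty case (the $\infty$ convention) and your appeal to well-ordering to justify writing ``min'' tighten the paper's one-line argument slightly.
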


\begin{proof}
Every decompressor $M \in \mathcal{C}_1$ is also in $\mathcal{C}_2$, so the
set over which the minimum is taken in $\gC{\mathcal{C}_2}(x)$ is a superset
of the one in $\gC{\mathcal{C}_1}(x)$. A minimum over a larger set is at
most the minimum over a smaller set.
\end{proof}

\begin{remark}[Relation between $\gam$ and $\gP$]
\label{rem:gam-gP}
Definition~\ref{def:gamma} defines $\gam(x)$ via the fixed universal machine
$U$, while Definition~\ref{def:gamma-C} defines $\gC{\mathcal{C}}(x)$ via a
minimum over machines in $\mathcal{C}$. These two notions coincide up to
polynomial factors when $\mathcal{C}$ is the class of all Turing machines, by
the simulation argument of Theorem~\ref{thm:A}: every Turing machine $M$ is
simulable by $U$ with polynomial overhead, so $\gam(x) \leq \poly(|x|) \cdot
\gC{\text{all TM}}(x)$. For the purposes of comparing $\gam$ with $\gP$,
Proposition~\ref{prop:monotone} with $\mathcal{C}_1 = \classP \subseteq
\mathcal{C}_2 = \text{all TM}$ gives $\gC{\text{all TM}}(x) \leq \gP(x)$
whenever $\gP(x) < \infty$; combined with the simulation bound above,
$\gam(x) \leq \poly(|x|) \cdot \gP(x)$ whenever $\gP(x) < \infty$.

Strict inequality $\gam(x) < \gP(x)$ is possible: $\gam(x)$ allows any
Turing machine as decompressor, while $\gP(x)$ restricts to $\classP$. If
$\gP(x) = \infty$ (no polynomial-time machine reconstructs $x$ from a
near-shortest description), then $\gam(x)$ may still be finite, achieved by
a non-polynomial decompressor.

\emph{Cryptographic implication}: $\gP(x) = \infty$ means no compact
representation of $x$ is efficiently recoverable by a polynomial-time
verifier, regardless of which near-shortest description is used. Such an $x$
cannot serve as a usable key or certificate in any polynomial-time protocol.
\end{remark}

\section{Main Results}
\label{sec:main}

This section states and proves the five main results of the paper. Each theorem is presented with its formal hypotheses, a proof sketch in the body
sufficient to verify the strategy and key steps, and a pointer to the appendix for the complete proof where the argument is long. All theorems
that reference $\gP$ implicitly assume hypothesis~\ref{hyp:classes}; all
results are in the multi-tape deterministic Turing model of
hypothesis~\ref{hyp:model}.

% ------- Theorem A -----------------------------------------------
\subsection{Theorem A: Invariance of $\gam$}
\label{sec:thm-A}

We first establish the compiler lemma that makes the polynomial factor in
Theorem~A explicit.

\begin{lemma}[Compiler simulation]
\label{lem:compiler}
Let $U_1, U_2$ be two universal prefix Turing machines. There exist a
computable map $\Comp : \{0,1\}^* \to \{0,1\}^*$ and constants $c_{\Comp}$,
$C_{\Comp}$, $k_{\Comp} \geq 1$ depending only on $U_1, U_2$ such that for
every program $p$ with $U_2(p) = x$:
\begin{enumerate}[label=(\roman*)]
  \item $U_1(\Comp(p)) = x$,
  \item $|\Comp(p)| \leq |p| + c_{\Comp}$,
  \item $\TA{U_1}(\Comp(p)) \leq C_{\Comp} \cdot |p|^{k_{\Comp}} \cdot
        \TA{U_2}(p) + C_{\Comp}$.
\end{enumerate}
\end{lemma}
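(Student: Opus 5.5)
The plan is the standard invariance-theorem construction, with the running time tracked alongside the length. Since $U_1$ is universal, it can simulate any prefix machine, and in particular $U_2$. So there is a fixed self-delimiting prefix $\sigma$ (a $U_1$-program header) such that for every $p$, $U_1(\sigma p)$ simulates $U_2$ on $p$. We set $\Comp(p) = \sigma p$. This map is trivially computable. Taking $c_{\Comp} = |\sigma|$ gives (ii) immediately.

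For (i), the prefix-free structure is essential. $U_1$ first reads the header $\sigma$; by prefix-freeness it knows where $\sigma$ ends without any end marker. It then runs a stored description of $U_2$ on the remaining tape. The simulated $U_2$ reads bits of $p$ only on demand. Because $p$ is self-delimiting for $U_2$, the simulation halts having read exactly $\sigma p$, so $\sigma p$ is itself a valid self-delimiting $U_1$-program. On halting, $U_1$ copies the simulated output $x$ to its own output tape. Hence $U_1(\Comp(p)) = x$ whenever $U_2(p) = x$.

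The main work, and the step I expect to be the obstacle, is (iii): bounding the simulation overhead. I would argue in two stages.
\begin{enumerate}[label=(\alph*)]
  \item \emph{Simulating the tapes.} $U_2$ is a fixed multi-tape machine with $k$ tapes, where $k$ is a constant depending only on $U_2$. $U_1$ simulates it by the classical Hennie--Stearns or one-tape quadratic simulation. A run of $U_2$ lasting $t = \TA{U_2}(p)$ steps touches at most $O(t)$ cells per tape. Each simulated step therefore costs $O(t)$ steps of $U_1$: a sweep over the tape contents to locate head markers and update them. This gives a total of $O(t^2)$, or $O(t\log t)$ with Hennie--Stearns.
  \item \emph{Converting to the required form.} The lemma demands a bound of the shape $C|p|^{k}\,t$, i.e.\ linear in $t$. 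A raw quadratic bound $O(t^2)$ does not obviously fit, since $t$ may far exceed $|p|$.
\end{enumerate}

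The cleanest remedy for (b) is to require the header $\sigma$ to invoke a linear-overhead simulation. The constant $c_{\Comp}$ is free, and universal machines in the multi-tape model admit simulation of any fixed machine with overhead linear (or $O(t\log t)$) in the simulated time, with constants depending only on the simulated machine. So I would choose $\sigma$ to encode $U_2$ and run such a simulation. If only $O(t\log t)$ is available, note that $\log t$ is not bounded by $\poly(|p|)$ in general. In that case the honest fix is to use that $U_1$ has as many tapes as needed (by the choice of construction), so a step-by-step simulation with constant overhead per step is possible. One simulated tape of $U_2$ is mapped to one tape of $U_1$, and the finite control is hard-wired through $\sigma$'s interpretation via a finite table lookup. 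Each simulated step then costs $O(1)$ lookups in a table of size $O(1)$; this cost depends on $|\sigma|$, not on $p$. Charging the initial reading of $p$ and the final copy of $x$ (at most $t$ cells) then yields
\[
\TA{U_1}(\Comp(p)) \le C_{\Comp}\,\TA{U_2}(p) + C_{\Comp},
\]
which implies (iii) with any $k_{\Comp}\ge 1$, since $|p|\ge 1$. The residual risk is that the fixed $U_1$ need not have enough tapes for a direct copy. In that case I would fall back on the generic bound from (a), $O(t^2)$, and note that the lemma as literally stated would then need $k_{\Comp}$ to absorb a factor of $t$. Checking which form of $U_1$ is permitted is the point needing care.
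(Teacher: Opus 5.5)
Your header construction for (i) and (ii) is the same as the paper's, and your handling of self-delimitation is fine. The gap is in (iii), and it is more basic than the tape-count issue you flag. $U_1$ is an \emph{arbitrary} given universal prefix machine. Universality only guarantees that some header makes $U_1$ reproduce $U_2$'s outputs; it puts no bound on how fast $U_1$ does so. So ``choose $\sigma$ to invoke a linear-overhead simulation'' and ``$U_1$ has as many tapes as needed'' describe a particular construction of $U_1$; they do not follow from the hypotheses.

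No argument can close this step, because (iii) is false as stated. Let $U_1'$ be any universal prefix machine, and let $U_1$ behave exactly like $U_1'$, except that when $U_1'$ would halt after reading $d$, it first idles for $2^{|d|}$ steps. Then $U_1$ has the same prefix-free domain and the same outputs, so it is universal. Take $x \in \{0,1\}^n$ incompressible. Every $U_1$-program for $x$ has length at least $\KC_{U_1}(x) \geq n - O(1)$, so it runs for at least $2^{n-O(1)}$ steps. Meanwhile, for a standard $U_2$, the identity program $p$ for $x$ has $|p| = n + O(1)$ and $T_{U_2}(p) = O(n)$. The required bound $C_{\Comp}|p|^{k_{\Comp}}T_{U_2}(p) + C_{\Comp} = \poly(n)$ therefore fails. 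It fails for every computable map $\Comp$, not just header maps.

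The paper's own proof has the same hole. It simply asserts that each step of $U_2$ is simulated in $O(|p|^{k_{\Comp}})$ steps. As your stage (a) shows, though, the per-step cost of an interpreter depends on the simulated configuration, which can occupy up to $T_{U_2}(p)$ cells, not on $|p|$. For an arbitrary $U_1$ it is not controlled by anything at all.

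Your conditional argument is the right repair, provided the missing efficiency hypothesis on $U_1$ is \emph{stated} as an assumption of the lemma rather than derived. Suppose $U_1$ is assumed to run a table-driven, tape-for-tape simulation of $U_2$ with at least as many work tapes. Then you get $T_{U_1}(\sigma p) \leq C\,T_{U_2}(p) + C$, and the factor $|p|^{k_{\Comp}}$ is superfluous. If only a Hennie--Stearns-type guarantee is assumed, the conclusion must be weakened to $C\,T_{U_2}(p)\log T_{U_2}(p) + C$. As you observe, $\log T_{U_2}(p)$ is not bounded by any polynomial in $|p|$, and a constant exponent on $|p|$ cannot absorb a factor of $T_{U_2}(p)$.
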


\begin{proof}
Construct $\Comp$ as the map that prefixes $p$ with a fixed $c_{\Comp}$-bit
header encoding an interpreter for $U_2$'s instruction set together with an
I/O wrapper. The header length $c_{\Comp}$ depends only on $U_1$ and $U_2$.
Condition~(i) holds because $U_1$ on $\Comp(p)$ first executes the
interpreter, which faithfully simulates $U_2$ on $p$ and produces $x$.
Condition~(ii) holds because $|\Comp(p)| = |p| + c_{\Comp}$. 
Condition~(iii) holds because each step of $U_2$ on $p$ is simulated by
$U_1$ in $O(|p|^{k_{\Comp}})$ steps (the overhead of the interpreter,
bounded polynomially in $|p|$), giving total time
$C_{\Comp} \cdot |p|^{k_{\Comp}} \cdot \TA{U_2}(p) + C_{\Comp}$.
\end{proof}

\begin{theorem}[A --- Invariance of $\gam$]
\label{thm:A}
For any two universal prefix Turing machines $U_1, U_2$, there exists a
polynomial $q$ such that for all $x \in \{0,1\}^*$:
\[
  \gU{U_1}(x) \;\leq\; q(|x|) \cdot \gU{U_2}(x) + O(1),
\]
and symmetrically with $U_1$ and $U_2$ exchanged. Hence $\gam$ is a
computational invariant, well-defined up to polynomial factors independently
of the choice of universal machine.

\emph{Cryptographic implication}: the invariance guarantees that $\gP(x)$
being polynomially bounded is a machine-independent property; it does not
depend on which universal machine is used to define $\KC$ and $\gam$.
\end{theorem}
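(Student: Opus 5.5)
The plan is to transfer an optimal witness for $\gU{U_2}(x)$ to $U_1$ via the compiler of Lemma~\ref{lem:compiler}, and to check two things: the compiled program is still near-shortest for $U_1$, and its running time grows by at most a polynomial factor in $|x|$. Fix $x$. Let $p$ be a description attaining the minimum in Definition~\ref{def:gamma} for $U_2$. Thus $U_2(p)=x$, $|p|\le \KC_{U_2}(x)+c_0$, and $\TA{U_2}(p)=\gU{U_2}(x)$. Set $d=\Comp(p)$. By Lemma~\ref{lem:compiler}(i), $U_1(d)=x$.

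\emph{Step 1: admissibility of $d$.} By Lemma~\ref{lem:compiler}(ii) and the invariance theorem for $\KC$ (with constant $c_{21}$, so that $\KC_{U_2}(x)\le \KC_{U_1}(x)+c_{21}$),
\[
  |d| \;\le\; |p|+c_{\Comp} \;\le\; \KC_{U_2}(x)+c_0+c_{\Comp} \;\le\; \KC_{U_1}(x)+c_0+c_{21}+c_{\Comp}.
\]
The slack here exceeds $c_0$ by the constant $c_{21}+c_{\Comp}$, which depends only on $U_1,U_2$. I expect this to be the main obstacle: the near-shortest window does not automatically survive compilation. I will handle it through hypothesis~\ref{hyp:prefix}, which lets $c_0$ absorb any fixed $O(1)$ overhead. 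Concretely, when comparing the two machines, the slack constant is taken to be at least $c_{21}+c_{\Comp}$ (and symmetrically $c_{12}+c_{\Comp'}$), so that $d$ lies in the minimisation set for $\gU{U_1}(x)$. This is a choice of constants depending only on the pair $(U_1,U_2)$. I will note this explicitly, since otherwise one would have to argue that the fastest $U_1$-description within the tighter window is not much slower, and nothing in the paper supports that. With the slack enlarged this way, $\gU{U_1}(x)\le \TA{U_1}(d)$.

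\emph{Step 2: time bound.} By Lemma~\ref{lem:compiler}(iii),
\[
  \TA{U_1}(d)\le C_{\Comp}|p|^{k_{\Comp}}\,\gU{U_2}(x)+C_{\Comp}.
\]
To convert $|p|$ into $|x|$, I use Proposition~\ref{prop:ub-trivial}-style reasoning: the identity description gives $\KC_{U_2}(x)\le |x|+O(1)$, hence $|p|\le |x|+c'$ for a constant $c'$. Setting $q(n)=C_{\Comp}(n+c')^{k_{\Comp}}$ yields
\[
  \gU{U_1}(x)\le q(|x|)\,\gU{U_2}(x)+C_{\Comp},
\]
which is the claimed bound with the additive $O(1)$ equal to $C_{\Comp}$.

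\emph{Step 3: symmetry.} Apply Lemma~\ref{lem:compiler} with the roles of $U_1$ and $U_2$ exchanged. This gives a compiler $\Comp'$ and constants of its own, and Steps 1--2 repeat verbatim to give the reverse inequality with a possibly different polynomial. Taking $q$ to be the maximum of the two polynomials gives a single $q$ serving both directions. Finally, the boundary case $x=\varepsilon$ is covered by the additive constant, consistently with Remark~\ref{rem:empty}.
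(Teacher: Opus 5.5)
\textbf{There is a genuine gap in Step~1.} You follow the paper's route: compile an optimal $U_2$-description with Lemma~\ref{lem:compiler}, then convert $|p|$ into $|x|$. You also correctly isolate the real obstacle, but your repair does not work. The slack sits on both sides of the comparison. All you know about $p$ is $|p| \le \KC_{U_2}(x) + c_0$, so the best you get is $|\Comp(p)| \le \KC_{U_1}(x) + c_0 + c_{21} + c_{\Comp}$. The excess $c_{21}+c_{\Comp}$ over the $U_1$-window is the same however large $c_0$ is, because enlarging $c_0$ admits longer $p$ by exactly as much as it widens the target window.

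Giving the two machines separate slacks $s_1, s_2$ fails as well. For your bound to guarantee admissibility, the direction $U_2 \to U_1$ needs $\KC_{U_2}(x) + s_2 + c_{\Comp} \le \KC_{U_1}(x) + s_1$. The reverse direction needs $\KC_{U_1}(x) + s_1 + c_{\Comp'} \le \KC_{U_2}(x) + s_2$. Adding the two gives $c_{\Comp} + c_{\Comp'} \le 0$, contradicting $c_{\Comp}, c_{\Comp'} \ge 1$. So no choice of constants depending only on $(U_1,U_2)$ puts the compiled program inside the admissible set in both directions, and the symmetric fixed-slack inequality is not proved.

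What Steps~1--2 do establish is a slack-shifted invariance. Write $\gamma^{(s)}_U(x)$ for the minimum of $\TA{U}(d)$ over $U(d)=x$ with $|d| \le \KC_U(x) + s$. For each fixed $s$, your argument gives $\gamma^{(s + c_{21} + c_{\Comp})}_{U_1}(x) \le q(|x|)\,\gamma^{(s)}_{U_2}(x) + C_{\Comp}$, and symmetrically. Getting from there to the theorem as stated needs exactly the fact you said is unavailable: that widening the slack by a constant changes $\gam$ by at most a polynomial factor. There is no reason to expect this in general. $\gam$ at slack $c_0$ is essentially Bennett's logical depth at significance level $c_0$, and the machine-invariance of that quantity is likewise stated only after shifting the significance parameter by a constant.

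The paper's own proof takes the same step silently: it derives a slack $c_0'$ and then declares $\Comp(d)$ near-shortest for $U_1$. Your version is more candid than the source, but the right repair is to state Theorem~\ref{thm:A} in the shifted form, or to carry the slack as an explicit parameter of $\gam$, rather than to enlarge $c_0$.

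A minor point: in the prefix-free setting the identity description only gives $\KC_{U_2}(x) \le |x| + O(\log|x|)$, not $|x|+O(1)$. This is harmless for Step~2, since any polynomial bound on $|p|$ in terms of $|x|$ suffices.
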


\begin{proof}
Let $d$ be the near-shortest description of $x$ for $U_2$ achieving
$\gU{U_2}(x)$: so $U_2(d) = x$, $|d| \leq \KC(x) + c_0$, and $\TA{U_2}(d) = \gU{U_2}(x)$. Apply Lemma~\ref{lem:compiler} to obtain
$\Comp(d)$ satisfying:
\begin{itemize}
  \item $U_1(\Comp(d)) = x$ \hfill (by condition~(i)),
  \item $|\Comp(d)| \leq |d| + c_{\Comp} \leq \KC_{U_2}(x) + c_0 + c_{\Comp}
        \leq \KC_{U_1}(x) + c_0'$ for a constant $c_0'$ depending only on
        $U_1, U_2$ \hfill (by condition~(ii) and the Kolmogorov invariance
        theorem),
  \item $\TA{U_1}(\Comp(d)) \leq C_{\Comp} \cdot |d|^{k_{\Comp}} \cdot
        \gU{U_2}(x) + C_{\Comp}$ \hfill (by condition~(iii)).
\end{itemize}
Since $|d| \leq \KC(x) + c_0 \leq |x| + c_0 + O(1)$, there exists a
polynomial $q_0$ such that $C_{\Comp} \cdot |d|^{k_{\Comp}} \leq q_0(|x|)$.
Therefore $\Comp(d)$ is a near-shortest description of $x$ for $U_1$, and
\[
  \gU{U_1}(x) \;\leq\; \TA{U_1}(\Comp(d)) \;\leq\;
  q_0(|x|) \cdot \gU{U_2}(x) + C_{\Comp}.
\]
Setting $q = q_0$ and absorbing $C_{\Comp}$ into the $O(1)$ term gives the
stated bound. The symmetric bound follows by exchanging $U_1$ and $U_2$.
\end{proof}

\begin{remark}[Position of Theorem~A in the literature]
\label{rem:thm-A-lit}
Theorem~\ref{thm:A} places $\gam$ in the same foundational category as
$\Kbt$: a quantity that depends on the universal machine, but only up to
a factor independent of $x$ (additive $O(1)$ for $\Kbt$; multiplicative
polynomial for $\gam$). The invariance is a prerequisite for any application of $\gam$ to complexity-theoretic questions, including the
characterisation of $\PeqNP$ in Theorem~\ref{thm:C}. The degree $k_{\Comp}$ of the polynomial and the constants $C_{\Comp}, c_{\Comp}$ can
in principle be made explicit for natural subclasses of universal machines;
see Question~Q4 in Section~\ref{sec:open}.
\end{remark}

% ------- Theorem B -----------------------------------------------
\subsection{Theorem B: Conditional Separation of $\KC$ and $\gP$}
\label{sec:thm-B}

\begin{theorem}[B --- Conditional separation]
\label{thm:B}
Assuming $\PneqNP$, there exists an infinite family of Boolean formulas
$\{\varphi_i\}_{i \geq 1}$ and corresponding strings $\{x_{\varphi_i}\}$
such that:
\[
  \KC(x_{\varphi_i}) = O(|\varphi_i|)
  \quad\text{and}\quad
  \gP(x_{\varphi_i}) \text{ is superpolynomial in } |\varphi_i|.
\]
Under $\PneqNP$, the existence of a short description does not imply
polynomial-time decompressibility.

\emph{Cryptographic implication}: under $\PneqNP$, there exist compact
witnesses with $\KC$ linear in the formula size yet not decompressible by
any polynomial-time protocol; see Section~\ref{sec:threat}.
\end{theorem}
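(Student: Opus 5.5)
The plan is to instantiate the family with the canonical search problem for $\SAT$. For a CNF formula $\varphi$, let $a_\varphi$ be the lexicographically first satisfying assignment of $\varphi$ if one exists and $a_\varphi = \bot$ otherwise, and set $x_\varphi = \langle \varphi, a_\varphi\rangle$. Pairing $\varphi$ into $x_\varphi$ is deliberate: any decompressor that outputs $x_\varphi$ must also output $\varphi$, so it solves the search problem on $\varphi$.

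\emph{Upper bound on $\KC$.} Take $d_\varphi = \langle \varphi, \pi\rangle$, where $\pi$ is the $O(1)$-bit brute-force protocol absorbed into $c_0$ by hypothesis~\ref{hyp:prefix}. The protocol enumerates assignments in lexicographic order and outputs the first satisfying one. This gives $\KC(x_\varphi) \leq |\varphi| + O(1) = O(|\varphi|)$, unconditionally.

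\emph{Choice of the infinite subfamily.} I will pick formulas $\varphi_i$ whose self-delimiting encodings are incompressible. By counting, there are infinitely many such formulas among every polynomially dense slice of CNF encodings. For these, $\KC(x_{\varphi_i}) \geq \KC(\varphi_i) - O(1) \geq |\varphi_i| - O(1)$, so the near-shortest descriptions of $x_{\varphi_i}$ have length $|\varphi_i| \pm O(1)$.

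\emph{Lower bound on $\gP$, by contradiction.} First I must fix the uniform reading of ``$\gP$ superpolynomial''. With a non-uniform reading the statement is false: for each single $x$ there is a machine that hard-wires $x$. The reading I will use is that no single $M \in \classP$ with polynomial $p$ satisfies, for all $i$, $M(d_i) = x_{\varphi_i}$ in $\leq p(|\varphi_i|)$ steps for some near-shortest $d_i$. The family should be rich enough that this quantification still captures $\SAT$: take, within each length slice, all formulas obtained from an incompressible one by fixing prefixes of variables. Suppose such $M$ and $p$ existed. I want to turn $M$ into a polynomial-time algorithm for $\SAT$ search on the family, and then, by self-reducibility, for $\SAT$ itself, contradicting $\PneqNP$.

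\emph{Main obstacle.} The obstacle is recovering some admissible $d_i$ from $\varphi_i$ in polynomial time. A priori $d_i$ is an arbitrary string of length $\approx|\varphi_i|$ from which $M$ computes $\varphi_i$, and it need not be $\varphi_i$ itself. My first attempt will be an information-theoretic argument. Since $M$ is fixed and computes $\varphi_i$ from $d_i$, we have $\KC(\varphi_i \mid d_i) = O(1)$. Combined with $|d_i| \leq \KC(\varphi_i) + O(1)$ and symmetry of information, this should give $\KC(d_i \mid \varphi_i) = O(\log|\varphi_i|)$. So $d_i$ lies among polynomially many candidates determined by $\varphi_i$.

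The issue is that these candidates are specified by an unbounded-time conditional description. I would therefore try to strengthen the step to a time-bounded statement using the fixed-machine, $2^{c_0+O(1)}$-candidate argument that the introduction attributes to Theorem~\ref{thm:C}. Failing that, I would restrict admissible descriptions to the canonical form $\langle\varphi,\pi'\rangle$, as Remark~\ref{rem:gP-finite} suggests. In that form the candidates are just $\varphi$ paired with the constantly many short protocols $\pi'$.

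\emph{Finishing the contradiction.} Given the candidates, I run $M$ on each for $p(|\varphi|)$ steps and check every output with the $\SAT$ verifier. This yields a polynomial-time satisfying-assignment finder on the family, which contradicts $\PneqNP$.
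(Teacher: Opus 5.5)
\textbf{The final step does not follow.} A fixed $M\in\classP$ gives you, at best, a polynomial-time procedure that finds satisfying assignments for the formulas in your family, i.e.\ for Kolmogorov-incompressible formulas. $\PneqNP$ only rules out algorithms that are correct on \emph{every} formula, and you give no polynomial-time way to send an arbitrary formula $\psi$ into the family.

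Self-reducibility does not supply such a map: it only moves among restrictions of the formula you started from. Those restrictions need not stay incompressible up to $O(1)$ bits. If you encode a restriction by appending $j$ unit clauses, its length grows by $\Theta(j\log n)$ but its complexity grows by at most $j+O(\log|\varphi|)$. So $\langle\varphi',\pi\rangle$ is not near-shortest, and the hypothesis on $M$ says nothing about $x_{\varphi'}$.

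The underlying inference ``polynomial time on all incompressible instances of an $\classNP$-complete set implies $\PeqNP$'' is false in general. Take $L'=\{\psi\,0^{|\psi|}:\psi\in\SAT\}$: it is $\classNP$-complete, yet each member has $\KC\le|\psi|+O(\log|\psi|)$, about half its length. So rejecting everything is correct on all sufficiently long incompressible strings. You would need a theorem that $\PneqNP$ makes $\SAT$ hard on incompressible formulas, and no such consequence of $\PneqNP$ is known. What $\PneqNP$ does give is that no fixed polynomial-time $M$ expands $\langle\varphi,\pi\rangle$ to $x_\varphi$ for \emph{all} formulas $\varphi$. Near-shortness is exactly what confines you to incompressible $\varphi$, where that hardness is unavailable. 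The paper's proof ends with the same leap (``a polynomial-time algorithm that decides $\SAT$ correctly on an infinite family containing instances of every length is a polynomial-time algorithm for $\SAT$''), so following it will not repair this step.

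\textbf{The obstacle you flagged is also left open.} Under your uniform reading, the hypothesis only says that for each $i$ \emph{some} near-shortest $d_i$ makes $M$ fast. Your bound $\KC(d_i\mid\varphi_i)=O(\log|\varphi_i|)$ is for unbounded-time decoding, so it yields no polynomial-time list of candidates. There is also no $2^{c_0+O(1)}$-candidate argument in the proof of Theorem~\ref{thm:C} to borrow: that proof simply declares the relevant description to be $\langle\varphi,\pi_{M^*}\rangle$. Restricting to canonical descriptions $\langle\varphi,\pi'\rangle$ changes the claim. $\gP$ is a minimum over all near-shortest $d$, so excluding a fast decompressor on one particular $d$ gives no lower bound on $\gP(x_\varphi)$. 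The paper's proof makes this restriction silently: it only ever runs $M$ on $d_\varphi=\langle\varphi,\pi\rangle$.

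Two parts of your plan are genuine improvements over the paper and worth keeping. First, putting $\varphi$ into $x_\varphi$ makes $\varphi\mapsto x_\varphi$ injective and gives $\KC(x_\varphi)\ge\KC(\varphi)-O(1)$, which is what near-shortness of $\langle\varphi,\pi\rangle$ actually requires. The paper's counting lemma tacitly assumes injectivity, which fails for its $x_\varphi$ because all unsatisfiable formulas on $n$ variables map to $(0,0^n)$. Second, your observation that the pointwise reading is false, because a single $x$ can be hard-wired into $M$, is correct and is not addressed in the paper.
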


We first establish the subfamily density lemma used in the proof.

\begin{lemma}[Incompressible subfamily]
\label{lem:incompressible-subfamily}
For every constant $c > 0$, the set
$\mathcal{F}_c = \{\varphi : \KC(x_\varphi) \geq |\varphi| - c\}$ is infinite, where $x_\varphi$ is the string defined in the construction
below.
\end{lemma}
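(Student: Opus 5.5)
The plan is a counting argument in the spirit of the incompressibility method, carried out in two steps. The first step transfers incompressibility from $\varphi$ to $x_\varphi$. The second shows that formulas of maximal prefix complexity occur at infinitely many lengths. For the first step I would use one structural property of the construction below: $\varphi$ can be read off from $x_\varphi$ by a fixed computable procedure, because $x_\varphi$ contains a self-delimiting copy of $\varphi$ as a prefix component. This gives $\KC(\varphi) \leq \KC(x_\varphi) + c_1$ for a constant $c_1$ depending only on $U$ and the construction. It therefore suffices to show that, for every constant $c' > 0$, the set $\{\varphi : \KC(\varphi) \geq |\varphi| - c'\}$ is infinite; applying this with $c' = c - c_1$ (or any smaller positive value) yields $\mathcal{F}_c$ infinite.

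For the second step I would fix an encoding of CNF formulas that is dense. Concretely, I would require a constant $a$ such that, for infinitely many $n$ (ideally all large $n$), the number $N_n$ of strings of length $n$ that are valid formula encodings satisfies $N_n \geq 2^{n-a}$. This can be arranged, for instance, with a bijective or near-bijective encoding of clause lists in which every bit string, after a fixed $O(1)$-length header, parses as a formula. Since the standing hypotheses fix the encoding only up to polynomial-time equivalence, choosing such an encoding is legitimate.

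I would then invoke the standard prefix-complexity counting bound. For every $m$, the number of strings $s \in \{0,1\}^n$ with $\KC(s) < n + \KC(n) - m$ is at most $2^{n-m+O(1)}$. Choosing $m = a + O(1) + 1$ shows that some formula $\varphi$ of length $n$ satisfies
\[
  \KC(\varphi) \geq n + \KC(n) - m.
\]
Since $\KC(n) \to \infty$ along any infinite set of lengths, we have $\KC(n) \geq m - c'$ for all sufficiently large such $n$. This gives $\KC(\varphi) \geq n - c'$, so there is at least one new member of the set at each such length, and the set is infinite.

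The main obstacle is the quantifier ``for every constant $c > 0$'', since $c$ may be smaller than the overhead constants $c_1$ and $a$. A plain plain-complexity count, which only gives $\KC(\varphi) \geq n - O(1)$ for a fixed $O(1)$, would therefore fail for small $c$. The remedy above is to exploit the extra $\KC(n)$ term that prefix complexity provides for typical strings: it is unbounded, so it eventually absorbs every fixed constant. The step most likely to need care in the write-up is making the density assumption on formula encodings, and the recoverability of $\varphi$ from $x_\varphi$, explicit for the concrete construction that follows.
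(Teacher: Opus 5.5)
\textbf{The gap is in your first step.} It fails for the actual construction. In the proof of Theorem~\ref{thm:B}, $x_\varphi$ is $(1,y_\varphi)$, with $y_\varphi$ the lex-first satisfying assignment, or $(0,0^n)$ when $\varphi \notin \SAT$, where $n$ is the number of variables. It contains no copy of $\varphi$.

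The map $\varphi \mapsto x_\varphi$ is therefore far from injective. Every unsatisfiable formula on $n$ variables is sent to $(0,0^n)$, and formulas sharing a lex-first assignment collide. So $\varphi$ cannot be read off from $x_\varphi$, and $\KC(\varphi) \leq \KC(x_\varphi) + c_1$ is false in general.

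The problem goes further. Since $|x_\varphi| = n+1$, one has $\KC(x_\varphi) \leq n + 2\log n + O(1)$. Membership in $\mathcal{F}_c$ then forces $|\varphi| \leq n + 2\log n + O(1) + c$. In words, the formula must be encoded with barely more than one bit per variable, and its lex-first satisfying assignment must carry essentially all of its information. Neither incompressibility of $\varphi$ nor density of the encoding says anything about this. Under an index-based encoding in which the $n$ variables are those occurring in $\varphi$, one has $|\varphi| = \Omega(n\log n)$, and $\mathcal{F}_c$ is in fact finite.

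The paper's own proof does not close this gap either. It asserts that the formulas of length $m$ yield at least $2^m$ distinct strings $x_\varphi$, which is the same injectivity assumption in another guise. Granting that assumption, its direct count of the $2^m$ targets against the fewer than $2^m$ programs shorter than $m-c$ handles every $c>0$ at once. Your $\KC(n)$ device is needed only because you pass through $\KC(\varphi)$ and pay the constants $c_1$ and $a$.

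What is really required is a subfamily on which $\varphi \mapsto x_\varphi$ is injective with $|\varphi| \leq |x_\varphi| + O(1)$. That is a property of the chosen encoding, not a counting fact about formulas. For example, suppose a fixed header $h$ followed by $s \in \{0,1\}^k$ encodes the conjunction of unit clauses whose unique satisfying assignment is $s$. Then $x_\varphi = (1,s)$ and $|\varphi| = |h| + k$. Apply your prefix-complexity count directly to the strings $(1,s)$ at length $k+1$, with $\KC(k+1) \to \infty$ absorbing $|h|$, $c$ and the counting constant. This yields a member of $\mathcal{F}_c$ for every large $k$.

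So your second step is the right tool, but it must be aimed at the image of the construction. The lemma holds only relative to an encoding that provides such a subfamily.
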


\begin{proof}
Let $m = |\varphi|$ denote the bit-length of the formula (not the number of
variables). For each $m$, the construction below yields at least $2^m$
distinct strings $x_\varphi$ as $\varphi$ ranges over formulas of length $m$
(there are $2^m$ binary strings of length $m$, each a valid formula encoding
under a fixed coding scheme). A string has $\KC(x_\varphi) < m - c$ only
if it admits a self-delimiting description of length $< m - c$. The number
of such programs is at most $\sum_{k < m-c} 2^k < 2^{m-c}$. For sufficiently
large $m$ and $c > O(1)$, there are strictly fewer than $2^m$ such programs,
so at least one formula $\varphi$ of length $m$ satisfies $\KC(x_\varphi)
\geq m - c = |\varphi| - c$. Hence $\mathcal{F}_c$ contains a formula of
every sufficiently large length and is infinite.
\end{proof}

\begin{proof}[Proof of Theorem~\ref{thm:B}]
\textbf{Construction.} For each Boolean formula $\varphi$ on $n$ variables,
define:
\[
  x_\varphi \;=\; \begin{cases}
    (1,\; y_\varphi) & \text{if } \varphi \in \SAT,\text{ where }
    y_\varphi \text{ is the lex-first satisfying assignment,} \\
    (0,\; 0^n) & \text{if } \varphi \notin \SAT.
  \end{cases}
\]
The leading bit encodes satisfiability. Define the near-shortest description:
\[
  d_\varphi = \langle \varphi, \pi \rangle,
\]
where $\pi$ is a fixed $O(1)$-length protocol: ``run the lex-first-SAT
procedure on $\varphi$ and output $(1, y)$ or $(0, 0^n)$.''  Then $|d_\varphi| = |\varphi| + O(1)$, giving $\KC(x_\varphi) = O(|\varphi|)$.

\textbf{Near-shortness.} By Lemma~\ref{lem:incompressible-subfamily} applied
with $c = c_0$ (the constant of hypothesis~\ref{hyp:prefix}), the subfamily
$\mathcal{F}_{c_0}$ is infinite and for every $\varphi \in \mathcal{F}_{c_0}$
the description $d_\varphi$ satisfies $|d_\varphi| = |\varphi| + O(1) \leq
\KC(x_\varphi) + c_0$, so $d_\varphi$ is a valid near-shortest description.

\textbf{Superpolynomial lower bound.} Suppose for contradiction that there
exist $M \in \classP$ and a polynomial $p$ such that $M(d_\varphi) =
x_\varphi$ in time $\leq p(|\varphi|)$ for every $\varphi \in \mathcal{F}_{c_0}$. The machine $M$ is fixed (it does not depend on $\varphi$);
given any $\varphi \in \mathcal{F}_{c_0}$, construct $d_\varphi = \langle
\varphi, \pi \rangle$ in polynomial time and run $M(d_\varphi)$: if the
leading bit is $1$, report $\varphi \in \SAT$; if $0$, report $\varphi
\notin \SAT$. This is a correct polynomial-time algorithm for $\SAT$
restricted to $\mathcal{F}_{c_0}$.

Since $\mathcal{F}_{c_0}$ is infinite (Lemma~\ref{lem:incompressible-subfamily})
and contains formulas of every sufficiently large length $m$, it contains
instances of every input size --- and in particular both satisfiable and
unsatisfiable instances of every large enough length (otherwise $\SAT$
restricted to that length would be trivially decidable, which would itself
give a polynomial-time algorithm for $\SAT$ on all instances by padding,
contradicting $\PneqNP$). A polynomial-time algorithm that decides $\SAT$
correctly on an infinite family containing instances of every length is a
polynomial-time algorithm for $\SAT$ \cite{cook1971}, contradicting
$\PneqNP$. Therefore no such $M$ and $p$ exist, and $\gP(x_\varphi)$ is
superpolynomial for all $\varphi \in \mathcal{F}_{c_0}$.
\end{proof}

% ------- Theorem C -----------------------------------------------
\subsection{Theorem C: Exact Characterisation of $\PeqNP$}
\label{sec:thm-C}

We first establish the dovetailing lemma used in the non-trivial direction;
the polynomial-time bound is derived in the proof of Theorem~\ref{thm:C} itself via a fixed-machine argument.

\begin{lemma}[Dovetailing schedule]
\label{lem:dovetail}
Suppose there exists a pair $(M^*, d^*)$ of a Turing machine and a string
with $|M^*| + |d^*| \leq S$ and $\TA{M^*}(d^*) \leq t$. Then there exists
a procedure $\mathcal{D}$ that, running for $T = 2^S \cdot t$ total steps,
finds and completes the execution of $(M^*, d^*)$, producing its output.
\end{lemma}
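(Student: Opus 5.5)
The plan is a standard Levin-style dovetailing argument over all candidate pairs $(M,d)$ whose combined encoding length is at most $S$. We fix a self-delimiting encoding of pairs $(M,d)$, so that each pair with $|M|+|d|\leq S$ corresponds to a string of length at most $S$. There are at most $\sum_{k\leq S}2^k < 2^{S+1}$ such strings. To match the stated bound $2^S\cdot t$ rather than $2^{S+1}\cdot t$, we count only pairs with $|M|+|d|$ exactly bounded by $S$ under the prefix-free coding: by Kraft's inequality, the number of codewords of length at most $S$ in a prefix-free code, weighted appropriately, is at most $2^S$. Alternatively, we absorb the factor $2$ into the definition of $S$.

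The procedure $\mathcal{D}$ runs in phases. Rather than a doubling schedule, we use the simplest round-robin scheme, since $t$ appears explicitly. In round $j=1,2,\dots$, $\mathcal{D}$ advances each of the (at most $2^S$) candidate simulations $M(d)$ by one step, skipping those that have already halted. After round $j$, every candidate has executed $\min(j,\text{its halting time})$ steps. In particular, after $t$ rounds the run of $(M^*,d^*)$ has completed, because $\TA{M^*}(d^*)\leq t$. At that point $\mathcal{D}$ has its output available, and the total number of simulated steps is at most $2^S\cdot t$.

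If the search also needs to \emph{recognise} which halted candidate is the target, that recognition is outside the lemma: the statement only asks that $\mathcal{D}$ complete the execution of $(M^*,d^*)$ and produce its output. We therefore record the outputs of all halted candidates, or we stop at the first halted candidate whose output passes an external check, as Theorem~\ref{thm:C} will use.

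The main obstacle is the accounting. Counted in steps of a multi-tape Turing machine under hypothesis~\ref{hyp:model}, $\mathcal{D}$ must maintain up to $2^S$ simulated configurations and switch between them, and it simulates each foreign machine $M$ through a universal interpreter. Each of these adds overhead, polynomial in $S$ and $t$ by Lemma~\ref{lem:compiler}. The honest statement is therefore $T\leq 2^S\cdot t\cdot\poly(S,t)$. I would prove the lemma with $T$ measured in \emph{simulated steps}, i.e.\ $2^S\cdot t$ exactly as stated. I would then remark that the physical running time exceeds this by at most a polynomial factor obtained from Lemma~\ref{lem:compiler}(iii) together with standard configuration-storage costs. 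This factor is harmless wherever the lemma is applied with $S=O(\log|w|)$ or with a fixed machine.
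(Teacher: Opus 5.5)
Your proof is correct once one encoding convention (below) is made explicit, and it follows the paper's route: round-robin time-sharing over every candidate pair of size at most $S$, advancing each by one simulated step per visit. The difference is the count, and your version is the one that actually gives the stated $T = 2^S t$.

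The paper bounds the number of candidates by $N_S \le 2^{S+1}$ and deduces that each receives at least $\lfloor t/2\rfloor$ steps. It then asserts this suffices for $t\ge 2$, which fails whenever $\TA{M^*}(d^*) > \lfloor t/2\rfloor$. Your observation fixes this: a prefix-free code has at most $2^S$ codewords of length at most $S$, because their extension sets $c\,\{0,1\}^{S-|c|}$ are pairwise disjoint subsets of $\{0,1\}^S$. That is exactly what lets $t$ complete rounds fit into $2^S t$ simulated steps. Your fallback of absorbing the factor $2$ into $S$ would only prove the weaker bound $2^{S+1}t$.

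You should make two points explicit.

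First, $|M|+|d|$ must mean the length of the pair's prefix-free codeword: a self-delimiting machine code followed by a self-delimiting program, as in hypothesis~\ref{hyp:prefix-free}. For raw strings there are $S\,2^{S+1}+1$ pairs with $|M|+|d|\le S$. For $S\ge 1$ this exceeds the $2^{S+1}-1$ strings of length at most $S$. So your sentence that each such pair ``corresponds to a string of length at most $S$'' cannot hold literally. The paper's bound $N_S\le 2^{S+1}$ silently relies on the same convention.

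Second, $\mathcal{D}$ cannot decide which strings are codewords, since membership in a prefix machine's domain is undecidable. So realise the $2^S$ candidates as the $2^S$ strings of length exactly $S$. Run the prefix universal machine on each, aborting any run that tries to read beyond bit $S$. Every codeword of length at most $S$ is a prefix of some of these strings, and those runs reproduce its computation verbatim.

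Finally, you separate simulated steps from physical Turing-machine steps. That is more careful than the paper's proof, which counts simulated steps without saying so.
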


\begin{proof}
Fix an enumeration of all pairs $(M, d)$ in order of non-decreasing
$s = |M| + |d|$, breaking ties arbitrarily. Run a time-sharing simulation
for $T$ total processor steps, cycling through pairs in enumeration order
and advancing each by one simulation step per visit.

The good pair $(M^*, d^*)$ of size $s^* \leq S$ occupies a fixed position
in the enumeration. The number of pairs of size $\leq S$ is at most
$N_S \leq 2^{S+1}$; in $T = 2^S \cdot t$ total steps, each such pair
receives at least $\lfloor T / N_S \rfloor \geq \lfloor 2^S t / 2^{S+1} \rfloor
= \lfloor t/2 \rfloor$ steps. For $t \geq 2$, this suffices to complete $(M^*, d^*)$, producing the output.

\emph{Application to Theorem~\ref{thm:C}.} In that application
$|M^*| = O(1)$ and $|d^*| = \poly(|\varphi|)$, so $S = \poly(|\varphi|)$
and the generic bound gives $T = 2^{\poly(|\varphi|)}$ --- exponential.
The polynomial bound $T = \poly(|\varphi|)$ used in Theorem~\ref{thm:C}
is achieved by a tighter argument: since $M^*$ is a \emph{fixed} machine
(independent of $\varphi$), one enumerates only descriptions $d^*$ of length $\leq \poly(|\varphi|)$, of which there are only polynomially many.
The Lemma as stated is invoked only for its general structure; the tight
bound is derived in the proof of Theorem~\ref{thm:C} directly.
\end{proof}

\begin{theorem}[C --- Exact characterisation of $\PeqNP$]
\label{thm:C}
In the multi-tape deterministic Turing model \textnormal{(hypothesis~\ref{hyp:model})},
$\PeqNP$ if and only if for every $L \in \classNP$ there exists a polynomial
$p_L$ such that for every $w \in \{0,1\}^*$ with $w \in L$ there exists a valid certificate $x_w$, a pair $(1, c_w)$ where $c_w$ is a witness
accepted by the $\classNP$ verifier of $L$, satisfying $\KC(x_w) =
\Omega(|w|)$ and $\gP(x_w) \leq p_L(|w|)$.

More explicitly for $\SAT$: $\PeqNP$ if and only if there exists a polynomial
$p$ such that for every satisfiable formula $\varphi$ there exists a valid certificate $x_\varphi = (1, c_\varphi)$ (encoding satisfiability and a
satisfying assignment) with $\KC(x_\varphi) = \Omega(|\varphi|)$ and $\gP(x_\varphi) \leq p(|\varphi|)$, and for every unsatisfiable formula
$\varphi$ there exists $x_\varphi$ (encoding unsatisfiability) with
$\gP(x_\varphi) \leq p(|\varphi|)$.
\end{theorem}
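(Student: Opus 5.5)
The plan is to prove the two directions separately. For each $L\in\classNP$ we fix its polynomial-time verifier $V$ and certificate-length polynomial $q$ from hypothesis~\ref{hyp:classes}. It suffices to treat $\SAT$ explicitly, because the general $L$ case follows by the same argument applied to $V$. The direction $(\Rightarrow)$ is constructive and follows the sketch already given in Remark~\ref{rem:gP-finite}. The direction $(\Leftarrow)$ turns a polynomially bounded $\gP(x_w)$ into a decision procedure: we run a decompressor and check its output with $V$. Verification makes this decision procedure sound no matter which decompressor is used. The remaining issue is completeness.

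For $(\Rightarrow)$, assume $\PeqNP$. By the self-reducibility argument of Remark~\ref{rem:gP-finite}, there is a polynomial-time algorithm $A$ that, on input $w\in L$, outputs a witness $c_w$ bit by bit using the decision procedure as an oracle. For $\SAT$ it also outputs the bit $0$ on unsatisfiable $\varphi$.

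To meet the condition $\KC(x_w)=\Omega(|w|)$, I will use the padded witness $c_w'$. This is $c_w$ followed by a pad $r$ that $V$ ignores, or, more cleanly, $c_w$ concatenated with $w$ itself. Then $\KC(x_w)\ge \KC(w)-O(1)$, because $w$ is computable from $x_w$. This yields $\Omega(|w|)$ at least on the incompressible instances, which by the counting argument of Lemma~\ref{lem:incompressible-subfamily} form the bulk of instances of each length.

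The description is $d_w=\langle w,\pi_A\rangle$, and we check $|d_w|\le \KC(x_w)+c_0$ exactly as in the near-shortness step of Theorem~\ref{thm:B}. The decompressor $M_{c_w'}$ runs $A$ on $w$ and appends the pad. It lies in $\classP$ and runs in time $p_L(|w|)$, so $\gP(x_w)\le p_L(|w|)$. The unsatisfiable case for $\SAT$ is handled by the same $M$ outputting $(0,0^n)$.

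For $(\Leftarrow)$, fix $L$ and suppose the characterisation holds. The plan is to extract a single machine $M^*\in\classP$ and an efficiently producible near-shortest description $d_w$, then build a decider. On input $w$, it enumerates the candidate descriptions, runs $M^*$ on each for $p_L(|w|)$ steps, parses each output as $(1,c)$, and accepts iff $V(w,c)=1$ for some candidate. Rejection on $w\notin L$ is automatic, since no valid $c$ exists there. This is the fixed-machine argument flagged after Lemma~\ref{lem:dovetail}: $M^*$ is fixed, so we avoid the $2^S$ blow-up of the generic dovetailing bound. What remains is to keep the candidate set of polynomial size.

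I expect this last point to be the main obstacle. The hypothesis, read literally, only says that for each $w$ some $M\in\classP$ and some $d$ with $|d|\le \KC(x_w)+c_0$ exist. Neither is a priori uniform in $w$, nor computable from $w$. Moreover, strings of length about $|w|$ are exponentially many, not polynomially many.

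My first attempt will be to read the witness-complexity condition uniformly, as the construction in $(\Rightarrow)$ naturally produces. That is, I take the definition of $\gP$ together with $p_L$ to mean a single $M^*\in\classP$ with descriptions of the form $\langle w,\sigma\rangle$, where $\sigma$ ranges over an $O(\log|w|)$-bit advice/protocol field. Then only $\poly(|w|)$ candidates exist and the decider above runs in polynomial time. I will check that this uniform reading is what the proof of $(\Rightarrow)$ actually delivers, so that the biconditional is between matching statements.

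If the non-uniform reading is required, I would fall back on Theorem~\ref{thm:B} as a sanity check. Its family $x_\varphi$ shows that, under $\PneqNP$, $\gP$ is superpolynomial exactly when the description is the canonical $\langle\varphi,\pi\rangle$. I would argue by contraposition through that canonical description. I expect this part to need the most care.
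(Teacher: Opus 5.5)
The approach is essentially the paper's, and the obstacle you flag is real: neither you nor the paper overcomes it. In $(\Rightarrow)$ you use self-reduction, a padded witness, $d_w=\langle w,\pi_A\rangle$, and a decompressor that runs $A$ and appends the pad. In $(\Leftarrow)$ you run a single $M^*\in\classP$ on polynomially many candidates $\langle w,\sigma\rangle$ and check outputs with $V$. The paper resolves the obstacle only by assertion: it says $\gP(x_\varphi)\le p(|\varphi|)$ provides an $M^*$ ``fixed, independent of $\varphi$'' whose ``natural description'' is $\langle\varphi,\pi_{M^*}\rangle$. That is your uniform reading, and it does not follow from Definition~\ref{def:gamma-C}.

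\textbf{The uniform reading.} Your plan to check that $(\Rightarrow)$ delivers this reading will fail. Under it your decider is correct, and it never uses the $\KC$ clause. But a fixed $M^*$ applied to $\langle w,\sigma\rangle$ with $|\sigma|=O(\log|w|)$ forces $\KC(x_w)\le\KC(w)+O(\log|w|)$. The satisfiable formulas $x_1\wedge\cdots\wedge x_n$ have $\KC=O(\log n)$, so $\KC(x_w)=\Omega(|w|)$ fails on them. The right-hand side is then false for $\SAT$ whatever the answer to $\classP$ versus $\classNP$, and $(\Rightarrow)$ becomes the assertion $\PneqNP$.

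Your own construction shows the tension. Padding with $w$ keeps the decompressor uniform, but gives $\KC(x_w)=\Omega(|w|)$ only for incompressible $w$. That is ``the bulk'', exactly as in the near-shortness step of Theorem~\ref{thm:B}, which is restricted to $\mathcal{F}_{c_0}$. The statement, however, quantifies over every $w\in L$. The paper gets KC-richness for all $w$ only by hard-coding an incompressible pad into $M_{c_w'}$, conceding $|M_{c_w'}|=O(|w|)$. That gives up the single machine that $(\Leftarrow)$ needs.

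\textbf{The literal reading.} Here $(\Leftarrow)$ is out of reach instead. Definition~\ref{def:gamma-C} puts no bound on the size of $M$. So the machine that ignores its input and prints $x$, run on the empty description (which satisfies $0\le\KC(x)+c_0$), gives $\gP(x)\le|x|+O(1)$ for every $x$. The $\gP$ clause is therefore automatic for polynomial-length certificates.

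Now suppose the verifier tolerates a pad, as you assume with ``a pad $r$ that $V$ ignores''. (For the standard $\SAT$ verifier, $x_1\wedge\cdots\wedge x_n$ has the single witness $1^n$, and the $\KC$ clause fails outright.) Then an incompressible pad of length $|w|$ makes the right-hand side hold unconditionally. A proof of $(\Leftarrow)$ would thus be an unconditional proof of $\PeqNP$.

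Falling back on Theorem~\ref{thm:B} does not help. Its argument only excludes one fixed machine acting on the canonical descriptions $\langle\varphi,\pi\rangle$, so it presupposes the uniform reading again.

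What is missing is not a technical lemma but a formulation of the hypothesis under which the $\KC$ condition and the uniformity used in $(\Leftarrow)$ are compatible. For the statement as written, neither reading lets both directions go through, so the biconditional cannot be established along these lines.
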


\begin{remark}[Role of the $\KC$ condition in Theorem~C]
\label{rem:KC-condition}
The hypothesis $\KC(x_w) = \Omega(|w|)$ in the $(\Rightarrow)$ direction is necessary and not vacuous. When $\KC(x_w) \ll |w|$, for instance when the only witness for $w$ is a highly regular string such as $0^n$, the near-shortest descriptions of $x_w$ have length $\KC(x_w) \ll |w|$,
and no $M \in \classP$ can produce $x_w$ (of length $\Omega(|w|)$) from such a description in $\poly(\KC(x_w)) \ll |x_w|$ steps. In this case
$\gP(x_w) = \infty$ even under $\PeqNP$: the compact encoding of $x_w$ is operationally inaccessible, an instance of the separation between
descriptive and decompression complexity that is the central theme of
this paper. The condition $\KC(x_w) = \Omega(|w|)$ identifies the certificates for which decompressibility and $\PeqNP$ are equivalent.
Under $\PeqNP$, such certificates always exist: the algorithm $A$ of 
Remark~\ref{rem:gP-finite} produces $c_w$ in polynomial time, and by a standard counting argument one can always find a witness $c_w'$ with $\KC(1, c_w') = \Omega(|w|)$: append any pad of polynomial length that yields high Kolmogorov complexity (such pads exist by the counting argument, though they need not be computable in polynomial time), without increasing the verifier's acceptance time.
\end{remark}

\begin{proof}
We prove the general form; the $\SAT$ case is the special instance $L = \SAT$.

\medskip
\noindent\textbf{($\Rightarrow$) Assume $\PeqNP$.}
By the self-reducibility argument (Remark~\ref{rem:gP-finite}), for every
$L \in \classNP$ there is a polynomial-time algorithm $A$ deciding $L$ and
producing a witness $c_w$ when $w \in L$. For $w \in L$, we claim there
exists a valid witness $c_w'$ of length $q(|w|)$ (for a fixed polynomial $q$
with $q(|w|) \geq |c_w|$) satisfying $\KC(1, c_w') = \Omega(|w|)$. To see
this: the $\classNP$ verifier for $L$ accepts $c_w' = c_w \| r$ for any
string $r$ of length $q(|w|) - |c_w|$. By a standard counting argument~\cite{liVitanyi2008},
fewer than $2^{q(|w|) - c}$ strings of length $q(|w|)+1$ have $\KC <
q(|w|) - c$; hence for all but a $2^{-c}$ fraction of choices of $r$, the
resulting $c_w'$ satisfies $\KC(1, c_w') \geq q(|w|) - c = \Omega(|w|)$.
Such a $c_w'$ exists; fix one for each $w$ (no requirement that the
selection be polynomial-time computable from $w$).

Define $x_w = (1, c_w')$ and the description $d_w = \langle w, \pi_A
\rangle$, where $\pi_A$ is the $O(1)$-bit index of $A$ in $U$'s program
table. Then $|d_w| = |w| + O(1)$. Since $\KC(x_w) = \Omega(|w|)$ by
construction and $|d_w| = |w| + O(1) = O(\KC(x_w))$, the description $d_w$
satisfies $|d_w| \leq \KC(x_w) + c_0$ for all sufficiently large $|w|$,
so $d_w$ is a near-shortest description of $x_w$. Define the decompressor
$M_{c_w'}$ to be the machine that, on input $d_w = \langle w, \pi_A \rangle$,
runs $A(w)$ to obtain $c_w$ and then outputs $(1, c_w \| r)$ where $r$ is
a fixed string of length $q(|w|) - |c_w|$ stored as part of the machine
description (so $|M_{c_w'}| = O(|r|) = O(|w|)$ depends on $|w|$, not on
$w$ itself). Since $A$ runs in $\poly(|w|)$ steps and $r$ is fixed,
$M_{c_w'}$ runs in $\poly(|w|)$ steps. However, $|M_{c_w'}|$ is not $O(1)$
but $O(|w|)$; this is admissible because $d_w$ is near-shortest and $\gP(x_w)
\leq \TA{M_{c_w'}}(d_w) = \poly(|w|) = p_L(|w|)$ for a suitable polynomial
$p_L$, which is all the theorem requires.
For $w \notin L$ define $x_w =
(0, 0^{|w|})$; since $x_w$ consists almost entirely of zeros, $\KC(x_w) =
O(\log|w|)$ (a description of length $O(\log|w|)$ encodes $|w|$ and the
instruction to output a zero-string of that length). A fixed polynomial-time
machine reads any near-shortest description $d$ of $x_w$ (of length
$O(\log|w|)$) and writes the $|w|+1$ output bits in $O(|w|)$ steps, so
$\gP(x_w) = O(|w|)$.

\medskip
\noindent\textbf{($\Leftarrow$) Assume the condition holds for $L = \SAT$.}
By hypothesis, for every $\varphi \in \SAT$ there exists $x_\varphi = (1, c_\varphi')$
with $\KC(x_\varphi) = \Omega(|\varphi|)$ and $\gP(x_\varphi) \leq p(|\varphi|)$.
The latter means there exists $M^* \in \classP$ (fixed, independent of $\varphi$)
and a near-shortest description $d^*$ of $x_\varphi$ with
$M^*(d^*) = x_\varphi$ and $\TA{M^*}(d^*) \leq p(|\varphi|)$.

Since $\KC(x_\varphi) = \Omega(|\varphi|)$, the near-shortest description $d^*$
has length $|d^*| \leq \KC(x_\varphi) + c_0 = O(|\varphi|)$. Moreover,
$M^*$ is a fixed polynomial-time algorithm that, given $\varphi$, produces
$x_\varphi$; the natural description is $d^* = \langle \varphi, \pi_{M^*} \rangle$
where $\pi_{M^*}$ is the $O(1)$-bit index of $M^*$. Since $\KC(x_\varphi) =
\Omega(|\varphi|)$, this description is near-shortest: $|d^*| = |\varphi| + O(1)
\leq \KC(x_\varphi) + c_0$.

The procedure $\mathcal{D}$ fixes $M^*$ (by trying all $O(1)$ possible
machine indices $\pi$) and searches only over descriptions of the form
$\langle \varphi, \pi \rangle$: there are $O(1)$ such candidates per $\varphi$.
Running each for at most $t = p(|\varphi|)$ steps in round-robin gives total
time
\[
  T \;=\; O(1) \cdot p(|\varphi|) \;=\; \poly(|\varphi|).
\]
Whenever $\mathcal{D}$ finds a pair $(M, d)$ producing output $z$, verify: if
$z[0] = 1$ and the $\SAT$ verifier accepts $(\varphi, z[1:])$, report
$\varphi \in \SAT$. If no such pair completes within budget $T$, report
$\varphi \notin \SAT$.

\emph{Correctness}: if $\varphi \in \SAT$, the good pair $(M^*, d^*)$
completes within $T$ and produces a valid certificate, so $\varphi \in \SAT$
is reported. If $\varphi \notin \SAT$, no pair can produce a string $z$ with
$z[0] = 1$ passing the verifier (no satisfying assignment exists), so the
algorithm correctly reports $\varphi \notin \SAT$. The total time is
$\poly(|\varphi|)$, so $\SAT \in \classP$, giving $\PeqNP$.

The general $(\Leftarrow)$ direction follows immediately: $\PeqNP$ implies
polynomial-time solvability of every $L \in \classNP$.
\end{proof}

\begin{remark}[$\gP$ as exact discriminant]
\label{rem:discriminant}
Theorem~\ref{thm:C} establishes that the question ``is $\gP$ polynomially
bounded on $\classNP$ certificates with $\KC(x_w) = \Omega(|w|)$?'' is
\emph{identical} to $\PeqNP$ rephrased in the language of witness complexity.
The $\KC(x_w) = \Omega(|w|)$ condition identifies the certificates for which
descriptive and decompression complexity are formally coupled: these are
the certificates whose compact encoding is non-trivially short
relative to the instance size (Remark~\ref{rem:KC-condition}). For certificates with $\KC(x_w) \ll |w|$,
$\gP(x_w) = \infty$ unconditionally (Remark~\ref{rem:gP-finite}), and the
$\PeqNP$ question does not arise: such certificates are inaccessible
via their compact encodings regardless of the $\PeqNP$ answer. To the
authors' knowledge, no other known characterisation of $\PeqNP$ (circuit
lower bounds, proof complexity, communication complexity) gives an
unconditional biconditional in the standard Turing model identifying
precisely the class of certificates for which the equivalence holds.
\end{remark}

% ------- Theorem B' ----------------------------------------------
\subsection{Theorem B': Unconditional Lower Bound on $\gam$}
\label{sec:thm-Bprime}

\begin{theorem}[B' --- Unconditional lower bound]
\label{thm:Bprime}
Assuming only the classical incomputability of $\KC$
\textnormal{\cite{liVitanyi2008}}: for every polynomial $p$, there exists
$y \in \{0,1\}^*$ such that $\gam(y) > p(|y|)$. Consequently, $\gam$ is
not polynomially bounded on $\{0,1\}^*$.
\end{theorem}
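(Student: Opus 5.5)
The plan is a proof by contradiction that reduces a uniform polynomial bound on $\gam$ to an algorithm approximating $\KC$, and then uses the classical incomputability of $\KC$ in its robust form. That form states that no total computable function $f$ satisfies $|f(y)-\KC(y)|\le c$ for all $y$ with a fixed constant $c$ \cite{liVitanyi2008}. Suppose, for contradiction, that there is a polynomial $p$ with $\gam(y)\le p(|y|)$ for every $y\in\{0,1\}^*$. By Definition~\ref{def:gamma}, every $y$ then has a near-shortest description $d$ with $U(d)=y$, $|d|\le \KC(y)+c_0$ and $\TU(d)\le p(|y|)$.

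\textbf{Step 1 (time-bounded search computes $\KC$ approximately).} I define $f(y)$ by the following procedure. Enumerate all programs $d$ with $|d|\le |y|+c_0$; this length bound suffices by Proposition~\ref{prop:ub-trivial} and the choice of $c_0$ in~\ref{hyp:prefix}. Run $U$ on each for $p(|y|)$ steps, and let $f(y)$ be the minimum length of a $d$ that halts with output $y$ within this budget. This is a finite, total computation, so $f$ is computable. Every $d$ found satisfies $|d|\ge \KC(y)$, so $f(y)\ge\KC(y)$. The assumed near-shortest fast description is among the candidates, so $f(y)\le \KC(y)+c_0$. Hence $|f(y)-\KC(y)|\le c_0$ for all $y$.

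\textbf{Step 2 (contradiction via Berry-type argument).} I will invoke the robust incomputability directly, or re-derive it briefly as follows. For each $n$, let $y_n$ be the lexicographically first string of length $n$ with $f(y_n)\ge n$. Such a string exists, since some $y$ of length $n$ has $\KC(y)\ge n$ by counting (as in Lemma~\ref{lem:incompressible-subfamily}), and $f\ge\KC$. The map $n\mapsto y_n$ is computable, so $\KC(y_n)\le 2\log n+O(1)$. On the other hand, $\KC(y_n)\ge f(y_n)-c_0\ge n-c_0$. For large $n$ these two bounds are incompatible. Therefore no such $p$ exists, and for every polynomial $p$ some $y$ satisfies $\gam(y)>p(|y|)$.

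\textbf{Main obstacle.} The critical step is the interaction between Step~1 and Proposition~\ref{prop:ub-trivial}. The identity description shows $\gam(y)\le |y|+O(1)$, so, read literally with the slack $c_0$ chosen to absorb the identity overhead, the search in Step~1 could succeed for the linear bound $p(n)=n+O(1)$. Yet the Berry argument then yields a contradiction, which indicates a tension inside the paper's definitions rather than a proof. The point to scrutinise is whether the identity description is genuinely near-shortest for compressible $y$; Proposition~\ref{prop:ub-trivial} asserts this only through the choice of $c_0$. For the theorem as worded, I would make explicit that the contradiction in Step~2 must be located in the approximation bound $f(y)\le\KC(y)+c_0$. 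That bound depends only on the existence of fast near-shortest descriptions. I would therefore verify that the fast descriptions used are those guaranteed by the hypothesis $\gam(y)\le p(|y|)$, and not the identity description.
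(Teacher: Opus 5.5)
Your proof is correct and follows the paper's own argument (Appendix~\ref{sec:appendix-B}): a time-bounded exhaustive search over programs of length about $|y|$ that, under the hypothesis, recovers $\KC(y)$ to within $c_0$, and your Berry step usefully makes explicit the additive-constant form of incomputability that the paper invokes without comment. The tension you flag is genuine and is resolved against Proposition~\ref{prop:ub-trivial}, whose proof infers $|d| \leq \KC(x)+c_0$ from $\KC(x) \leq n+O(1)$ (false for compressible $x$ such as $0^n$, and contradicted by the very theorem you are proving), so in Step~1 you should cite only $\KC(y) \leq |y|+O(1)$ and enumerate programs up to length $|y|+2c_0$ rather than $|y|+c_0$.
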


\begin{theorem}[Ganardi--Je\.{z}--Lohrey~\cite{ganardiJezLohrey2019}]
\label{thm:GJL}
Any straight-line program \textnormal{(SLP)} of size $g$ generating a string
of length $N$ can be transformed in $O(g)$ time into an equivalent SLP of
size $O(g)$ and derivation depth $O(\log N)$.
\end{theorem}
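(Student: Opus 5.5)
The plan is to follow the symmetric centroid strategy of Ganardi, Je\.{z} and Lohrey. View the SLP of size $g$ as a DAG $\mathcal{G}$ whose nodes are nonterminals and whose edges go from each rule to its (at most two) right-hand-side symbols. Attach two weights to each node $v$: $\lambda(v)$, the length of the string derived from $v$, and $\pi(v)$, the number of root-to-$v$ paths in $\mathcal{G}$. Both are computable in $O(g)$ arithmetic operations by one topological pass each. Note that $\lambda(\mathrm{root}) = N$ and that $\sum_{\text{terminals } a} \pi(a) = N$.

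\textbf{Step 1 (decomposition).} Call an edge $(u,v)$ a \emph{symmetric centroid edge} if it is heavy in both directions: $\lambda(v) > \lambda(u)/2$ and $\pi(u) > \pi(v)/2$ (up to the exact constants in GJL). Each node then has at most one outgoing and at most one incoming such edge, so these edges partition $\mathcal{G}$ into vertex-disjoint paths. The key combinatorial claim is that every root-to-leaf path in $\mathcal{G}$ uses only $O(\log N)$ non-centroid edges. The argument is a potential argument on $\log(\lambda(v)) - \log(\pi(v))$, or equivalently on the pair $(\lambda,\pi)$. Every light edge either halves $\lambda$ or doubles $\pi$. Along any path, $\lambda$ decreases monotonically from $N$ and $\pi$ increases monotonically up to at most $N$. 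So the number of light edges is at most $2\log_2 N$.

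\textbf{Step 2 (rebalancing each path).} Fix a centroid path $v_1 \to v_2 \to \cdots \to v_k$. The string derived from $v_1$ has the form $L_1 L_2 \cdots L_{k'} \,\mathrm{val}(v_k)\, R_{k''} \cdots R_1$. Each $L_i$ and $R_j$ is the value of the light sibling hanging off the path at the corresponding step. For each $i$ we must produce a new nonterminal deriving the same string as $v_i$. We do this by building two \emph{weight-balanced} concatenation trees, one over the left pieces and one over the right pieces, with leaf weights $\lambda(\cdot)$. Each piece of weight $w$ must sit at depth $\log_2(W/w) + O(1)$, where $W$ is the total weight. A suffix-sharing construction (balanced trees over all suffixes $L_i\cdots L_{k'}$ sharing subtrees, as in GJL's treatment of "caterpillars") represents every $v_i$ with $O(k)$ new rules in total. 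Summed over all paths, the new grammar has size $O(g)$.

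\textbf{Step 3 (depth analysis).} Take any root-to-leaf path in the new grammar. It traverses a sequence of rebalanced centroid paths, entering each through a light piece of weight $w_j$ inside a block of total weight $W_j \le \lambda$ of the path's top node. The depth contributions telescope: $\sum_j \bigl(\log(W_j/w_j) + O(1)\bigr) \le \log N + O(\#\text{light edges}) = O(\log N)$, using Step 1. \textbf{Step 4 (running time).} The weights take $O(g)$ time to compute. A weight-balanced tree over $k$ pieces with prescribed prefix sums can be built in $O(k)$ time by splitting from both ends simultaneously (exponential search from both sides, costing time logarithmic in the smaller side, which sums to linear), so the whole transformation runs in $O(g)$ time.

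The main obstacle is Step 2 combined with Step 4: we need weight-balanced trees that simultaneously serve \emph{all} prefixes of the path, so that every original $v_i$ stays representable. Naively rebuilding a tree per prefix would cost $O(k^2)$ in size. My first attempt is to use a global weight-balanced tree over the whole path and read off each $v_i$ as a union of $O(\log)$ canonical subtrees joined by a short chain. I would then check that this chain's depth still telescopes; if it does not, I would fall back on GJL's exact symmetric "left/right caterpillar" encoding.
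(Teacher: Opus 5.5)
The paper gives no proof of this statement to compare with. It is imported from Ganardi--Je\.{z}--Lohrey, and the proof block printed right after it is actually the sketch of Theorem~B' (it argues about $\gamma(y)\le p(|y|)$ and the incomputability of $K$). Measured against GJL, your Step~1 is correct and is essentially their symmetric centroid decomposition. Each node has at most one incoming and one outgoing centroid edge. Since $\lambda$ only decreases, $\pi$ only increases, and $\pi(v)\lambda(v)\le N$, every root-to-leaf path has at most $2\log_2 N$ light edges.

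\textbf{The gap is in Steps~2--3, where all the difficulty of the theorem lies.} Suppose a root-to-leaf path enters a centroid path at an interior node $v_i$ and leaves it through a hanging piece $x$. For the sum in Step~3 to telescope, that path must pay depth $O(1+\log(\lambda(v_i)/\lambda(x)))$, normalised by the weight of the \emph{entry} node. This is because $\lambda(v_i)$ is exactly the weight of the piece through which the previous centroid path was left. You only use $W_j\le\lambda(\text{top node})$. With that normalisation each path may add an extra $\log(\lambda(v_1)/\lambda(v_i))$, which can be $\Theta(\log N)$ on each of $\Theta(\log N)$ paths, so you only get $O(\log^2 N)$. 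Step~2 therefore has to supply a lemma you never prove. For a weighted sequence of $k$ pieces, you need an SLP of size $O(k)$, built in time $O(k)$, with the following properties:
\begin{itemize}
\item it has a nonterminal for every suffix (left pieces) and every prefix (right pieces);
\item in each of these, a piece of weight $w$ sits at depth $O(1+\log(W/w))$, where $W$ is the weight of that particular suffix or prefix.
\end{itemize}

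Your first attempt does not deliver this. Take a centroid path whose left pieces are $T$ pieces of weight $1$ followed by one piece of weight $H\approx T$, above a bottom node of weight $H+1$; both centroid conditions are easy to arrange. In a single weight-balanced tree over the left pieces, the suffix starting at the second unit piece splits into about $\log_2 T$ canonical subtrees before the heavy piece.
\begin{itemize}
\item If you join them by a right-nested chain, the heavy piece ends up at depth about $\log_2 T$, although $\log(\lambda(v_2)/H)=O(1)$. This is the natural orientation for sharing tails across different $i$ via $\mathrm{Suf}(u)=\mathrm{val}(u)\cdot\mathrm{Suf}(\mathrm{next}(u))$, and so for staying within $O(k)$ rules. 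Nesting copies of this gadget inside the heavy piece lets the excess accumulate along a single root-to-leaf path.
\item If you left-nest, or build a separate weight-balanced concatenation of the canonical subtrees for each $v_i$, the depth is fine. But each $v_i$ then costs up to $\Theta(\log k)$ fresh rules, i.e.\ $\Theta(k\log k)$ per path and $O(g\log g)$ rather than $O(g)$ overall.
\end{itemize}
Reconciling linear size with depth normalised to each individual suffix is the technical core of the GJL theorem. ``Falling back on GJL's caterpillar encoding'' therefore amounts to citing the result you set out to prove. A smaller point: the $O(g)$ time bound also presupposes unit-cost arithmetic on $O(\log N)$-bit numbers, since $N$ can be $2^{\Theta(g)}$.
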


\begin{proof}[Proof sketch]
Suppose for contradiction that $\gam(y) \leq p(|y|)$ for every $y$ and some
polynomial $p$. We show this implies $\KC$ is computable, contradicting its
classical incomputability.

Given $y$ of length $n$, enumerate all self-delimiting programs $d$ of length
$\leq n + c_0$ (finitely many, at most $2^{n + c_0 + 1}$) and simulate $U$
on each for at most $p(n)$ steps. Under the hypothesis, there exists a
near-shortest description $d'$ achieving $\gam(y)$ (i.e.\ $|d'| \leq \KC(y)
+ c_0$ and $\TU(d') = \gam(y) \leq p(n)$), so $d'$ is found in the
enumeration and produces $y$ within the budget. The minimum length among
all descriptions that produce $y$ within $p(n)$ steps equals $\KC(y)$ to
within $c_0$. This algorithm terminates for every $y$ and computes $\KC(y)$
to within $c_0$, contradicting incomputability.

Full proof in Appendix~\ref{sec:appendix-B}.
\end{proof}

\begin{remark}[Relationship between Theorem~B, Theorem~B', and Theorem~C]
\label{rem:Bprime-compatible}
Theorem~B' (unconditional) guarantees superpolynomial $\gam$ exists for
\emph{some} string, but does not exhibit which one. Theorem~B (conditional
on $\PneqNP$) exhibits an explicit family. The two are complementary.

Combining Theorem~B' with Theorem~C does \emph{not} imply $\PneqNP$: the
strings with superpolynomial $\gam$ guaranteed by Theorem~B' may all lie
outside any $\classNP$ language (e.g.\ algorithmically random strings with
$\KC(y) \approx |y|$ have no connection to NP witnesses). Formally:
\[
  \underbrace{\forall p\;\exists y:\;\gam(y) > p(|y|)}_{\text{Theorem B'}}
  \quad\text{is compatible with}\quad
  \underbrace{\forall L\in\classNP,\;\forall w:\;\gP(x_w) \leq
  p_L(|w|)}_{\text{consequence of }\PeqNP\text{ via Theorem C}}
\]
because the first quantifier ranges over $\{0,1\}^*$ while the second ranges
over $\classNP$ instances. \textbf{Theorems B' and C together do not imply
$\PneqNP$.}
\end{remark}

% ------- Theorem D -----------------------------------------------
\subsection{Theorem D: Tractability on Structurally Guided Families}
\label{sec:thm-D}

\begin{definition}[Structurally guided family]
\label{def:guided}
A family $F$ of instances of an $\classNP$ problem $L$ is
\emph{structurally guided} if there exists a polynomial-time procedure
$\mathcal{P}_F$ and a polynomial $p$ such that for every $x \in F$:
\begin{enumerate}[label=(\arabic*)]
  \item $\mathcal{P}_F(x)$ produces a description $d$ with $|d| \leq
        \KC(y_x) + c_0$ where $y_x$ is a valid witness for $x$ in $L$, and
  \item there exists a deterministic Turing machine $M$ with $M(d) = y_x$ and
        $\TA{M}(d) \leq p(|x|)$.
\end{enumerate}
Condition~(1) requires that $\mathcal{P}_F$ finds a near-shortest description
of $y_x$. Condition~(2) requires that this \emph{same} description $d$ can
be expanded to $y_x$ in time bounded by a polynomial in $|x|$ (the instance
size, not the description length). Both conditions on the same $d$ are
necessary: (1) without~(2) gives a compact description with no efficient
decompressor; (2) without~(1) gives an efficient decompressor that does not
operate on a near-shortest input.
\end{definition}

\begin{remark}[Time bound in condition~(2)]
\label{rem:guided-time}
Condition~(2) of Definition~\ref{def:guided} does not require $M \in \classP$
in the standard sense (polynomial time in $|d|$). It requires only
$\TA{M}(d) \leq p(|x|)$, which can exceed $\poly(|d|)$ when $|d| \ll |x|$.
This is the correct bound for Theorem~\ref{thm:D}: the total time of the
resulting algorithm is $\poly(|x|)$ regardless of $|d|$. When
$\KC(y_x) = \Omega(|x|)$, the two notions coincide and $M$ is necessarily
in $\classP$.
\end{remark}

\begin{remark}[The same-description requirement]
\label{rem:guided-correction}
Definition~\ref{def:guided} requires conditions~(1) and~(2) to hold for the
\emph{same} description $d$. This is essential: (1) without (2) yields a
compact description with no efficient decompressor; (2) without (1) yields an
efficient decompressor that does not operate on a near-shortest input. Only
when both conditions are satisfied by the same $d$ does Definition~\ref{def:guided}
guarantee that $\mathcal{P}_F$ produces a description that is simultaneously
compact and efficiently executable.
\end{remark}

\begin{theorem}[D --- Tractability on structurally guided families]
\label{thm:D}
If $F$ is a structurally guided family for $L \in \classNP$ with procedure
$\mathcal{P}_F$ and polynomial $p$, then $L$ is solvable in polynomial time
on $F$.
\end{theorem}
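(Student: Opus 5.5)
The plan is to build the polynomial-time algorithm directly from the two ingredients in Definition~\ref{def:guided}: the procedure $\mathcal{P}_F$ and the expansion machine $M$. On input $x$ the algorithm works as follows.
\begin{enumerate}[label=(\roman*)]
  \item Run $\mathcal{P}_F(x)$ to obtain a description $d$. This takes $\poly(|x|)$ steps, so in particular $|d| \leq \poly(|x|)$.
  \item Run $M$ on $d$ under a clock of $p(|x|)$ steps, aborting if the clock expires. Let $y$ be the resulting output, or $\bot$ if the run was aborted.
  \item Run the $\classNP$ verifier $V$ of $L$ (hypothesis~\ref{hyp:classes}) on $(x, y)$. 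Accept if and only if $V(x,y) = 1$.
\end{enumerate}

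The running time is the sum of three polynomials in $|x|$. Step~(i) is polynomial by hypothesis. Step~(ii) costs $O(p(|x|))$, plus the overhead of maintaining a clock, which is polynomial on a multi-tape machine (hypothesis~\ref{hyp:model}). Step~(iii) is polynomial in $|x| + |y|$ with $|y| \leq p(|x|)$. Note that no appeal to near-shortness or to $\KC$ is needed for the time bound: condition~(1) is only what makes $F$ ``guided''. As Remark~\ref{rem:guided-time} stresses, the bound in condition~(2) is stated in $|x|$, so it does not matter how short $d$ is.

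Correctness on $F$ follows from conditions~(1) and~(2) applied to the same $d$ (Remark~\ref{rem:guided-correction}). For $x \in F$, condition~(2) guarantees that $M(d) = y_x$ halts within the clock. Condition~(1) guarantees that $y_x$ is a valid witness, so $V(x, y_x) = 1$ and the algorithm accepts. Soundness is automatic because acceptance is always certified by $V$: on any input $x \notin L$ the algorithm rejects. The algorithm therefore never errs in that direction, even if it is run outside $F$.

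I will also remark that, as literally stated, every $x \in F$ has a witness and hence lies in $L$. So ``$L$ restricted to $F$'' is trivially decided by always accepting. The substantive content of the theorem is therefore the \emph{search} version: the algorithm above outputs a valid witness $y_x$ in polynomial time. I will state the result in that stronger form.

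The main obstacle is uniformity of $M$. Condition~(2) says only that \emph{some} machine $M$ exists for each $x$, and a polynomial-time algorithm needs a single fixed machine. I would first read the definition as quantifying $M$ once for the family, that is, ``there exist $\mathcal{P}_F$, $p$ and $M$ such that for every $x \in F$'', and make this explicit in the proof.

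If that reading is not accepted, the fallback is to fold $M$ into the description. Replace $M$ by the universal machine $U$ and let $\mathcal{P}_F$ output $\langle \pi_M, d\rangle$. Lemma~\ref{lem:compiler}-style simulation then costs only a polynomial factor, provided the index $\pi_M$ has length $O(\log |x|)$ or $O(1)$. This changes $|d|$ by at most that amount, which is within the $c_0$ slack only when $|\pi_M| = O(1)$. I would therefore phrase the theorem under the uniform reading and flag the non-uniform case in a remark.
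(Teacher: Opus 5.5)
Your proposal is correct and follows the paper's proof: run $\mathcal{P}_F(x)$ to get $d$, run the expansion machine $M$ on $d$ for at most $p(|x|)$ steps, and accept if and only if the $\classNP$ verifier accepts the output, with correctness coming from conditions~(1) and~(2) holding for the same $d$. Your extra observations are accurate refinements that the paper leaves implicit. Its proof also writes ``run $M(d)$'' as though $M$ were one fixed machine, so your uniform reading of condition~(2) is exactly what it tacitly relies on; and since condition~(1) already forces $F \subseteq L$, the search version you state is indeed the substantive content.
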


\begin{proof}
Let $x \in F$. Run $\mathcal{P}_F(x)$ in polynomial time to obtain
description $d$ with $|d| \leq \KC(y_x) + c_0$ (condition~(1)). By
condition~(2), there exists a deterministic Turing machine $M$ with
$M(d) = y_x$ and $\TA{M}(d) \leq p(|x|)$. Run $M(d)$ in time $\leq
p(|x|)$ to obtain $y_x$. Verify $y_x$ using the $\classNP$ verifier
for $L$ in polynomial time. If verification succeeds, output $y_x$ and
accept. Total time: $\poly(|x|)$.

\emph{Correctness}: $M(d) = y_x$ by condition~(2), and the verifier accepts
$y_x$ because $y_x$ is a valid witness for $x \in F \subseteq L$.
\end{proof}

\begin{remark}[Relation to Theorem~C and unconditional status]
\label{rem:thm-D-vs-C}
Theorem~\ref{thm:C} characterises $\PeqNP$ as the existence of
polynomially bounded $\gP$ (with $M \in \classP$, i.e.\ time polynomial
in $|d|$) for KC-rich certificates across \emph{all} $\classNP$ instances.
Theorem~\ref{thm:D} is local and unconditional: it applies to a specific
structurally guided family $F$ and requires no complexity-theoretic
hypothesis. The time bound in Theorem~\ref{thm:D} is $p(|x|)$ (polynomial
in the instance size), which may exceed $\poly(|d|)$ when
$\KC(y_x) \ll |x|$ (Remark~\ref{rem:guided-time}); this is the reason
Theorem~\ref{thm:D} can hold unconditionally even when $\KC(y_x) \ll |y_x|$,
a regime where $\gP(y_x) = \infty$ in the strict sense of
Definition~\ref{def:gamma-C}.
\end{remark}

% ------- Conjecture 1 --------------------------------------------
\subsection{Conjecture: Uniform Witness Representation}
\label{sec:conj-uniform}

\begin{conjecture}[Uniform witness representation]
\label{conj:uniform}
For every $L \in \classNP$, there exists a polynomial-time computable
function $f_L : w \mapsto x_w$ such that $\KC(f_L(w)) = \Omega(|w|)$
and $\gP(f_L(w)) \leq p_L(|w|)$ for a fixed polynomial $p_L$, uniformly
across all $w \in L$.
\end{conjecture}

The distinction from Theorem~\ref{thm:C} is \emph{uniformity}: the
$(\Rightarrow)$ direction of Theorem~\ref{thm:C} establishes, for each
instance $w$ separately, the existence of a KC-rich certificate $x_w$
with $\gP(x_w) \leq p_L(|w|)$, but the choice of $x_w$ is existential
and not required to be polynomial-time computable from $w$.
Conjecture~\ref{conj:uniform} requires a uniform polynomial-time map $w
\mapsto x_w$ satisfying both the $\KC$ and $\gP$ conditions. Whether
this is equivalent to, strictly stronger than, or independent of $\PeqNP$
is open; see Question~Q1 in Section~\ref{sec:open}.

% ---------------------------------------------------------------
%  SECTION 5 -- Separation Examples
% ---------------------------------------------------------------
\section{Separation Examples}
\label{sec:examples}

This section exhibits three families of strings that witness the following
separations: low $\KC$ does not imply low $\gam$ (Example~\ref{ex:sep1}),
low $\KC$ does not imply polynomial $\gP$ (Example~\ref{ex:sep3}, conditional),
and high $\KC$ forces $\gam$ to the same asymptotic order as $\KC$
(Example~\ref{ex:sep2}). All examples are unconditional except
Example~\ref{ex:sep3}, which requires $\PneqNP$.

\begin{example}[Low $\KC$, high $\gP$: the SAT-witness family]
\label{ex:sep3}
\emph{Hypothesis}: $\PneqNP$.

\emph{Construction}: the family $\{x_\varphi\}$ of Theorem~\ref{thm:B}
(Definition in the proof of Theorem~\ref{thm:B}, Section~\ref{sec:thm-B}).
For each formula $\varphi$ on $n$ variables:
\[
  x_\varphi =
  \begin{cases}
    (1, y_\varphi) & \text{if } \varphi \in \SAT,
    \text{ where } y_\varphi \text{ is the lex-first satisfying assignment,}\\
    (0, 0^n)       & \text{if } \varphi \notin \SAT.
  \end{cases}
\]

\emph{Verification}:
\begin{enumerate}[label=(\roman*)]
  \item $\KC(x_\varphi) = O(|\varphi|)$: the near-shortest description
        $d_\varphi = \langle \varphi, \pi \rangle$ has length $|\varphi| + O(1)$,
        witnessing $\KC(x_\varphi) \leq |\varphi| + O(1)$.
  \item $\gP(x_\varphi)$ is superpolynomial in $|\varphi|$ for all
        $\varphi$ in the infinite subfamily $\mathcal{F}_{c_0}$: by
        Theorem~\ref{thm:B}, any polynomial-time decompressor for $x_\varphi$
        from a near-shortest description would solve $\SAT$ in polynomial
        time, contradicting $\PneqNP$.
\end{enumerate}

\emph{Separation exhibited}: $\KC(x_\varphi) = O(|\varphi|)$ and
$\gP(x_\varphi)$ superpolynomial. Short description, computationally
inaccessible content.

\emph{Cryptographic interpretation}: these are compact witnesses whose
recovery cost exceeds any polynomial budget. A protocol that stores $x_\varphi$
as $d_\varphi$ and expects polynomial-time recovery cannot function under
$\PneqNP$.
\end{example}

\begin{example}[Low $\KC$, moderate $\gam$: PRNG expansion]
\label{ex:sep1}
\emph{Construction}: let $G : \{0,1\}^k \to \{0,1\}^n$ be a pseudorandom
generator with seed length $k \ll n$, running in time $T_G \leq \beta \cdot n$
for a fixed constant $\beta$. Let $s \in \{0,1\}^k$ be an incompressible seed
($\KC(s) \geq k - c_0$) and set $x = G(s)$.

\emph{Verification}:
\begin{enumerate}[label=(\roman*)]
  \item $\KC(x) \leq \KC(s) + O(1) \leq k + O(1)$: given $s$, a
        constant-length instruction suffices to run $G$ and produce $x$.
        Since $k \ll n$, we have $\KC(x) \ll n = |x|$.
  \item $\gam(x) \geq n$: any Turing machine that produces $n$ bits of output
        must perform at least $n$ write steps (one per output bit), so every
        near-shortest description of $x$ requires at least $n$ steps to
        execute. (This elementary lower bound is independent of
        Proposition~\ref{prop:lb1}, which gives only $\gam(x) \geq \KC(x) =
        O(k) \ll n$ here.)
  \item $\gam(x) \leq T_G = \beta \cdot n$: the description
        $\langle s, \pi_G \rangle$ has length $k + O(1) \leq \KC(x) + c_0$
        (near-shortest, since $\KC(x) \leq k + O(1)$), and $U$ on this
        description runs $G(s)$ in at most $\beta \cdot n$ steps.
\end{enumerate}

Therefore $\gam(x) = \Theta(n)$, while $\KC(x) = O(k) \ll n$.

\emph{Separation exhibited}: $\KC(x) = O(k)$ and $\gam(x) = \Theta(n)$.
The gap $\gam(x)/\KC(x) = \Theta(n/k)$ is unbounded when $k = O(\log n)$.

\emph{Cryptographic interpretation}: this is the key-expansion scenario.
The seed is the compact key; $x$ is the keystream. The expansion cost
$\gam(x) = \Theta(n)$ is unavoidable: $n$ bits must be produced.
\end{example}

\begin{example}[High $\KC$, $\gam \approx \KC$: algorithmically random string]
\label{ex:sep2}
\emph{Construction}: let $x \in \{0,1\}^n$ satisfy $\KC(x) \geq n - c_0$
(incompressible; all but at most a $2^{-c_0}$ fraction of strings of length $n$ have this property,
by the standard counting argument: fewer than $2^{n-c_0}$ programs of length
$< n - c_0$ exist, so at most $2^{n-c_0}$ strings of length $n$ can have
$\KC(x) < n - c_0$).

\emph{Verification}:
\begin{enumerate}[label=(\roman*)]
  \item $\KC(x) \leq n + O(1)$: the identity description has length
        $n + O(1)$. Together with $\KC(x) \geq n - c_0$, this gives
        $\KC(x) = \Theta(n)$.
  \item $\gam(x) \geq \KC(x) = \Theta(n)$: by Proposition~\ref{prop:lb1}.
  \item $\gam(x) \leq n + O(1)$: by Proposition~\ref{prop:ub-trivial}.
\end{enumerate}

Therefore $\gam(x) = \Theta(n) = \Theta(\KC(x))$.

\emph{Separation exhibited}: for incompressible strings, $\KC$ and $\gam$
coincide up to constant factors. No gap exists.

\emph{Cryptographic interpretation}: incompressible strings cannot be stored
more compactly than their raw form. They are not useful as compressed keys
because $\KC(x) \approx |x|$ --- no compact representation exists.
\end{example}

\begin{remark}[Summary: independence of $\KC$ and $\gam$]
\label{rem:independence}
The three examples confirm that $\KC(x)$ and $\gam(x)$ are
provably independent:
\begin{itemize}
  \item \emph{Low $\KC$, high $\gP$} (Example~\ref{ex:sep3}, conditional):
        a compact description can be computationally inaccessible.
  \item \emph{Low $\KC$, moderate $\gam$} (Example~\ref{ex:sep1},
        unconditional): a compact description can require linear work to
        expand, without any complexity-theoretic hardness.
  \item \emph{High $\KC$, $\gam \approx \KC$} (Example~\ref{ex:sep2},
        unconditional): no compact description exists, and decompression
        cost matches information content.
\end{itemize}
The relevant quantity for cryptographic usability is $\gP$, not $\KC$ alone.
\end{remark}

% ---------------------------------------------------------------
%  SECTION 6 -- Discussion and Related Work
% ---------------------------------------------------------------
\section{Discussion and Related Work}
\label{sec:discussion}

\subsection{Relation to prior complexity measures}

\paragraph{Levin's $\Kt$ complexity.}
Levin's $\Kt(x) = \min_d \{|d| + \log \TU(d) : U(d) = x\}$
combines description length and log-running-time into a single
quantity~\cite{levin1973,liVitanyi2008}. The quantity $\gam(x)$ differs in
two respects: it fixes the length constraint to near-minimal and minimises
running time directly, and the minimisation is over a \emph{set} of
near-shortest descriptions rather than a single weighted trade-off. The
spectrum
\[
  \KC \;\longrightarrow\; \Kbt \;\longrightarrow\; \Kt \;\longrightarrow\; \gam
\]
represents increasing sensitivity to computational cost (illustratively,
not as a formal ordering for all strings): $\KC$ ignores cost entirely;
$\Kbt$ caps it at a fixed bound $t$; $\Kt$ penalises it logarithmically;
$\gam$ minimises it directly subject to the near-shortest constraint.
To the authors' knowledge, $\gam$ has not been studied as a
standalone invariant prior to this work.

\paragraph{Time-bounded Kolmogorov complexity $\Kbt$.}
$\Kbt(x) = \min\{|d| : U(d) = x \text{ in } \leq t \text{ steps}\}$ fixes a
time bound and minimises description length; $\gam$ fixes the length
constraint and minimises time. The two questions are \emph{dual in
direction}: $\Kbt$ asks ``how short can the description be if we cap the
time?''; $\gam$ asks ``how fast can we decompress if we insist on a
near-shortest description?''  A formal quantitative relation between $\gam$
and $\Kbt$ is an open problem; see Question~Q5 in
Section~\ref{sec:open}.

\paragraph{Proof complexity.}
Proof complexity measures the minimum cost of verifying or finding a proof of
a statement. $\gam(x)$ satisfies $\gam(x) \geq \KC(x)$
(Proposition~\ref{prop:lb1}): description length lower bounds decompression
time. An analogous bound relating proof complexity measures to description
length has not been established; formalising this connection is an open
problem.

\paragraph{Distinguishing $\gam$ from related notions.}
The quantity $\gam$ differs from \emph{computational depth}~\cite{liVitanyi2008}
in that depth measures the time to compute $x$ from the
empty string (no short description assumed), while $\gam$ measures the time
to decompress from a near-shortest description. It differs from
\emph{sophistication}~\cite{liVitanyi2008} in that
sophistication measures the two-part description complexity (model plus data),
while $\gam$ fixes the near-shortest constraint and minimises decompression
time. Formalising the relationships between $\gam$ and these notions is an
open problem listed in Section~\ref{sec:open}.

\subsection{Applications}

\paragraph{Industrial SAT.}
Modern Conflict-Driven Clause Learning (CDCL) solvers succeed on industrial
instances (hardware verification, planning) that may have millions of
variables. The $\gam$ framework provides a structural explanation: if
industrial instances have solutions $y_\varphi$ with low $\KC(y_\varphi)$
and a near-shortest description that is efficiently expandable, then
Theorem~D guarantees polynomial-time solvability on those families
unconditionally. For random instances near the phase transition, empirical
solving times are superpolynomial; this is consistent with Theorem~B
(conditional on $\PneqNP$), which establishes the existence of compact
witnesses with superpolynomial $\gP$ in the SAT-witness family
$\{x_\varphi\}$. Whether $\gP$ growth tracks the phase transition in random
SAT is an open question (Question~Q2 in Section~\ref{sec:open}).

\paragraph{Machine learning.}
Training a neural network can be interpreted as searching for a decompressor
$M \in \mathcal{H}$ (over a hypothesis class $\mathcal{H}$) that minimises
$\gP$ over the training distribution: the learned model is a decompressor,
and training searches for $M$ that reconstructs the data distribution from a
compact latent representation. This is an interpretive reframing; formal
connections to VC dimension, PAC-Bayes bounds, or MDL require additional
work and are listed as Question~Q3 in Section~\ref{sec:open}. We do not
claim any quantitative results in this direction.

\paragraph{Coding theory.}
An error-correcting code with compact structure (low $\KC$ for the code
description) and efficient decoding (low decompression cost from received
words) corresponds to low $\gam$ for its codewords in the decompressor
regime. This analogy is approximate: received words may not be near-shortest
descriptions of codewords, so $\gam$ does not directly apply. We conjecture
that $\gam$ could serve as a benchmark for comparing the computational
accessibility of different code families; formalising this connection is left
for future work.

\subsection{Efficiency ratios}

\begin{definition}[Efficiency ratio]
\label{def:eta}
For an algorithm $A$ and input $x$ producing output $y = A(x)$, with
$\gam(y) > 0$ (guaranteed by Proposition~\ref{prop:lb1} for non-empty $y$):
\[
  \eta(A, x) \;=\; \frac{\TA{A}(x)}{\gam(y)},
  \qquad
  \eta_{\classP}(A, x) \;=\; \frac{\TA{A}(x)}{\gP(y)}
\]
(where $\eta_{\classP}$ is defined only when $\gP(y) < \infty$).
\end{definition}

The ratio $\eta(A, x)$ is well-defined since $\gam(y) \geq \KC(y) > 0$ for
every non-empty $y$ (Proposition~\ref{prop:lb1} and hypothesis~\ref{hyp:prefix-free}).
It is non-effective since $\gam$ is not computable
(it cannot be computed by any Turing machine, by Theorem~B'). It serves as a
theoretical benchmark: $\eta(A, x) = 1$ means $A$ achieves the minimum
decompression time on input $x$. The class-relative ratio $\eta_{\classP}$
is estimable by exhibiting any concrete $M \in \classP$ that reconstructs $y$
from a near-shortest description, giving an upper bound on $\gP(y)$ and hence
a lower bound on $\eta_{\classP}$.

% ---------------------------------------------------------------
%  SECTION 7 -- Threat Model and Cryptographic Implications
% ---------------------------------------------------------------
\section{Threat Model and Cryptographic Implications}
\label{sec:threat}

This section formalises the cryptographic threat model sketched in
Section~\ref{sec:intro:gap} and derives concrete implications of
Theorems~A--D for protocol design.

\subsection{Assets, adversary, and relevant metrics}

\paragraph{Assets.}
The primary assets are compact representations (near-shortest descriptions)
of keys, certificates, and witnesses, together with their decompressed
outputs.

\paragraph{Adversary model.}
We consider adversaries with polynomial or superpolynomial computational
resources, capable of:
\begin{itemize}
  \item submitting crafted compact descriptions and triggering decompression
        (\emph{active});
  \item observing decompression cost and timing (\emph{passive});
  \item inducing desynchronisation of session state (\emph{disruptive}).
\end{itemize}
We do not formalise a full cryptographic security definition (e.g.\
indistinguishability or semantic security); such a definition would require
a probability distribution over keys and an explicit model of adversarial
access. The analysis here is \emph{usability-based}: we ask whether a
compact representation can be expanded within a bounded time budget, not
whether it is computationally indistinguishable from random.

\paragraph{Relevant metrics.}
The theoretical metric is $\gP(x)$ (Definition~\ref{def:gamma-C}) and its
class-relative variants. Practical proxies include empirical decompression
time $\Tdec$, compressor output size (LZ/SLP), and derivation
depth. Since $\KC$ is not computable~\cite{liVitanyi2008},
practical approximations are necessary.

\subsection{Implications of the main theorems}

\paragraph{Theorem~A (Invariance).}
The polynomial invariance of $\gam$ across universal machines
(Theorem~\ref{thm:A}) guarantees that the \emph{property} of $\gP(x)$ being
polynomially bounded is machine-independent: if $\gP(x) \leq p(|x|)$ with
respect to one universal machine, then $\gP(x) \leq q(|x|) \cdot p(|x|) +
O(1)$ with respect to any other, for a polynomial $q$ depending only on the
pair of machines (not on $x$). The specific bound $p$ changes across machines,
but the qualitative property (existence of a polynomial bound) does not. This
is a prerequisite for any machine-independent security argument based on $\gam$.

\paragraph{Theorem~B (Conditional separation).}
Under $\PneqNP$, there exist compact witnesses $x$ with $\KC(x) = O(|x|)$
and $\gP(x)$ superpolynomial. A protocol that stores such an $x$ as a
near-shortest description $d$ and expects polynomial-time recovery will
fail: no polynomial-time decompressor can reconstruct $x$ from $d$.

\emph{Consequence for short-key protocols}: if a key generation algorithm
produces keys as near-shortest descriptions of their expanded form, and
if the expansion problem is NP-hard, then $\PneqNP$ implies that some
keys cannot be expanded in polynomial time. Protocol designers must either
(a) verify that the specific key family has polynomial $\gP$ (e.g.\ by
exhibiting an explicit decompressor), or (b) avoid relying on compact
representations of NP-hard objects.

\paragraph{Theorem~C (Exact characterisation).}
$\gP$ is polynomially bounded on $\classNP$ certificates with $\KC(x_w) =
\Omega(|w|)$ if and only if $\PeqNP$ (Theorem~\ref{thm:C},
Remark~\ref{rem:KC-condition}). The question of whether a compact certificate
scheme is usable for KC-rich certificates (polynomial decompression when the
certificate's Kolmogorov complexity is commensurate with the instance size)
is therefore \emph{logically equivalent} to $\PeqNP$: any answer
to the usability question immediately yields an answer to P vs NP, and vice
versa. For KC-poor certificates ($\KC(x_w) \ll |w|$), $\gP(x_w) = \infty$
unconditionally: their compact encodings are operationally inaccessible
regardless of the $\PeqNP$ answer.

\paragraph{Theorem~D (Tractability on structured families).}
For structurally guided families (Definition~\ref{def:guided}), polynomial-time
decompression is unconditionally guaranteed (Theorem~\ref{thm:D}). This
provides a sufficient condition for usability that does not depend on P vs NP:
if a key or certificate family admits a polynomial-time procedure that finds a
near-shortest description \emph{and} a decompressor running in time polynomial
in the instance size (Definition~\ref{def:guided}, condition~(2)), then
decompression is efficient.

\subsection{Attack scenarios}

\paragraph{DoS via expensive decompression.}
An adversary submits compact descriptions $d$ of strings $x = U(d)$ with
large $\gP(x)$ to a verifier that decompresses before checking. If the verifier runs without a
time budget, the adversary saturates it. \emph{Mitigation}: enforce a
polynomial time budget $p(|d|)$ on decompression; reject any $d$ that
exceeds it. Theorem~A guarantees that the time budget is machine-independent.

\paragraph{Key-usability attack.}
A compact key $d$ encoding a string $x = U(d)$ with $\gP(x)$ superpolynomial
cannot be expanded in polynomial time; an adversary can exploit this to cause key-derivation
failures or timing attacks. \emph{Mitigation}: certify that the key family
is structurally guided (Theorem~D) before deployment.

\paragraph{Desynchronisation.}
In protocols where both parties hold a near-shortest description of a shared
state and must expand it synchronously, an adversary inducing message loss
can desynchronise the parties. If re-synchronisation requires re-expanding
the state from scratch and $\gP$ is large, this causes a denial-of-service.
\emph{Mitigation}: design protocols to carry explicit state rather than
relying on re-derivation from compact encodings when $\gP$ is not certifiably
polynomial.

\begin{remark}[Limits of the framework]
\label{rem:limits}
The $\gam$ framework measures \emph{usability} (decompression cost), not
\emph{security} (computational indistinguishability or hardness of inversion).
A key $k$ with low $\gP(k)$ can be efficiently reconstructed from its compact
description, but this says nothing about whether an adversary who does not
hold the description can find $k$. Security and usability are orthogonal
properties; both are necessary for a cryptographic primitive to be deployed
in practice.
\end{remark}

% ---------------------------------------------------------------
%  SECTION 8 -- Application: Grammar-Based Compression
% ---------------------------------------------------------------
\section{Application: Grammar-Based Compression}
\label{sec:grammar}

Grammar-based compression provides the cleanest formal setting in which to
exhibit a descriptive-vs-decompression-cost gap \emph{unconditionally},
without any hypothesis on $\PeqNP$. The gap is between grammar size $g^*(x)$
(the grammar-theoretic analogue of $\KC(x)$) and derivation cost $\gCFG(x)$
(the grammar-theoretic analogue of $\gam(x)$): among all near-minimal grammars
for a string $x$, the derivation cost can vary by a super-constant factor
invisible to any measure depending only on grammar size.

\subsection{Background: the Smallest Grammar Problem}

The \emph{Smallest Grammar Problem} (SGP) asks: given a string $x$, find the
smallest context-free grammar $G$ such that $L(G) = \{x\}$. The SGP is
$\classNP$-hard~\cite{charikar2005}: no polynomial-time algorithm computes a
smallest grammar unless $\PeqNP$. The literature on grammar-based compression
(LZ78~\cite{zivLempel1978}, SEQUITUR~\cite{nevillManning1997}, RE-PAIR
\cite{larssonMoffat2000}) focuses on grammar \emph{size} $|G|$ as the sole
measure of compression quality. No existing measure distinguishes between a
smallest grammar with low derivation cost and one that is expensive to expand
despite being size-minimal.

\subsection{Grammar witness complexity}

\begin{definition}[Grammar witness complexity]
\label{def:gamma-cfg}
For a string $x \in \{0,1\}^*$, let $g^*(x)$ denote the size (number of
production rules) of the smallest context-free grammar $G$ with $L(G) =
\{x\}$. The \emph{grammar witness complexity} of $x$ is
\[
  \gCFG(x) \;=\; \min_{\substack{G\,:\,L(G)=\{x\},\;
  |G|\leq g^*(x)+c_g}} \Tder(G),
\]
where $\Tder(G)$ is the number of steps required to expand $G$
to $x$ via its derivation, and $c_g$ is a fixed constant depending only on
the grammar model (analogous to $c_0$ in hypothesis~\ref{hyp:prefix} for
Kolmogorov complexity, but specific to the grammar-size measure). The
minimisation is over all near-minimal grammars for $x$; among these,
$\gCFG(x)$ is the minimum derivation cost.

\emph{Instance of the general framework}: this is Definition~\ref{def:gamma-C}
with descriptions being grammar rules, description length being grammar size,
and decompression being the derivation of $G$ to produce $x$.

\emph{Cryptographic interpretation}: $d$ is a near-minimal grammar $G$; $U(d)
= x$ is the derivation step; $\Tder(G)$ is the cost of
recovering $x$ from its compressed representation $G$. $\gCFG(x)$ is the
minimum recovery cost over all near-minimal compressed representations.
\end{definition}

\subsection{The gap lemma}

The key result of this section is that $\gCFG(x)$ can be much smaller than
the derivation cost of the \emph{natural} minimal grammar, unconditionally.

\begin{lemma}[Grammar complexity gap]
\label{lem:grammar}
There exist strings $x$ of length $N = \Theta(n^2)$ such that:
\begin{enumerate}[label=(\roman*)]
  \item The smallest grammar $G^*$ for $x$ satisfies
        $g^*(x) = |G^*| = \Theta(n) = \Theta(\sqrt{N})$.
  \item The natural minimal grammar $G^*$ has derivation depth
        $\Tder(G^*) = \Omega(n) = \Omega(\sqrt{N})$.
  \item There exists a grammar $G'$ with $|G'| = O(|G^*|)$ (near-minimal)
        and $\Tder(G') = O(\log N)$.
\end{enumerate}
Parts~\textnormal{(ii)} and~\textnormal{(iii)} together show that among
near-minimal grammars for $x$, derivation depth varies by a factor of
$\Theta(n/\log n) = \Theta(\sqrt{N}/\log N)$, a gap invisible to any measure
depending only on grammar size. In particular, $\gCFG(x) = O(\log N)$ while
the natural minimal grammar has depth $\Omega(\sqrt{N})$.
\end{lemma}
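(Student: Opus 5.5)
We plan to build $x$ explicitly so that it is forced to carry about $n\log n$ bits of information. Parts~(i) and~(ii) then come from a hand-built ``chain'' grammar, and part~(iii) comes from rebalancing that grammar with Theorem~\ref{thm:GJL}. Fix $n$ and choose a sequence $e_1,\dots,e_n \in \{n+1,\dots,2n\}$ with $\KC(e_1,\dots,e_n \mid n) \geq n\log_2 n - O(1)$; such a sequence exists by the standard counting argument, since there are $n^n$ sequences. Set
\[
  x \;=\; b\,a^{e_1}\,b\,a^{e_2}\,b\cdots b\,a^{e_n}\,b ,
\]
over a two-letter alphabet, which is $\{0,1\}$ after renaming. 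Its length is $N = n+1+\sum_i e_i = \Theta(n^2)$.

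\emph{Upper bound in (i) and the natural grammar for (ii).} Let $G^*$ have rules $A_1 \to a$ and $A_k \to A_{k-1}a$ for $2 \leq k \leq 2n$, together with the start rule $S \to b\,A_{e_1}\,b\,A_{e_2}\,b\cdots A_{e_n}\,b$. This grammar has $O(n)$ rules and total right-hand-side length $O(n)$, so $g^*(x) = O(n)$ under either size convention. Expanding $A_{e_i}$ with $e_i > n$ requires descending the whole chain $A_{e_i}\to A_{e_i-1}\to\cdots\to A_1$, so the derivation depth of $G^*$ is at least $n+1$. This gives~(ii), with $G^*$ identified as the natural minimal grammar once the matching lower bound in~(i) is in place.

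\emph{Lower bound in (i).} This is the step we expect to be the main obstacle, and we would argue it by encoding. Take any grammar $G$ with $L(G)=\{x\}$ and $|G| = g$, where size counts total right-hand-side length; if size counts rules only, we first normalise to a bounded-rank form, which changes size by at most a constant factor. Then $G$ can be written down in $O(g\log g)$ bits, since each right-hand-side symbol is one of at most $g+2$ names. From $G$ and $n$ one recovers $x$, and from $x$ one reads off $(e_1,\dots,e_n)$. Hence
\[
  n\log n - O(1) \;\leq\; \KC(e_1,\dots,e_n\mid n) \;\leq\; O(g\log g) + O(1).
\]
Since $g \leq O(n)$, this forces $g = \Omega(n)$. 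So $g^*(x) = \Theta(n) = \Theta(\sqrt N)$, and $G^*$ is minimal up to a constant factor. We need to be careful that the constant $c_g$ in Definition~\ref{def:gamma-cfg} is compatible with ``near-minimal'' meaning $|G| = O(g^*(x))$. If the additive version is demanded, we would instead compare derivation depths only among grammars within a constant factor of $g^*$, and state this explicitly.

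\emph{Part (iii).} Apply Theorem~\ref{thm:GJL} to $G^*$. It produces an equivalent straight-line program $G'$ with $|G'| = O(|G^*|) = O(n)$ and derivation depth $O(\log N)$. In particular $\gCFG(x) = O(\log N)$, measured as derivation depth, which is the measure used in~(ii). Combining this with~(ii), the depth ratio between the two near-minimal grammars $G^*$ and $G'$ is $\Omega(n/\log N) = \Theta(\sqrt N/\log N)$. Since $G^*$ and $G'$ have the same size up to constants, no measure depending only on grammar size can distinguish them.
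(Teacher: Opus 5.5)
Your proposal is correct if grammar size means total right-hand-side length and $\Tder$ means derivation depth. For (ii) and (iii) it matches the paper: chain depth, then Theorem~\ref{thm:GJL}. The genuine difference is the lower bound in (i).

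\textbf{The paper's route.} It uses the explicit string $a^1b\,a^2b\cdots a^nb$. It argues that every nonterminal of an SLP has one fixed yield, so the $n$ distinct blocks force $n$ distinct nonterminals.

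\textbf{Your route.} You make the block lengths a jointly incompressible sequence in $\{n+1,\dots,2n\}$. You then encode an arbitrary grammar of size $g$ in $O(g\log g)$ bits, so $g\log g=\Omega(n\log n)$ and hence $g=\Omega(n)$.

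\textbf{What your route buys.} It gives a sound argument where the paper's is not. The paper's step tacitly assumes each block is the exact yield of some nonterminal, and nothing forces this. A block may be split over several nonterminals, sit inside a yield that also contains $b$'s, or appear as terminals in the start rule; $S\to x$ has a single nonterminal. The paper also admits its argument covers only SLPs. Yours applies to every CFG with $L(G)=\{x\}$, which is what the definition of $g^*$ requires. Taking all $e_i>n$ also makes the depth bound in (ii) hold for every block at once.

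\textbf{What the paper's route buys.} An explicit $x$, whereas yours is existential. To keep the explicit string with a valid proof, you would need an outside tool, such as the lower bound of smallest-grammar size by the number of LZ77 phrases. For $a^1b\cdots a^nb$ that number is about one phrase per block.

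Two smaller remarks.
\begin{enumerate}
  \item Your caution about $c_g$ is justified. $|G'|=O(|G^*|)$ does not imply $|G'|\le g^*(x)+c_g$, yet the paper's proof asserts exactly this inference. So the closing claim $\gCFG(x)=O(\log N)$ holds, in your proof and in the paper's, only with multiplicative near-minimality.
  \item Your fallback for the rules-only size convention should be dropped. Normalising $S\to x$ to bounded rank costs $\Theta(N)$ rules, not a constant factor. Under that convention $g^*(x)=1$ for every $x$, so (i) is simply false there rather than recoverable.
\end{enumerate}
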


\begin{proof}[Proof sketch]
\textbf{Construction.} Throughout this proof we work over the two-symbol
alphabet $\{a, b\}$ (which can be identified with $\{0,1\}$ via $a \mapsto 0$,
$b \mapsto 1$; the blocks $a^i$ then become runs of zeros). Let $n \geq 1$
and define
\[
  x \;=\; a^1\,b\;a^2\,b\;\cdots\;a^n\,b,
\]
so $|x| = N = \sum_{i=1}^n (i+1) = n(n+1)/2 + n = \Theta(n^2)$.
Consider the \emph{natural grammar} $G^*$:
\[
\begin{aligned}
  &S \to A_1 B_1,\quad
   B_i \to b\,A_{i+1} B_{i+1}\ (1 \leq i < n),\quad
   B_n \to b,\\
  &A_i \to a\,A_{i-1}\ (i \geq 1),\quad
   A_0 \to \varepsilon.
\end{aligned}
\]

\textbf{Proof of (i).} The grammar $G^*$ has $O(n)$ rules. For optimality:
$x$ contains $n$ blocks $a^1, a^2, \ldots, a^n$ of pairwise distinct lengths;
any grammar distinguishing all blocks requires $\Omega(n)$ non-terminals, so
$g^*(x) = \Omega(n)$. Together: $g^*(x) = \Theta(n) = \Theta(\sqrt{N})$.

\textbf{Proof of (ii).} To derive the block $a^i$, the non-terminal $A_i$
must expand $A_{i-1}$, which expands $A_{i-2}$, down to $A_0$: a derivation
chain of depth $i$. The maximum over all blocks is $n = \Omega(\sqrt{N})$.

\textbf{Proof of (iii).} By Theorem~\ref{thm:GJL} (stated before this proof),
applying to $G^*$ (which is an SLP of size $\Theta(n)$
generating $x$ of length $N = \Theta(n^2)$) yields a grammar $G'$ with
$|G'| = O(n) = O(|G^*|)$ and $\Tder(G') = O(\log N)$.
Full details in Appendix~\ref{sec:appendix-A}.\footnote{The full proof verifies
that the transformation of Theorem~\ref{thm:GJL} preserves the property
$L(G') = \{x\}$ and that $|G'|$ remains within the near-minimal bound $g^*(x)
+ c_g$.}
\end{proof}

\begin{remark}[Unconditional nature of the gap]
\label{rem:grammar-unconditional}
Lemma~\ref{lem:grammar} does not involve $\PeqNP$: the constructions are
explicit and the bounds elementary (the lower bound in~(ii) is a direct
chain-depth argument; the upper bound in~(iii) cites an unconditional
algorithmic result~\cite{ganardiJezLohrey2019}). The structure is
conceptually identical to Theorem~\ref{thm:B}: descriptive complexity
($g^*$, resp.\ $\KC$) is low while decompression complexity ($\gCFG$,
resp.\ $\gP$) varies. The SGP instance makes the separation unconditional
and explicit.
\end{remark}

\begin{remark}[Implication for grammar-based compression systems]
\label{rem:grammar-crypto}
All existing grammar-based compression systems (LZ78~\cite{zivLempel1978},
SEQUITUR~\cite{nevillManning1997}, RE-PAIR~\cite{larssonMoffat2000}) minimise
grammar size and are therefore blind to the derivation-cost gap exhibited by
Lemma~\ref{lem:grammar}. If a compressed object is stored as a near-minimal
grammar and must be decompressed under a time budget, the natural minimal
grammar may be unusable while a balanced equivalent grammar (of the same
asymptotic size) is efficiently derivable in $O(\log N)$ depth.

In cryptographic terms: two compressed keys stored as near-minimal grammars
of the same size can have derivation costs differing by $\Theta(\sqrt{N}/\log
N)$. A protocol that enforces a time budget must therefore specify not merely
that the grammar is near-minimal, but that it is \emph{balanced} in the sense
of Theorem~\ref{thm:GJL}.
\end{remark}

% ---------------------------------------------------------------
%  SECTION 9 -- Open Questions
% ---------------------------------------------------------------
\section{Open Questions}
\label{sec:open}

We list seven open questions arising from the framework developed in this
paper. Each is stated with precise hypotheses and a note on what the
paper does and does not resolve.

\medskip
\noindent\textbf{Q1.} \emph{Status of the Uniform Witness Conjecture
(Conjecture~\ref{conj:uniform}) relative to $\PeqNP$.}

The conjecture asks: for every $L \in \classNP$, does there exist a
polynomial-time computable function $f_L : w \mapsto x_w$ with
$\KC(f_L(w)) = \Omega(|w|)$ and $\gP(f_L(w)) \leq p_L(|w|)$ uniformly?

What the paper resolves: under $\PeqNP$, the $(\Rightarrow)$ direction of
Theorem~\ref{thm:C} guarantees that KC-rich certificates with $\gP \leq
p_L(|w|)$ exist for every $w \in L$, but their selection is existential,
not polynomial-time computable from $w$. Under $\PneqNP$, the conjecture
is false for $\SAT$ by Theorem~\ref{thm:B} (the family $\mathcal{F}_{c_0}$
satisfies $\KC(x_\varphi) = \Omega(|\varphi|)$ but $\gP$ is superpolynomial).

What remains open: could the conjecture hold for some $L \in \classNP$
strictly easier than $\SAT$ (e.g.\ graph 2-colouring or bipartite matching),
even under $\PneqNP$?  If so, the conjecture would not be equivalent to
$\PeqNP$ but would instead characterise a finer structural property of $L$.
The precise relationship --- equivalent to $\PeqNP$, strictly stronger than
the conjecture for a specific $L$, or independent of $\PeqNP$ for easy
$\classNP$ problems --- is not settled by the results of this paper.

\medskip
\noindent\textbf{Q2.} \emph{Phase transitions in random SAT and $\gP$.}

Do empirical phase transitions in random $\SAT$ (near the conjectured
threshold of approximately $4.267$ clause-to-variable ratio for 3-SAT
\cite{hoosStutzle2004}) correspond to rapid growth in $\gP$ of the SAT-witness
family $\{x_\varphi\}$ as a function of instance structure?  Theorem~\ref{thm:B}
establishes superpolynomial $\gP$ for the explicit family $\mathcal{F}_{c_0}$
under $\PneqNP$, but says nothing directly about random instances.
Whether $\gP$ growth tracks the phase transition is an open problem.

\medskip
\noindent\textbf{Q3.} \emph{Formal correspondence between $\gP$ and
generalisation bounds.}

Is there a formal correspondence between minimising $\gP$ over a hypothesis
class $\mathcal{H}$ and existing generalisation bounds (VC dimension,
PAC-Bayes, or MDL)?  The interpretive reframing in Section~\ref{sec:discussion}
motivates this question but does not yield quantitative results; we conjecture
that such a correspondence exists.

\medskip
\noindent\textbf{Q4.} \emph{Quasi-linear invariance of $\gam$.}

The polynomial invariance of Theorem~\ref{thm:A} involves a factor
$q(|x|) = C_{\Comp} \cdot |x|^{k_{\Comp}}$ where $k_{\Comp}$ is the
degree of the simulation overhead. Can the invariance be tightened --- for
instance to a quasi-linear factor $O(|x| \log |x|)$ --- for natural
subclasses of universal machines (e.g.\ RAM-based or oblivious Turing
machines)?  The constants $C_{\Comp}$ and $k_{\Comp}$ in
Lemma~\ref{lem:compiler} can in principle be made explicit; whether they
are tight for specific machine models is open.

\medskip
\noindent\textbf{Q5.} \emph{Quantitative relation between $\gam$ and $\Kbt$.}

The definitions of $\gam$ and $\Kbt$ are dual: $\Kbt$ fixes time and
minimises length; $\gam$ fixes near-minimal length and minimises time.
Is there a quantitative relation of the form $\gam(x) \leq f(\Kbt(x), t)$
for some explicit $f$, where $t$ ranges over values such as $t = \gam(x)$
(relating the two quantities at the same operating point) or $t = \poly(|x|)$
(in the polynomial regime)?  Conversely, does $\Kbt(x) \leq g(\gam(x))$ hold
for some explicit $g$ and appropriate $t$?  A complete characterisation of the trade-off between these two
quantities would clarify the structure of the spectrum
$\KC \to \Kbt \to \Kt \to \gam$ (Remark~\ref{rem:Kt}).

\medskip
\noindent\textbf{Q6.} \emph{Structure of sub-level sets of $\gam$.}

Define the $t$-sub-level set of $\gam$ as $\Lambda_t = \{x \in \{0,1\}^* :
\gam(x) \leq t\}$. What is $|\Lambda_t \cap \{0,1\}^n|$ as a function of
$n$ and $t$?  When $\Lambda_t$ is large (many strings of length $n$ have
$\gam \leq t$), the near-shortest descriptions achieving cost $\leq t$ cover
many outputs; understanding this density characterises how much information a
budget-$t$ decompressor can retrieve. The density of $\Lambda_t$ may be
related to the entropy of source distributions for which budget-bounded
decompression is efficient.

\medskip
\noindent\textbf{Q7.} \emph{Relation between $\gCFG$ and $\gam$.}

Lemma~\ref{lem:grammar} exhibits a gap between grammar size $g^*(x)$ and
derivation cost $\gCFG(x)$. What is the quantitative relation between
$\gCFG(x)$ and $\gam(x)$?  Since $\gCFG$ uses grammars as descriptions while
$\gam$ uses self-delimiting programs for $U$, the two quantities measure
different aspects of decompression cost. In particular: is there a family of
strings for which $\gam(x) \ll \gCFG(x)$ or $\gCFG(x) \ll \gam(x)$?

\section*{What Part~I Establishes}
\addcontentsline{toc}{section}{What Part~I Establishes}

The five results of Part~I are not independent contributions but a single
coherent argument. Theorem~A is a prerequisite: without machine
independence, neither Theorem~C nor any other complexity-theoretic
application of $\gam$ would be well-founded. Theorems~B and~B' delimit
the framework from below: they establish that $\gP$ is not trivially
bounded, neither on a specific explicit family (conditionally) nor
anywhere on $\{0,1\}^*$ (unconditionally). Theorem~C is the central
claim: $\gP$ is the discriminant of $\PeqNP$ among certificates with
$\KC(x_w) = \Omega(|w|)$, in the standard Turing model and with no auxiliary
hypothesis. For KC-poor certificates, $\gP = \infty$ unconditionally --- a
phenomenon the framework identifies as constrained inaccessibility rather than
a limitation of Theorem~C.
Theorem~D shows that the framework is not merely a restatement of the
open problem: for structured families, polynomial-time decompressibility
is provable unconditionally, without assuming $\PeqNP$.

Together, the five results answer the question of Section~\ref{sec:intro:gap}
at the right level of generality: $\gP$ separates compact representations
that are operationally accessible from those that are not.
For KC-rich certificates ($\KC(x_w) = \Omega(|w|)$), that condition is
equivalent to $\PeqNP$ (Theorem~C). For KC-poor certificates, $\gP = \infty$
unconditionally, identifying a distinct form of inaccessibility independent
of $\PeqNP$. In the structured cases that arise in practice (Theorem~D),
polynomial-time accessibility is provable without any complexity assumption.

% ===============================================================
%  PART II: COMPANION QUANTITIES
% ===============================================================
\part*{Part~II: Adaptive Complexity, Computational Overhead,\\
and Structural Entropy}
\addcontentsline{toc}{part}{Part~II: Companion Quantities}

\noindent
Part~I asked a single question about an object $x$: given that a
near-shortest description of $x$ exists, how much work is required to
execute it?  The answer is $\gam(x)$, or $\gP(x)$ when the decompressor
is restricted to $\classP$.

This question is object-intrinsic: it concerns $x$ independently of how
it was produced or how it will be used. Three related but distinct
questions fall outside its scope.

The first concerns the \emph{input side} of a computation, not its output.
Running time is measured against input length $n = |x|$, treating all
inputs of the same length identically regardless of their actual
information content $\KC(x)$. When $\KC(x) \ll n$, an algorithm
may spend most of its steps on structure already implicit in $x$;
$\Tad(A,x) = \TA{A}(x)/\KC(x)$ (Section~\ref{sec:adaptive}) measures
work per bit of input information, making this cost explicit.

The second concerns the \emph{output side}. Any algorithm producing
output of length $m$ must spend at least $m$ steps writing it; the
computationally meaningful quantity is the overhead beyond this
unavoidable minimum. $\OCout(A,x) = \TA{A}(x) - |A(x)|$
(Section~\ref{sec:overhead}) isolates this overhead from the bare cost
of writing the output.

The third concerns the \emph{structure of solutions}. $\gP(x)$ being
polynomially bounded characterises the decompressibility of $x$, but
says nothing about how much information the witness $y_x$ itself carries.
$\Hs(y) = \KC(y)/\log_2|y|$ (Section~\ref{sec:entropy}) measures the
information density of $y_x$: whether solutions to structured instances
are more compressible than those of random instances of the same size.

The three quantities are orthogonal to $\gam$: they do not extend it but
measure aspects of algorithmic cost and solution structure that $\gam$
and $\gP$, by design, leave unaddressed.

% ---------------------------------------------------------------
%  SECTION 10 -- Adaptive Complexity
% ---------------------------------------------------------------
\section{Adaptive Complexity}
\label{sec:adaptive}

Standard complexity measures algorithm cost as a function of input length
$n = |x|$, treating all inputs of the same length identically regardless of
their information content. An algorithm running in $O(n \log n)$ time is
considered efficient whether $x$ carries $n \log n$ bits of genuine
information or is $99\%$ redundant. Adaptive complexity corrects this by
normalising running time by the actual information content of the input.

\begin{definition}[Adaptive complexity]
\label{def:Tad}
For an algorithm $A$ and input $x$ with $\KC(x) \geq 1$, the
\emph{adaptive complexity} of $A$ on $x$ is
\[
  \Tad(A, x) \;=\; \frac{\TA{A}(x)}{\KC(x)}.
\]
$\KC(x)$ is not computable in general, so $\Tad$ is a non-effective
theoretical measure in the same sense as $\gam$. Practical approximations
--- time-bounded $\Kbt(x)$, class-relative $\KC_{\classP}(x)$, or
compressor-based proxies such as the LZ77 output length --- can replace
$\KC(x)$ in applications.

\emph{Cryptographic interpretation}: $\Tad(A, x)$ measures the work per
bit of genuine information in $x$. An algorithm with $\Tad(A, x) = O(1)$
is \emph{information-optimally efficient}: it does a constant amount of
work per bit of genuine input content. An algorithm with $\Tad(A, x) \gg
1$ wastes work on structure that was already implicit in $x$.
\end{definition}

\begin{remark}[Well-definedness]
\label{rem:Tad-defined}
The condition $\KC(x) \geq 1$ is guaranteed for every non-empty $x$ by
hypothesis~\ref{hyp:prefix-free}. For $x = \varepsilon$, $\Tad$ is
undefined; all results below assume $x \neq \varepsilon$.
\end{remark}

The following lemma (Lemma~\ref{lem:sorting}) illustrates $\Tad$ concretely
by computing it for comparison-based sorting, showing that $\Tad \geq 1$
on incompressible permutations.

\begin{lemma}[Instance-sensitive lower bound for sorting]
\label{lem:sorting}
Let $A$ be any comparison-based sorting algorithm and $x$ a permutation of
$[n] = \{1, \ldots, n\}$. Let $R_A(x)$ denote the number of permutations
of $[n]$ that are \emph{indistinguishable from $x$} by $A$, i.e.\ that
share the same decision-tree leaf as $x$. In the comparison model (where
each comparison costs $O(1)$):
\[
  \TA{A}(x) \;\geq\; \log_2 \frac{n!}{R_A(x)}.
\]
For incompressible permutations $x$ with $\KC(x) \geq \log_2 n! - c_0$,
this gives $\TA{A}(x) \geq \KC(x) - O(1)$ and hence $\Tad(A, x) \geq 1 -
O(1/\KC(x))$. In the multi-tape Turing model of hypothesis~\ref{hyp:model},
each comparison requires $O(\log n)$ steps (to read two elements of the
permutation array), so the bound becomes $\TA{A}(x) = \Omega(\KC(x) \cdot
\log n / \log n) = \Omega(\KC(x))$; the conclusion $\Tad(A,x) \geq 1 - O(1/\KC(x))$
holds in either model up to logarithmic factors.
\end{lemma}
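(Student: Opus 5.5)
The plan is the standard decision-tree (information-theoretic) argument, followed by a Kolmogorov-complexity reading of the leaf count.

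\emph{Step 1: the decision tree.} Fix $n$ and view $A$, restricted to inputs that are permutations of $[n]$, as a binary comparison decision tree $\mathcal{T}_A$. Each internal node is a comparison; the outcome selects a child. For the run on $x$, let $\ell(x)$ be the leaf reached and $h(x)$ its depth, i.e.\ the number of comparisons performed. In the comparison model $\TA{A}(x) \geq h(x)$. By definition, $R_A(x)$ is the number of permutations reaching $\ell(x)$.

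\emph{Step 2: the leaf-depth bound.} I would not try to bound $h(x)$ by the global height, since that only gives the worst-case $\log_2 n!$. Instead I would argue per leaf. The permutations reaching $\ell(x)$ are exactly those consistent with the $h(x)$ answers on the root-to-leaf path. Each comparison at most halves the surviving set, so an inductive count gives $R_A(x) \geq n!/2^{h(x)}$. Rearranging yields $h(x) \geq \log_2(n!/R_A(x))$, which is the displayed inequality.

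\emph{Step 3: the incompressible case.} Since $A$ sorts correctly, any two distinct permutations at the same leaf would receive the same output rearrangement, so at most one can be sorted correctly. Hence $R_A(x) = 1$ and $\TA{A}(x) \geq \log_2 n!$. With $\KC(x) \leq \log_2 n! + O(\log n)$, this gives $\TA{A}(x) \geq \KC(x) - O(\log n)$, and dividing by $\KC(x)$ gives $\Tad(A,x) \geq 1 - o(1)$.

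I can only reach the stated additive $O(1)$ if I use the hypothesis $\KC(x) \geq \log_2 n! - c_0$ together with an upper bound $\KC(x) \leq \log_2 n! + O(1)$. Prefix complexity needs about $\log n$ extra bits to encode $n$, so I would either condition on $n$ or absorb the term as the statement's ``up to logarithmic factors'' permits.

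Independently of correctness, there is a cleaner route that avoids $R_A(x)=1$. Describe $x$ by $n$, the code for $A$, the path string of length $h(x)$, and the index of $x$ among the $R_A(x)$ permutations at the leaf. This gives $\KC(x) \leq h(x) + \log_2 R_A(x) + O(\log n)$, which I would use as a cross-check.

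\emph{Step 4: the Turing-model translation.} Each comparison needs $\Omega(1)$ steps and at most $O(\log n)$ steps to read two $\lceil\log_2 n\rceil$-bit entries. The lower bound only needs at least one step per comparison, so $\TA{A}(x) = \Omega(\KC(x))$ follows directly.

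The main obstacle is the additive slack in Step 3, namely matching $O(1)$ rather than $O(\log n)$ under prefix complexity. I would handle it by stating the bound for $n$ given, or by accepting the $O(\log n / \KC(x)) = o(1)$ correction, since $\KC(x) = \Theta(n\log n)$.
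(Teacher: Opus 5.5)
Your Step~2 is where the proposal breaks. A comparison splits the current candidate set $S$ into two parts, and only the \emph{larger} part is guaranteed to have at least $|S|/2$ elements. The input $x$ can follow the smaller part, so a single comparison may shrink its surviving set by an arbitrary factor. Hence $R_A(x)\geq n!/2^{h(x)}$ is not a per-leaf fact. Kraft's inequality $\sum_{\ell}2^{-h(\ell)}\leq 1$ gives only the worst-case and average-case statements $\max_x h(x)\geq\log_2 n!$ and $\frac{1}{n!}\sum_x h(x)\geq\log_2 n!$.

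In fact the first displayed inequality of the lemma is false when read per instance. By your own Step~3, $R_A(x)=1$ for every $x$, so it would force $h(x)\geq\log_2 n!$ on every input. Yet insertion sort on the already sorted input makes $n-1$ comparisons and reaches a leaf containing only the identity. The paper's proof has the same problem: it applies the standard worst-case information-theoretic bound at a single leaf, then passes to $\KC$ via an informal ``no redundant structure'' remark. So the intermediate claim $\TA{A}(x)\geq\log_2 n!$ in your Step~3 is unjustified, and false for some permutations, even though the final inequality you draw from it is true.

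The argument that works is the one you labelled a cross-check, and it should be the proof. Given $n$ and the $O(1)$-bit code of $A$, the string of comparison outcomes along the path of $x$ is self-delimiting, because the decoder walks the tree and halts at a leaf. A further $\lceil\log_2 R_A(x)\rceil$ bits single out $x$; these are also self-delimiting, since the decoder can compute the leaf's set. This gives $\KC(x\mid n)\leq h(x)+\log_2 R_A(x)+O(1)$ and $\KC(x)\leq h(x)+\log_2 R_A(x)+\KC(n)+O(1)$. With $R_A(x)=1$ you get $\TA{A}(x)\geq h(x)\geq\KC(x)-O(\log n)$ for \emph{every} permutation, without Step~2 and without any incompressibility hypothesis.

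For incompressible $x$ this yields $\Tad(A,x)\geq 1-O(\log n/\KC(x))$. The stated additive $O(1)$ is recovered if $\KC$ is conditioned on $n$, as you suspected. Your Step~4 then goes through, since each comparison costs at least one step. What cannot be proved is the displayed bound $\TA{A}(x)\geq\log_2(n!/R_A(x))$ for all $x$. It has to be replaced by this encoding bound or by its worst-case/average form.
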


\begin{proof}
\emph{Decision-tree lower bound.} Algorithm $A$ must perform at least
$\log_2(n!/R_A(x))$ comparisons to distinguish $x$ from the $R_A(x)$
permutations sharing its leaf. This is the standard information-theoretic
lower bound for comparison-based sorting~\cite{mehlhorn1979,estivill1992}.

\emph{Incompressible case.} For $x$ with $\KC(x) \geq \log_2 n! - c_0$,
the leaf containing $x$ in $A$'s decision tree must distinguish $x$ from
all but $R_A(x)$ permutations. Since $\KC(x)$ is near-maximal among
permutations of $[n]$ (the maximum is $\log_2 n! + O(\log n)$), $x$ is
not compressible by any short program, so the leaf cannot carry much
redundant structure: $\log_2(n!/R_A(x)) \geq \KC(x) - O(1)$. Dividing
by $\KC(x) \geq 1$ gives $\Tad(A,x) \geq 1 - O(1/\KC(x))$.
\end{proof}

\begin{remark}[Scope of the lower bound]
\label{rem:sorting-scope}
The bound $\TA{A}(x) \geq \log_2(n!/R_A(x))$ holds for \emph{all}
permutations and all comparison-based algorithms unconditionally. The
stronger bound $\TA{A}(x) \geq \KC(x) - O(1)$ is restricted to
\emph{incompressible} permutations: for highly compressible $x$ (e.g.\ the
identity permutation with $\KC(x) = O(\log n)$), $R_A(x)$ may be $1$ and
the first bound is tight while the second is vacuous. The framework does
not claim a uniform lower bound in $\KC(x)$ across all inputs.
\end{remark}

% ---------------------------------------------------------------
%  SECTION 11: Output Overhead Complexity
% ---------------------------------------------------------------
\section{Output Overhead Complexity}
\label{sec:overhead}

\begin{definition}[Output overhead complexity]
\label{def:OCout}
For an algorithm $A$ on input $x$ producing output $y = A(x)$, the
\emph{output overhead complexity} is
\[
  \OCout(A, x) \;=\; \TA{A}(x) - |A(x)|.
\]
\end{definition}

\begin{proposition}[Non-negativity]
\label{prop:OCout-nonneg}
$\OCout(A, x) \geq 0$ for all $A$ and $x$.
\end{proposition}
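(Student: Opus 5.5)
The plan is to argue directly from the model of computation fixed in hypothesis~\ref{hyp:model}, in the same way Example~\ref{ex:sep1}(ii) argues that producing $n$ output bits costs at least $n$ steps. Since $\OCout(A,x) = \TA{A}(x) - |A(x)|$, the statement is equivalent to the inequality $\TA{A}(x) \geq |A(x)|$. I would prove this for an arbitrary deterministic multi-tape machine $A$ and an arbitrary input $x$ on which $A$ halts. If $A$ does not halt on $x$, then $\TA{A}(x)$ is infinite and $A(x)$ is undefined, so those cases are excluded (or hold trivially under the convention $\infty - m \geq 0$).

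\emph{Step 1: fix the output convention.} The output $A(x)$ is the contents of a designated output tape, which I take to be write-only and one-way, as in the decompressors of Definition~\ref{def:gamma-C}. On such a tape each step writes at most one symbol and advances the head by at most one cell. If one instead allows the output to be read off a work tape at halting, the same count still works: a tape cell holding a non-blank symbol must have been visited by the head during some step, and in one step a head visits one cell.

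\emph{Step 2: count.} By induction on the number of steps $t$, after $t$ steps at most $t$ cells of the output tape are non-blank. The base case $t=0$ holds because the output tape starts blank. For the inductive step, a single transition changes at most one output cell. Hence when $A$ halts after $\TA{A}(x)$ steps, we have $|A(x)| \leq \TA{A}(x)$, which gives $\OCout(A,x) \geq 0$.

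\emph{Main obstacle.} The only delicate point is pinning down the output convention. If a single transition were allowed to emit a block of symbols, or the output were counted in units other than tape cells, the inequality could fail by a constant factor. I would therefore state explicitly that output length is measured in tape cells of the machine's alphabet, with each step writing at most one cell, consistent with hypothesis~\ref{hyp:model}. For a nonbinary tape alphabet encoding binary output, the bound holds up to a constant factor, and that factor can be absorbed by the convention used throughout the paper.
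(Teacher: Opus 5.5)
Your proposal is correct and is essentially the paper's argument: the paper notes that each write to the output tape costs one step in the multi-tape model, so $\TA{A}(x) \geq |A(x)|$. Your induction on the number of steps (output tape initially blank, at most one output cell changed per transition) is a careful formalisation of exactly that count.

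The closing constant-factor hedge for non-binary alphabets is not needed, and by itself it would not give the exact inequality. What makes the bound exact is the convention you already fix in Step~1, that output length is measured in cells with at most one cell written per step; this matches the paper's ``one step per output bit''.
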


\begin{proof}
Algorithm $A$ must write $A(x)$ as output, requiring at least $|A(x)|$
steps (one step per output bit in the standard multi-tape Turing model,
where each write to the output tape costs one step). Therefore
$\TA{A}(x) \geq |A(x)|$.
\end{proof}

\begin{remark}[Decompressor regime]
\label{rem:decompressor-regime}
When $A$ operates in the \emph{decompressor regime} --- that is, when $|x|
\leq \KC(A(x)) + c_0$ so that $x$ is a near-shortest description of $y =
A(x)$ --- the machine $A$ is itself a decompressor in the sense of
Definition~\ref{def:gamma-C}. Let $d_{A,x}$ denote the self-delimiting
description ``run $A$ on $x$'', of length $|A| + |x| + O(1)$. Since $|x|
\leq \KC(y) + c_0$, the string $d_{A,x}$ is a near-shortest description of
$y$ (up to the constant $|A| = O(1)$ for the fixed machine $A$). By
Theorem~\ref{thm:A}, $U$ simulates $A$ with polynomial overhead:
\[
  \gam(y) \;\leq\; \TU(d_{A,x}) \;\leq\; \poly(|x|) \cdot \TA{A}(x).
\]
Therefore $\TA{A}(x) \geq \gam(y)/\poly(|x|)$, giving a lower bound on
$\OCout(A, x) + |y|$. Outside the decompressor regime (when $x$ is long
and redundant relative to $y$), no such lower bound applies.
\end{remark}

\begin{definition}[Decompressor efficiency ratio]
\label{def:rho}
For an algorithm $A$ in the decompressor regime producing output $y = A(x)$:
\[
  \rho(A, x) \;=\; \frac{\TA{A}(x)}{\gam(y)}.
\]
By Remark~\ref{rem:decompressor-regime}, $\rho(A, x) \geq 1/\poly(|x|)$.
When $A$ is an optimal decompressor, $\rho(A, x) = O(1)$.
\end{definition}

\begin{example}[Decompressor efficiency ratio: PRNG]
\label{ex:rho-prng}
Let $G : \{0,1\}^k \to \{0,1\}^n$ be a pseudorandom generator.
In the decompressor regime, algorithm $A$ receives as input a
near-shortest description of the expanded string; here that description
is the seed $s \in \{0,1\}^k$, and $A$ outputs $y = G(s)$ in time
$T_G \leq \beta n$. By Example~\ref{ex:sep1}, $\gam(y) = \Theta(n)$.
Therefore
\[
  \rho(A, s) \;=\; \frac{\TA{A}(s)}{\gam(y)} \;=\; \frac{T_G}{\Theta(n)} \;=\; O(1).
\]
The PRNG achieves $\rho = O(1)$: it is optimal in the sense of
Definition~\ref{def:rho}, performing within a constant factor of
the minimum decompression cost $\gam(y)$.
\end{example}

% ---------------------------------------------------------------
%  SECTION 12: Structural Entropy
% ---------------------------------------------------------------
\section{Structural Entropy}
\label{sec:entropy}

\begin{definition}[Structural entropy]
\label{def:Hs}
For a string $y \in \{0,1\}^*$ with $|y| \geq 2$, the \emph{structural
entropy} of $y$ is
\[
  \Hs(y) \;=\; \frac{\KC(y)}{\log_2 |y|}.
\]
$\Hs(y)$ measures the information density of $y$ relative to its size:
how compressible $y$ is relative to a logarithmic baseline. $\Hs$ ranges
from $O(1)$ (highly structured: $\KC(y) = O(\log|y|)$) to
$\Theta(|y|/\log|y|)$ (random: $\KC(y) = \Theta(|y|)$).

\emph{Cryptographic interpretation}: $\Hs(y)$ low means $y$ has a compact
description; $\Hs(y)$ high means $y$ is informationally dense. Low $\Hs$
is a necessary condition for $y$ to benefit from compressed storage, but not
sufficient for efficient decompression
(Proposition~\ref{prop:Hs-gam}(ii)): low $\Hs$ and low $\gP$ are both
required for a solution to be compactly stored and efficiently recovered.
\end{definition}

\begin{proposition}[Basic properties of $\Hs$]
\label{prop:Hs}
\begin{enumerate}[label=(\roman*)]
  \item $\Hs(y) \geq 0$ for all $y$ with $|y| \geq 2$. For
        incompressible strings with $\KC(y) = \Theta(|y|)$:
        $\Hs(y) = \Theta(|y|/\log|y|) \to \infty$ as $|y| \to \infty$.
  \item \textnormal{(\emph{Standard counting argument;} see~\cite{liVitanyi2008},
        Theorem~2.2.1.)} For a uniformly random string $y$ of length $n$
        and every constant $c > 0$, the probability that $\KC(y) \geq n - c$
        is at least $1 - 2^{-c}$. In particular $\Hs(y) \to n/\log n$ as
        $n \to \infty$ for all but a $o(1)$ fraction of strings.
  \item For a string with period $p \leq \log n$:
        $\KC(y) = O(p + \log n) = O(\log n)$, so $\Hs(y) = O(1)$.
\end{enumerate}
\end{proposition}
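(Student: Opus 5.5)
The three items are independent, and I will prove them in order. Each uses only the definition $\Hs(y)=\KC(y)/\log_2|y|$, the identity-description upper bound $\KC(y)\leq |y|+O(1)$ from the proof of Proposition~\ref{prop:ub-trivial}, and the standard counting of programs already used in Lemma~\ref{lem:incompressible-subfamily} and Example~\ref{ex:sep2}.

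For (i), non-negativity is immediate: $\KC(y)\geq 0$ since it is a program length, and $\log_2|y|\geq 1>0$ whenever $|y|\geq 2$. For the incompressible case, suppose $c_1|y|\leq \KC(y)\leq c_2|y|$. Dividing by $\log_2|y|$ gives $\Hs(y)=\Theta(|y|/\log|y|)$, and this diverges because $n/\log n\to\infty$. The upper constant $c_2$ comes from the identity description, exactly as in Proposition~\ref{prop:lb2}.

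For (ii), I will first repeat the counting argument. There are at most $\sum_{k<n-c}2^k<2^{n-c}$ programs of length $<n-c$, and each outputs at most one string. Hence at most $2^{n-c}$ of the $2^n$ strings of length $n$ have $\KC(y)<n-c$, which gives the probability bound $1-2^{-c}$. For the asymptotic clause, I will let the constant grow slowly, taking $c=c(n)=\lceil\log_2\log_2 n\rceil$ (any $c(n)\to\infty$ with $c(n)=o(n)$ works). Then the exceptional fraction $2^{-c(n)}$ is $o(1)$. Every non-exceptional $y$ satisfies
\[
  n-c(n)\;\leq\;\KC(y)\;\leq\;n+O(1).
\]
Dividing by $\log_2 n$ gives $\Hs(y)=\frac{n}{\log_2 n}\,(1+o(1))$, which is the intended meaning of ``$\Hs(y)\to n/\log n$'': the ratio tends to $1$.

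For (iii), let $|y|=n$ and let $u=y_1\cdots y_p$ be the period, so that $y$ is the length-$n$ prefix of $uuu\cdots$. I will build a self-delimiting description from three parts. The first is a fixed $O(1)$-bit routine ``repeat $u$ cyclically and stop after $n$ symbols''. The second is a self-delimiting encoding of $p$ and $n$, costing $O(\log p+\log n)=O(\log n)$ bits. The third is the $p$ raw bits of $u$. This gives $\KC(y)\leq p+O(\log n)=O(\log n)$ when $p\leq\log n$, and dividing by $\log_2 n$ yields $\Hs(y)=O(1)$.

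The only step needing care is the asymptotic reading of (ii). The upper bound on $\KC$ and the lower bound from counting must match to first order, and the literal statement has to be interpreted as a ratio limit rather than an equality. If a strictly prefix-free convention forces the identity description up to $n+O(\log n)$, the conclusion is unchanged, because $O(\log n)/\log n$ is still only a constant additive term against the main term $n/\log n$.
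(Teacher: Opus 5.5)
Your proposal is correct and follows the paper's proof essentially step for step: non-negativity of $\KC$ plus dividing the two-sided $\Theta(|y|)$ bound by $\log_2|y|$ for (i), the count of fewer than $2^{n-c}$ short programs for (ii), and the period-plus-length encoding giving $\KC(y)\le p+O(\log n)$ for (iii). Your additions fill in details the paper's proof leaves implicit without changing the route: the slowly growing $c(n)$ that justifies the ``in particular'' clause of (ii) as the ratio statement $\Hs(y)=\frac{n}{\log_2 n}(1+o(1))$, the explicit self-delimiting encoding of $p$ in (iii), and the observation that an $n+O(\log n)$ prefix-free identity bound leaves the conclusion unchanged.
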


\begin{proof}
\emph{(i)} $\KC(y) \geq 0$, so $\Hs(y) \geq 0$. For incompressible $y$:
$\KC(y) \geq |y| - O(1)$, so $\Hs(y) = \KC(y)/\log|y| \geq (|y| -
O(1))/\log|y| = \Theta(|y|/\log|y|)$.

\emph{(ii)} A counting argument: there are $2^n$ strings of length $n$ but
only $\sum_{i < n-c} 2^i < 2^{n-c}$ self-delimiting programs of length
$< n-c$, so at most $2^{n-c}$ strings have $\KC(y) < n-c$. The probability
that a uniformly random $y$ satisfies $\KC(y) < n-c$ is at most $2^{-c}$.

\emph{(iii)} A string with period $p$ is determined by its period (a string
of length $p$) and its total length $n$ (encoded in $O(\log n)$ bits), so
$\KC(y) \leq p + O(\log n) = O(\log n)$ when $p \leq \log n$.
\end{proof}

\begin{proposition}[$\Hs$ and $\gam$: demonstrated separations]
\label{prop:Hs-gam}
The following separations hold:
\begin{enumerate}[label=(\roman*)]
  \item \textnormal{(Unconditional.)} $\Hs(y) = O(1)$ does not imply
        $\gam(y) = O(\KC(y))$: there exist strings with low information
        density whose decompression cost far exceeds their description length.
  \item \textnormal{(Unconditional.)} $\Hs(y) = O(1)$ does not imply
        $\gP(y) < \infty$: low information density is not sufficient
        for polynomial-time decompression.
\end{enumerate}
Whether $\Hs(y) = O(1)$ implies $\gam(y) = O(\poly(|y|))$ is open.
The converse direction --- whether large $\gam$ forces large $\Hs$ --- is
also open; see the proof below.
\end{proposition}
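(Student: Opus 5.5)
The plan is to use a single witness family for both parts: the all-zeros strings $y_n = 0^n$. For these strings Example~\ref{ex:gamma-cases}(ii) already supplies the relevant bounds, and Proposition~\ref{prop:Hs}(iii) covers the structural-entropy side. First I would record that $y_n$ has period $p=1 \le \log n$. Proposition~\ref{prop:Hs}(iii) then gives $\KC(y_n) = O(\log n)$ and hence $\Hs(y_n) = O(1)$. This holds uniformly in $n$, so the whole family lies in the low-density regime of both hypotheses.

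For part~(i), the paragraph in Example~\ref{ex:gamma-cases}(ii) gives the upper half, $\gam(y_n) = O(n)$. The lower half needs $\gam(y_n) \geq n$, and for this I would invoke the output-writing argument used in Example~\ref{ex:sep1}(ii) and Proposition~\ref{prop:OCout-nonneg}: every description $d$ with $U(d)=y_n$ forces $U$ to write $n$ output bits, so $\TU(d) \geq n$. Taking the minimum over near-shortest $d$ gives $\gam(y_n) \ge n$. Comparing, $\gam(y_n)/\KC(y_n) = \Omega(n/\log n) \to \infty$, so $\gam(y_n) \neq O(\KC(y_n))$ even though $\Hs(y_n)=O(1)$. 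This step is routine.

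For part~(ii), I would follow the reasoning of Example~\ref{ex:gamma-C-crypto}(iii) and Remark~\ref{rem:KC-condition}. Any admissible description $d$ in Definition~\ref{def:gamma-C} satisfies $|d| \le \KC(y_n) + c_0 = O(\log n)$. A decompressor $M \in \classP$ runs in time at most $p_M(|d|) = \poly(\log n)$, so for large $n$ it halts before writing $n$ bits. Therefore $M(d) \ne y_n$, no admissible pair exists, and by the convention of Definition~\ref{def:gamma-C} we get $\gP(y_n) = \infty$.

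The main obstacle is the quantifier order in the last step. For a single fixed string, a polynomial-time machine with a sufficiently large exponent, or with $y_n$ hard-wired, could in principle run for $n$ steps on a $\Theta(\log n)$-bit input. I would therefore make explicit that, as in Example~\ref{ex:gamma-C-crypto}(iii) and Remark~\ref{rem:gP-finite}, membership in $\classP$ is read with a fixed polynomial bound in $|d|$ along the family $\{y_n\}$, that is, $\gP(y_n)$ is superpolynomially infinite asymptotically. Under that reading, $n > p_M(O(\log n))$ for every fixed polynomial $p_M$ and all large $n$ closes the argument. The open questions at the end of the statement, whether $\Hs = O(1)$ implies $\gam = \poly(|y|)$ and the converse direction, are not claimed, so I would only note that the family $0^n$ satisfies $\gam = O(|y|)$ and so gives no counterexample to either.
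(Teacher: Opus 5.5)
Your part~(i) is the paper's argument: the family $0^n$ with $\KC(0^n)=O(\log n)$ and $\Hs(0^n)=O(1)$. Your output-writing bound $\gam(0^n)\ge n$ is in fact the only half of $\gam(0^n)=\Theta(n)$ that the separation needs.

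For part~(ii) you use a different witness. The paper takes a PRNG output $x=G(s)$ with $|s|=k$ and stretch $n=2^{k^2}$. This gives $\Hs(x)\le (k+O(1))/k^2$, and no polynomial-time machine on an input of length about $k$ can write $2^{k^2}$ bits. You reuse $0^n$, whose admissible descriptions have length $O(\log n)$, so the output is already exponentially longer than any admissible input. The mechanism is the same: output length is superpolynomial in the admissible description length, so a time bound polynomial in $|d|$ cannot even pay for writing the output. The paper's argument never uses the pseudorandomness of $G$ or the incompressibility of $s$. Your single family for both parts is therefore the more economical route; the paper's choice only buys the link to the key-expansion scenario of Example~\ref{ex:gamma-C-crypto}(iii).

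Your handling of the quantifiers is also more careful than the paper's. For any single string $y$, the machine that reads its input and then prints a hard-wired copy of $y$ runs in linear time. So $\gP(y)\le |d|+|y|+O(1)<\infty$, and the literal per-string claim $\gP(y)=\infty$ is false. The paper's sentence ``no polynomial-time machine with input of length $k$ can write $n$ output bits'' silently uses exactly the reading you make explicit: fix $M\in\classP$ with its polynomial, then take all large parameters. That uniform statement is what both proofs establish.

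One correction to your incidental claims: drop the appeal to $\gam(0^n)=O(n)$ and the closing remark.
\begin{itemize}
\item Example~\ref{ex:gamma-cases}(ii) only exhibits an $O(n)$-time description carrying $n$ in binary, and that description need not be near-shortest (e.g.\ when $\KC(n)\ll\log n$).
\item The bound genuinely fails for infinitely many $n$. Suppose $\gam(0^n)\le p(n)$ held for a polynomial $p$ and all large $n$. Then the algorithm in the proof of Theorem~\ref{thm:Bprime}, run on inputs $0^n$ only, would compute $\KC(0^n)=\KC(n)+O(1)$ up to an additive constant.
\item Searching for the least $n$ whose computed value is at least $m$ would then produce an $n$ with $\KC(n)\ge m-O(1)$. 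Since that $n$ is computed from $m$, also $\KC(n)\le\KC(m)+O(1)=O(\log m)$, a contradiction for large $m$.
\end{itemize}
Hence $\{0^n\}$ has $\Hs=O(1)$ while $\gam$ is not polynomially bounded along it. Far from giving no counterexample, this family settles the first ``open'' question negatively. None of this affects your proofs of~(i) and~(ii), which use only $\gam(0^n)\ge n$ and $|d|=O(\log n)$.
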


\begin{proof}
\emph{Part~(i), unconditional.} The string $x = 0^n$ has $\KC(x) = O(\log n)$,
$\Hs(x) = O(1)$, and $\gam(x) = \Theta(n)$
(Example~\ref{ex:gamma-cases}(ii)). Since $\gam(x) = \Theta(n) \gg
O(\log n) = O(\KC(x))$, low $\Hs$ does not imply $\gam = O(\KC)$.
Note that $\gam(0^n) = \Theta(n)$ remains polynomially bounded; whether
low $\Hs$ implies polynomially bounded $\gam$ is an open problem.

\emph{Part~(ii), unconditional.}  Let $k \geq 1$ be a parameter and let
$G : \{0,1\}^k \to \{0,1\}^n$ be a pseudorandom generator with seed
length $k$ and output length $n = 2^{k^2}$. Let $x = G(s)$ for an
incompressible seed $s$ with $\KC(s) \geq k - c_0$
(Example~\ref{ex:gamma-C-crypto}(iii)).
\begin{enumerate}[label=(\alph*)]
  \item $\KC(x) \leq \KC(s) + O(1) \leq k + O(1)$, and
        $\log|x| = \log n = k^2$, so
        $\Hs(x) = \KC(x)/\log|x| \leq (k + O(1))/k^2 = O(1/k) = O(1)$.
  \item $n = 2^{k^2}$ is superpolynomial in $k$: for every constant $c$,
        $k^c < 2^{k^2}$ for all sufficiently large $k$. No polynomial-time
        machine with input of length $k$ can write $n$ output bits, so
        $\gP(x) = \infty$ (Example~\ref{ex:gamma-C-crypto}(iii)).
  \item $\gam(x) = \Theta(n)$, finite and polynomially bounded in $n$
        (Example~\ref{ex:sep1}).
\end{enumerate}
This is consistent with Remark~\ref{rem:gam-gP}: $\gam(x)$ is finite
while $\gP(x) = \infty$, achieved by a non-polynomial decompressor.
Therefore $\Hs(x) = O(1)$ and $\gP(x) = \infty$, unconditionally.

\emph{Converse direction (open)}: whether large $\gam$ forces large $\Hs$
is not known. For incompressible $x$ with $\KC(x) = \Theta(|x|)$,
both $\Hs(x) = \Theta(|x|/\log|x|)$ and $\gam(x) = \Theta(|x|)$ are
large (Example~\ref{ex:sep2}), but this does not rule out strings with
low $\gam$ and high $\Hs$.
\end{proof}

\begin{remark}[Asymmetry between parts~(i) and~(ii)]
\label{rem:Hs-asymmetry}
Parts~(i) and~(ii) of Proposition~\ref{prop:Hs-gam} exhibit an asymmetry
worth noting. Part~(i) shows that $\Hs(x) = O(1)$ is compatible with
$\gam(x) = \Theta(n)$, which is polynomially bounded in $|x|$.
Part~(ii) shows that the same condition $\Hs(x) = O(1)$ is compatible
with $\gP(x) = \infty$: the PRNG example of part~(ii), combined with
Example~\ref{ex:sep1}, gives simultaneously $\gam(x) = \Theta(n)$
(finite and polynomially bounded) and $\gP(x) = \infty$ (no
polynomial-time decompressor exists). This is consistent with
Remark~\ref{rem:gam-gP}: $\gam(x)$ is achieved by a non-polynomial
decompressor, while $\gP(x) = \infty$ because no polynomial-time
decompressor can reconstruct $x$ from a near-shortest description.
Together, the two parts show that $\Hs$ captures neither $\gam$ nor
$\gP$: low information density is compatible with any combination of
finite or infinite decompression cost, depending on whether the
decompressor is restricted to $\classP$ or not.
\end{remark}

\section{Summary: Four Quantities}
\label{sec:summary}

Parts~I and~II introduce four quantities. Two are \emph{object-intrinsic}
invariants ($\gam$ and $\Hs$); two are \emph{algorithm-dependent} measures
($\Tad$ and $\OCout$). Table~\ref{tab:four-quantities} summarises them.

\begin{table}[h]
\caption{The four quantities introduced in this paper.
  $\gam$ and $\Hs$ are object-intrinsic invariants;
  $\Tad$ and $\OCout$ are algorithm-dependent measures.
  All four are defined in terms of $\KC$ and the running-time notation
  of hypothesis~\ref{hyp:model}.}
\label{tab:four-quantities}
\begin{center}
\renewcommand{\arraystretch}{1.3}
\begin{tabular}{@{}lp{4.2cm}lp{4.6cm}@{}}
  \toprule
  \textbf{Symbol} & \textbf{Definition} & \textbf{Object} &
  \textbf{What it measures} \\
  \midrule
  $\gam(x)$ &
  $\displaystyle\min_{\substack{d\,:\,|d|\leq\KC(x)+c_0}} \TU(d)$ &
  Input &
  Min.\ decompression time from a near-shortest description \\
  $\Tad(A,x)$ & $\TA{A}(x)/\KC(x)$ & Input &
  Work per bit of genuine input information \\
  $\OCout(A,x)$ & $\TA{A}(x) - |A(x)|$ & Output &
  Overhead beyond writing the output \\
  $\Hs(y)$ & $\KC(y)/\log_2|y|$ & Output &
  Information density of the solution \\
  \bottomrule
\end{tabular}
\end{center}
\end{table}

\paragraph{Invariance.}
$\gam$ is invariant up to polynomial factors across universal machines
(Theorem~\ref{thm:A}). $\Hs$ is invariant up to $O(1/\log|y|)$ additive
terms, since $\KC$ changes by at most $c_0$ across universal machines and
$\log|y|$ is fixed. $\Tad$ and $\OCout$ depend additionally on the
algorithm $A$.

\paragraph{Independence.}
Low $\Hs$ does not imply $\gam = O(\KC)$ (Proposition~\ref{prop:Hs-gam}(i)),
and does not imply $\gP < \infty$ unconditionally
(Proposition~\ref{prop:Hs-gam}(ii)). Whether low
$\Hs$ implies polynomially bounded $\gam$ is open. Their joint behaviour
characterises instance difficulty:
\begin{itemize}
  \item \emph{Structured instances} (industrial SAT, structured TSP): when
        solutions are structurally guided in the sense of
        Definition~\ref{def:guided} (low $\Hs$ of solutions, polynomial
        decompression time relative to instance size), Theorem~D guarantees
        polynomial-time solvability unconditionally.
  \item \emph{Random instances} (random SAT near threshold, worst-case TSP):
        high $\Hs$ of solutions; under $\PneqNP$, Theorem~B establishes
        superpolynomial $\gP$ (in the strict sense of Definition~\ref{def:gamma-C})
        for the explicit family $\mathcal{F}_{c_0}$.
\end{itemize}
The practical/worst-case gap in NP problems is reflected in the contrast
between the two regimes: structured instances satisfy the conditions of
Definition~\ref{def:guided}, while random instances do not.

\paragraph{Computability.}
$\gam$ and $\Hs$ are not computable (since both are defined via $\KC$, which
is incomputable by the classical theorem of~\cite{liVitanyi2008}; see also
Remark~\ref{rem:gam-incomputable}). Practical use requires computable proxies:
LZ-based approximations for $\KC$, empirical running times for $\gam$, and
compressor output sizes for $\Hs$.

\section*{Conclusion}
\addcontentsline{toc}{section}{Conclusion}

The gap identified in Section~\ref{sec:intro:gap}, between the
existence of a short description and the computational cost of using it,
has been formalised, characterised, and connected to the central open
problem of complexity theory.

What this paper contributes is not a new angle on $\PeqNP$ but a new
\emph{quantity}: $\gP(x)$, the minimum cost of recovering $x$ from a
near-shortest description via a polynomial-time decompressor. This
quantity has three properties that, to the authors' knowledge, no prior
measure combines: it is machine-independent (Theorem~A); it is
unconditionally non-trivial, in the sense that no polynomial bounds it
everywhere (Theorem~B'); and it admits an exact biconditional
characterisation of $\PeqNP$ among certificates with $\KC(x_w) = \Omega(|w|)$,
in the standard Turing model (Theorem~C, Remark~\ref{rem:KC-condition}).
For certificates with $\KC(x_w) \ll |w|$, $\gP = \infty$ unconditionally ---
itself a manifestation of constrained inaccessibility.

The four-quantity framework of Parts~I and~II --- $\gam$, $\Tad$,
$\OCout$, $\Hs$ --- provides a language for a distinction that standard
complexity theory does not make: between instances that are hard in the
worst case and instances that are tractable in practice because their
solutions carry low information density and are efficiently
decompressible. Theorem~D establishes polynomial-time tractability
unconditionally for any family satisfying the structural conditions
of Definition~\ref{def:guided}.

The open questions of Section~\ref{sec:open} mark the natural boundary
of what the framework currently establishes. The most immediate is
Question~Q5: a quantitative relation between $\gam$ and $\Kbt$ would
clarify the position of $\gam$ within the spectrum
$\KC \to \Kbt \to \Kt \to \gam$ and is the missing piece for a complete
picture of how the four classical measures and $\gam$ relate to one
another.

\section*{Further Directions}
\addcontentsline{toc}{section}{Further Directions}

This paper will be extended to show that the results
established here subsume those obtained independently
in the author's work published this month (see \cite{buono2026}).
Their unification gives rise to a new class of phenomena,
of which constrained inaccessibility is the foundational
instance: the existence of a proof for a problem whose
solution does not belong to the space of admissible
solutions. To the authors' knowledge, no analogous
phenomenon has been identified in the existing literature.
The relationship of $\PeqNP$ to this class, and the precise
sense in which this differs from classical undecidability
and computational intractability, are examined in
forthcoming work.

\appendix

\section{Classical Measures as Limiting Regimes; Grammar Gap Details}
\label{sec:appendix-A}

This appendix has two parts.
The first part (Section~\ref{sec:appendix-A-classical}) derives the four
classical measures ($\KC$, $\Kbt$, $\Kt$, $H$) as limiting regimes of $\gam$
(cited in Section~\ref{sec:discussion}).
The second part contains the full proof of Lemma~\ref{lem:grammar}
(Grammar complexity gap), of which a proof sketch appears in
Section~\ref{sec:grammar}.

\subsection{Classical measures as limiting regimes of $\gam$}
\label{sec:appendix-A-classical}

The four classical information measures --- Kolmogorov complexity $\KC$,
time-bounded Kolmogorov complexity $\Kbt$, Levin's $\Kt$, and Shannon
entropy $H$ --- each arise as a limiting or relaxed regime of $\gam$.

\paragraph{$\KC$ as the length-constrained foundation.}
$\gam(x)$ imposes two constraints: descriptions must be near-shortest
($|d| \leq \KC(x) + c_0$) and the minimisation is over running time.
Removing the near-shortest constraint entirely --- minimising $\TU(d)$
over \emph{all} programs $d$ with $U(d) = x$ --- does not simply recover
$\KC(x)$: the minimum running time over all descriptions lies in the range
$[\KC(x),\, |x| + O(1)]$, but need not equal $\KC(x)$. (For $x = 0^n$,
the shortest program has length $O(\log n)$ but runs in $\Theta(n)$ steps,
so the minimum running time is $\Omega(\log n)$, not $O(\log n)$.)
What the near-shortest constraint \emph{does} do is force the minimisation
to range over descriptions whose length is provably close to $\KC(x)$,
isolating the computational cost of \emph{compact} encodings from the cost
of verbose but fast ones.

\paragraph{$\Kbt$ as the dual of $\gam$.}
$\Kbt(x) = \min\{|d| : U(d) = x \text{ in } \leq t \text{ steps}\}$ fixes
time and minimises length. $\gam(x)$ fixes near-minimal length and minimises
time. They are dual in direction: every lower bound on $\Kbt$ constrains the
set of short programs that halt within $t$ steps, while every lower bound on
$\gam$ constrains the set of fast programs that are near-shortest. The
trade-off is captured by the open problem in Question~Q5
(Section~\ref{sec:open}).

\paragraph{$\Kt$ as a weighted single-objective relaxation.}
Levin's $\Kt(x) = \min_d \{|d| + \log_2 \TU(d) : U(d) = x\}$ trades off
length and log-time in a single objective. $\gam(x)$ separates the two: it
imposes a hard constraint on length ($|d| \leq \KC(x) + c_0$) and then
minimises time exactly. When the description length and running time
trade-off is smooth, minimising $|d| + \log_2 \TU(d)$ produces a solution
near the Pareto frontier of (length, time). The near-shortest constraint in
$\gam$ pins to the shortest end of this frontier.

\paragraph{Shannon entropy as the distributional limit.}
Shannon entropy $H(X) = -\sum_x p(x) \log_2 p(x)$ is the expected
description length under an optimal code for source $X$. For a uniformly
random string $X$ of length $n$, $\mathbb{E}[\KC(X)] = n + O(1)$ and
$H(X) = n$, so the two agree up to $O(1)$ in that case
\cite[Theorem~2.1.1]{liVitanyi2008}. For stationary ergodic sources,
$\KC(X_1 \cdots X_n)/n \to H$ almost surely
\cite[Theorem~4.5.3]{liVitanyi2008}, where $H$ is the entropy rate.
$\gam$ is the pointwise (per-string) counterpart that adds running time:
it asks about the decompression cost of a \emph{specific} object $x$, not
the expected cost over a distribution. Shannon entropy has no analogous
pointwise extension that captures computational cost.

\subsection{Full proof of Lemma~\ref{lem:grammar}: grammar complexity gap}

\begin{proof}
We verify the three claims of Lemma~\ref{lem:grammar} in detail.

\paragraph{Setup.}
Over the alphabet $\{a, b\}$ (identified with $\{0,1\}$ via $a \mapsto 0$,
$b \mapsto 1$), define
\[
  x = a^1\,b\;a^2\,b\;\cdots\;a^n\,b,
  \qquad
  N = |x| = \sum_{i=1}^n (i+1) = \tfrac{n(n+3)}{2} = \Theta(n^2).
\]

\paragraph{Proof of (i): $g^*(x) = \Theta(n)$.}

\emph{Upper bound.} The grammar $G^*$ defined in the proof sketch has
$2n + 2$ rules ($1$ for $S$, $n-1$ for $B_i$ with $1 \leq i < n$, $1$ for
$B_n$, $n$ for $A_i$ with $1 \leq i \leq n$, $1$ for $A_0$). Thus
$|G^*| = 2n + 2 = O(n)$.

\emph{Lower bound.} Since $G^*$ is an SLP, every non-terminal generates a
\emph{fixed} substring of $x$ (determined by its unique production). The
$n$ blocks $a^1, a^2, \ldots, a^n$ are pairwise non-overlapping substrings
of $x$ with pairwise distinct lengths $1, 2, \ldots, n$. In any SLP
generating $\{x\}$, two distinct blocks $a^i$ and $a^j$ ($i \neq j$) cannot
be generated by the same non-terminal: that non-terminal would expand to the
same fixed string in every occurrence, but $a^i \neq a^j$. Therefore the
SLP requires at least $n$ distinct non-terminals (one per block length), so
$g^*(x) \geq n = \Omega(\sqrt{N})$.

\emph{Note on generality.} This argument applies to SLPs. For general CFGs,
the lower bound requires a separate argument (since non-terminals in a CFG
may generate different strings depending on context); we omit this
generalization as the paper works throughout with SLPs
(Theorem~\ref{thm:GJL}).

\paragraph{Proof of (ii): $\Tder(G^*) = \Omega(n)$.}

The derivation of the block $a^i$ via $A_i$ requires the chain
$A_i \Rightarrow a A_{i-1} \Rightarrow a^2 A_{i-2} \Rightarrow \cdots
\Rightarrow a^i A_0 \Rightarrow a^i$, a chain of $i+1$ steps. The
total number of derivation steps for the entire string is
$\sum_{i=1}^n (i+1) + n = \Theta(n^2) = \Theta(N)$. In particular the
maximum chain depth is $n+1 = \Omega(n) = \Omega(\sqrt{N})$.

\paragraph{Proof of (iii): existence of $G'$ with $|G'| = O(n)$ and
$\Tder(G') = O(\log N)$.}

$G^*$ is a straight-line program (SLP): every non-terminal has just one
production, and the grammar is acyclic (productions only reference
non-terminals of strictly smaller index). Applying
Theorem~\ref{thm:GJL} (Ganardi--Je\.{z}--Lohrey~\cite{ganardiJezLohrey2019})
to $G^*$ yields an equivalent SLP $G'$ with:
\begin{enumerate}[label=(\alph*)]
  \item $|G'| = O(|G^*|) = O(n)$, so $|G'| \leq g^*(x) + c_g$ for the
        constant $c_g$ of Definition~\ref{def:gamma-cfg} and sufficiently
        large $n$; and
  \item $\Tder(G') = O(\log N) = O(\log n)$.
\end{enumerate}
The transformation preserves $L(G') = L(G^*) = \{x\}$ by correctness of
the GJL algorithm~\cite{ganardiJezLohrey2019}. Therefore $G'$ is
near-minimal and has logarithmic derivation depth.
\end{proof}

\section{Full Proof of Theorem~B': Unconditional Lower Bound on $\gam$}
\label{sec:appendix-B}

We prove Theorem~\ref{thm:Bprime} in full.

\begin{theorem*}[B' restated]
For every polynomial $p$, there exists $y \in \{0,1\}^*$ such that
$\gam(y) > p(|y|)$. Consequently, $\gam$ is not polynomially bounded
on $\{0,1\}^*$.
\end{theorem*}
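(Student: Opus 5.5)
The plan is a direct diagonalisation against time-bounded descriptions. This is the standard construction of ``deep'' strings. It uses incomputability only in spirit, and it avoids the route of the sketch above, which would have to certify the exact minimum length among fast programs.

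Fix a polynomial $p$ and a length $n$. Define $y_n$ as follows: enumerate all self-delimiting programs $d$ with $|d| < n$, run $U$ on each for at most $p(n)$ steps, and collect the finitely many outputs of length $n$ produced within that budget. There are fewer than $2^n$ such programs, so some string of length $n$ is not among the collected outputs. Let $y_n$ be the lexicographically first such string.

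The first key step is to observe that the map $n \mapsto y_n$ is computable, since the procedure halts for every $n$. Hence a fixed program $\pi_p$ together with a self-delimiting encoding of $n$ produces $y_n$, which gives
\[
  \KC(y_n) \;\leq\; 2\log_2 n + O(1).
\]
The constant depends on $p$, but not on $n$.

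The second step is the lower bound on $\gam(y_n)$. Take any near-shortest description $d$ of $y_n$. Then $|d| \leq \KC(y_n) + c_0 \leq 2\log_2 n + O(1) < n$ for all sufficiently large $n$. By construction of $y_n$, no program of length $< n$ outputs $y_n$ within $p(n)$ steps, so $\TU(d) > p(n)$. Minimising over the admissible $d$ in Definition~\ref{def:gamma} yields $\gam(y_n) > p(n) = p(|y_n|)$. The minimum is over a non-empty finite set, so it is attained and the strict inequality survives.

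The main obstacle is not the construction but its consistency with Proposition~\ref{prop:ub-trivial}. That proposition asserts $\gam(x) \leq n + O(1)$ for every $x \in \{0,1\}^n$, which would contradict the present statement for $p(n) = 2n$. I would resolve this by pointing out that its argument is valid only when the identity description is near-shortest, i.e.\ when $\KC(x) \geq n - O(1)$, as in Proposition~\ref{prop:lb2}. For the highly compressible $y_n$ constructed here, the identity description has length $n + O(1) \gg \KC(y_n) + c_0$ and is therefore not admissible. So the proof should include a remark restricting Proposition~\ref{prop:ub-trivial} to the incompressible regime.

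I would also note that the argument needs no appeal to the $\PeqNP$ question, and that the same $y_n$ shows $\gam$ cannot be bounded by any computable function of $|y|$.
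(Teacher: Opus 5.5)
Your argument is correct, and it takes a genuinely different route from the paper's. The paper argues by contradiction. Assuming $\gam(y)\le p(|y|)$ for all $y$, it runs every program of length at most $n+O(1)$ for $p(n)$ steps and outputs the minimum length $m^*$ among those that print $y$. The hypothesis forces $\KC(y)\le m^*\le \KC(y)+c_0$, so $\KC$ would be computable up to an additive constant, contradicting the classical incomputability theorem.

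That route therefore does not need to certify the exact minimum, as you suspected. It does need incomputability of $\KC$ \emph{up to an additive constant}, which is slightly more than the exact incomputability the paper cites; it is still a standard Berry-paradox fact.

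You instead carry out the Berry-type diagonalisation yourself. The lexicographically first $y_n$ of length $n$ missed by every program shorter than $n$ within $p(n)$ steps has $\KC(y_n)\le 2\log_2 n+O(1)$. So all its near-shortest descriptions are shorter than $n$, and hence each runs for more than $p(n)$ steps.

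The comparison is as follows. The paper's proof is shorter, given the black-box theorem. Yours is self-contained, works at every sufficiently large length, and shows the witnesses can be taken extremely compressible. It is also constructive, since $n\mapsto y_n$ is computable. This means the non-constructiveness noted in Remark~\ref{rem:Bprime-compatible} comes from the paper's proof, not from the statement. Both arguments extend verbatim to any computable time bound in place of $p$.

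Your diagnosis of Proposition~\ref{prop:ub-trivial} is also correct. Its proof passes from $\KC(x)\le n+O(1)$ to $n+O(1)\le \KC(x)+c_0$, which is the inequality in the wrong direction. The identity description is admissible only when $\KC(x)\ge n-O(1)$. As stated, that proposition contradicts Theorem~\ref{thm:Bprime} (take $p(n)=n+C$).

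The paper's proof of B' uses Proposition~\ref{prop:ub-trivial} only for the true bound $\KC(y)\le n+O(1)$, so B' itself is unaffected. However, Proposition~\ref{prop:OCout-optimal} in Appendix~\ref{sec:appendix-C} relies on the false bound and fails on your strings $y_n$. Run $A=U$ on a description attaining $\gam(y_n)$: then $\rho=1$ but the output overhead exceeds $p(n)-n$. That proposition needs the same restriction to the incompressible regime.
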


\begin{proof}
Suppose for contradiction that $\gam(y) \leq p(|y|)$ for all $y$ and some
fixed polynomial $p$. We derive an algorithm that computes $\KC(y)$ to
within an additive constant $c_0$, contradicting the classical
incomputability of $\KC$ \cite{liVitanyi2008}.

\medskip
\noindent\textbf{The algorithm $B$.}  On input $y$ of length $n$:
\begin{enumerate}[label=(\arabic*)]
  \item Enumerate all self-delimiting programs $d$ with $|d| \leq n + c_0$.
        There are at most $\sum_{k=0}^{n+c_0} 2^k < 2^{n+c_0+1}$ such
        programs; all can be listed explicitly in finite time.
  \item For each $d$ in the enumeration, simulate $U(d)$ for at most
        $p(n)$ steps.
  \item Among all $d$ that halt within $p(n)$ steps and produce output $y$,
        record the minimum length $m^* = \min\{|d|\}$.
  \item Output $m^*$.
\end{enumerate}

\noindent\textbf{Correctness.}  We claim $m^* = \KC(y)$ (to within $c_0$).

\emph{Upper bound $m^* \leq \KC(y) + c_0$}: by definition of the enumeration,
every $d$ with $|d| \leq \KC(y) + c_0$ is included in step~(1).

\emph{Lower bound $m^* \geq \KC(y)$}: every program $d$ counted in step~(3)
satisfies $U(d) = y$, so $|d| \geq \KC(y)$ by definition of $\KC(y)$ as
the infimum of lengths of programs producing $y$.

It remains to show that step~(3) finds at least one near-shortest description
of $y$ that halts within $p(n)$ steps. By the hypothesis $\gam(y) \leq
p(n)$ and Definition~\ref{def:gamma}, there exists a near-shortest
description $d'$ of $y$ (with $|d'| \leq \KC(y) + c_0$) such that
$\TU(d') = \gam(y) \leq p(n)$. Since $\KC(y) \leq n + O(1)$
(Proposition~\ref{prop:ub-trivial}) and $c_0$ is chosen large enough to
absorb this $O(1)$ (hypothesis~\ref{hyp:prefix}), we have $|d'| \leq
\KC(y) + c_0 \leq n + 2c_0$. The program $d'$ is therefore included in
the enumeration of step~(1) (which covers all $|d| \leq n + c_0$, and
$2c_0$ is absorbed into the constant by choosing $c_0$ large enough),
halts within $p(n)$ steps, and produces $y$. Therefore $d'$ is recorded
in step~(3), and $m^* \leq |d'| \leq \KC(y) + c_0$.

\medskip
\noindent\textbf{Conclusion.}
Algorithm $B$ computes a value $m^*$ satisfying $\KC(y) \leq m^* \leq
\KC(y) + c_0$ for every $y$, in finite time (the simulation in step~(2)
runs for at most $p(n)$ steps per program, and there are finitely many
programs). This contradicts the classical theorem that $\KC$ is not
computable~\cite{liVitanyi2008}: $\KC$ is Turing-incomputable because
computing it would solve the halting problem
(\cite{liVitanyi2008}, Chapter~3). Therefore the hypothesis
$\gam(y) \leq p(|y|)$ for all $y$ is false, and there exists $y$ with
$\gam(y) > p(|y|)$.
\end{proof}

\begin{remark}[Relationship to computability]
\label{rem:gam-incomputable}
The proof shows that if $\gam$ were polynomially bounded everywhere, $\KC$
would be computable --- a contradiction. Therefore $\gam$ itself is not
computable: no Turing machine computes $\gam(y)$ for all $y$. This is
consistent with Theorem~\ref{thm:A} (invariance up to polynomials), which
does not require $\gam$ to be computable.
\end{remark}

\section{An Optimal Decompressor Incurs Only Constant Overhead Beyond Its Output}
\label{sec:appendix-C}

This appendix makes explicit a relationship between $\OCout$
(Definition~\ref{def:OCout}) and the decompressor efficiency ratio $\rho$
(Definition~\ref{def:rho}) that follows from combining those definitions
with Proposition~\ref{prop:ub-trivial}, but is not stated in
Section~\ref{sec:overhead}.

For an algorithm $A$ operating in the decompressor regime with output
$y = A(x)$, the definitions give directly:
\[
  \TA{A}(x) \;=\; \OCout(A,x) + |y|,
\]
and therefore:
\[
  \rho(A,x) \;=\; \frac{\TA{A}(x)}{\gam(y)}
  \;=\; \frac{\OCout(A,x) + |y|}{\gam(y)}.
\]

\begin{proposition}[Overhead of an optimal decompressor]
\label{prop:OCout-optimal}
If $A$ is an optimal decompressor in the sense of
Definition~\ref{def:rho} --- that is, $\rho(A,x) = 1$ --- then
$\OCout(A,x) = O(1)$.
\end{proposition}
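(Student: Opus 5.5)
The plan is to combine the identity displayed just before the statement with the trivial upper bound of Proposition~\ref{prop:ub-trivial}. Write $y = A(x)$, let $m = |y|$, and assume $\rho(A,x) = 1$. By Definition~\ref{def:rho} this means exactly $\TA{A}(x) = \gam(y)$. No simulation argument (Theorem~\ref{thm:A}) should be needed, because the hypothesis already equates the running time of $A$ with $\gam(y)$ as measured on $U$.

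First I substitute this into the identity $\TA{A}(x) = \OCout(A,x) + |y|$ derived above, which gives
\[
  \OCout(A,x) \;=\; \gam(y) - |y|.
\]
Next I apply Proposition~\ref{prop:ub-trivial} to the string $y \in \{0,1\}^m$. It gives $\gam(y) \leq m + O(1)$, where the $O(1)$ is the cost of reading the identity instruction plus halting, so it depends only on $U$. Hence $\OCout(A,x) \leq O(1)$. For the matching lower bound I use Proposition~\ref{prop:OCout-nonneg}, which gives $\OCout(A,x) \geq 0$. Together these yield $0 \leq \OCout(A,x) \leq C_U$ for a constant $C_U$ independent of $x$, $y$ and $A$, which is the claim.

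The only point needing care, and the step I expect to be the main (mild) obstacle, is confirming that the additive term in Proposition~\ref{prop:ub-trivial} is a genuine constant uniform in $y$. It must not hide a dependence on $|y|$ or on the description. I would check this against the proof of that proposition: the identity description costs $n + O(1)$ steps, with the $O(1)$ fixed by $U$ and absorbed in $c_0$ per hypothesis~\ref{hyp:prefix}. That check settles the uniformity, so the constant bound on $\OCout$ holds for all $x$ in the decompressor regime.
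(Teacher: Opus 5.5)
\textbf{The gap is in the step that invokes Proposition~\ref{prop:ub-trivial}.} Your argument matches the paper's own proof step for step: substitute $\rho(A,x)=1$ into $\TA{A}(x)=\OCout(A,x)+|y|$, bound $\gam(y)$ by Proposition~\ref{prop:ub-trivial}, and finish with Proposition~\ref{prop:OCout-nonneg}. It therefore inherits the same flaw, at the point where you apply Proposition~\ref{prop:ub-trivial} to an arbitrary output $y$.

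The delicate point you single out, uniformity of the additive constant, is not where the trouble lies; admissibility is. The identity program $d$ for $y$ enters the minimisation defining $\gam(y)$ only if $|d| \leq \KC(y)+c_0$. That requires a \emph{lower} bound $\KC(y) \geq |y| - O(1)$. The proof of Proposition~\ref{prop:ub-trivial} supplies only the upper bound $\KC(y) \leq |y| + O(1)$, which is a non sequitur for this purpose.

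For compressible $y$ the identity program is not a near-shortest description, and $\gam(y)$ can be enormous. The paper's own Theorem~\ref{thm:Bprime} asserts that for every polynomial $p$ some $y$ has $\gam(y) > p(|y|)$, which flatly contradicts Proposition~\ref{prop:ub-trivial}. A direct witness confirms which of the two fails. Let $y_n$ be the lexicographically first string of length $n$ not output by any program of length $\leq n/2$ within $2^n$ steps. Then $\KC(y_n) = O(\log n)$, so for large $n$ every near-shortest description of $y_n$ has length $\leq n/2$, and hence $\gam(y_n) > 2^n$.

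\textbf{The proposition is false as stated, so this step cannot be repaired.} Take $A = U$, and let $x = d'$ be a near-shortest description of such a $y$ with $\TU(d') = \gam(y)$. Then $x$ lies in the decompressor regime and $\rho(U,d') = 1$. Yet $\OCout(U,d') = \gam(y) - |y|$, which is superpolynomial in $|y|$.

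Restricting to incompressible outputs does not rescue the constant either, in the prefix-free setting of hypothesis~\ref{hyp:prefix-free}. For most $y$ of length $n$ one has $\KC(y) \geq n + \KC(n) - O(1)$. Proposition~\ref{prop:lb1} then gives $\gam(y) - |y| \geq \KC(|y|) - O(1)$, which is unbounded, and the same choice $A=U$, $x=d'$ turns this into unbounded overhead.

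What your computation does establish is the exact identity $\OCout(A,x) = \gam(y) - |y|$ whenever $\rho(A,x)=1$. A constant bound follows only if $\gam(y) \leq |y| + O(1)$ is assumed as an explicit hypothesis on the output. It cannot be derived from Proposition~\ref{prop:ub-trivial}.
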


\begin{proof}
From $\rho(A,x) = 1$:
\[
  \TA{A}(x) \;=\; \gam(y).
\]
By Proposition~\ref{prop:ub-trivial}, $\gam(y) \leq |y| + O(1)$.
Therefore:
\[
  \OCout(A,x) \;=\; \TA{A}(x) - |y| \;=\; \gam(y) - |y|
  \;\leq\; O(1).
\]
Since $\OCout(A,x) \geq 0$ by Proposition~\ref{prop:OCout-nonneg},
we conclude $\OCout(A,x) = O(1)$.
\end{proof}

\begin{remark}[Interpretation]
\label{rem:OCout-optimal}
Proposition~\ref{prop:OCout-optimal} gives a precise operational meaning
to optimality in the decompressor regime: a decompressor is optimal if and
only if it performs no computation beyond writing its output, up to $O(1)$ overhead. Every step
beyond $O(1)$ overhead is accounted for by the output itself. This is the
tightest possible sense in which a decompressor can be said to be
computationally efficient: it wastes no steps on internal computation
beyond what is strictly required to produce the output.
\end{remark}

\section*{Acknowledgments}
The author used an artificial intelligence based language assistant to
support text revision, translation, and bibliography formatting. All
scientific ideas and conclusions are the author's own.

\nocite{*}

\bibliographystyle{plain}
\bibliography{biblio}  
  
\end{document}